\newtheorem{theorem}{Theorem}
\newenvironment{proofsketch}[1][Proof Sketch]{%
  \proof[#1]%
}{\endproof}
\newcounter{examplecounter}
\newenvironment{Example}
{\refstepcounter{examplecounter}\par\noindent\textbf{Example \theexamplecounter.}\hspace{1pt}}
    {\hfill\qedsymbol\par}
\newtheorem{lemma}[theorem]{Lemma}
\newtheorem{corollary}[theorem]{Corollary}
\newtheorem{definition}{Definition}
\newcommand{\withoutTwithoutLacp}{CC-ACP}
\newcommand{\withoutTwithLacp}{CCFP-ACP}
\newcommand{\withoutTwithoutLaop}{CC-AOP}
\newcommand{\withoutTwithLaop}{CCFP-AOP}
\newcommand{\withoutTwithoutLdcp}{CC-DCP}
\newcommand{\withoutTwithLdcp}{CCFP-DCP}
\newcommand{\withoutTwithoutLdop}{CC-DOP}
\newcommand{\withoutTwithLdop}{CCFP-DOP}
\definecolor{mygray}{gray}{0.95}
\definecolor{mygray2}{gray}{0.85}
\newtcolorbox{boxB}{
    boxrule = 1.5pt,
    rounded corners,
    arc = 5pt   
}
\newtcolorbox{mybox}[2][]{%
attach boxed title to top center
               = {yshift=-8pt},
  colback      = black!5!white,
  colframe     = black!75!black,
  fonttitle    = \bfseries,
  colbacktitle=white!80!white,
  coltitle = gray!10!black,
  title        = #2,#1,
  enhanced,
}
\newtcolorbox[auto counter, number within=section]{myboxA}[2][]{enhanced, 
  attach boxed title to top center={yshift=-3.5mm,yshifttext=-1mm},
  colback=white,
  colframe=black,
  colbacktitle=white,
  fonttitle=\bfseries,
  coltitle=black,
  boxed title style={colframe=white, coltext=black},
  title=#2, #1}
\newtcolorbox{boxC}[3][left=2pt, right=2pt]
{
  colframe = #2!25,
  colback  = #2!5,
  coltitle = #2!20!black,  
  title    = {#3},
  #1,
}
\newcommand{\eden}[1]{\textcolor{blue}{(Eden says: #1)}}
\newcommand{\hodaya}[1]{\textcolor{red}{(Hodaya says: #1)}}
\newcommand{\eden}[1]{}
\newcommand{\hodaya}[1]{}
\newtheorem*{rtheorem}{\theremindertheorem}
\newcommand{\theremindertheorem}{}
\newenvironment{reminder}[1]
  {\renewcommand{\theremindertheorem}{Reminder of Theorem #1}\begin{rtheorem}}
  {\end{rtheorem}}
\newtheorem*{rlemma}{\thereminderlemma}
\newcommand{\thereminderlemma}{}
\newenvironment{reminderlemma}[1]
  {\renewcommand{\thereminderlemma}{Reminder of Lemma #1}\begin{rlemma}}
  {\end{rlemma}}
\newcommand{\universe}{P}
\newcommand{\running}{R}
\newcommand{\topvR}{T_{v}(\running)}
\newcommand{\topvRP}[2]{T_{#1}(#2)}
\title{Control for Coalitions in Parliamentary Elections}
\author{
Hodaya Barr
\and
Eden Hartman\and
Yonatan Aumann\And
Sarit Kraus\\
\affiliations
Bar-Ilan University, Israel\\
\emails
odayaben@gmail.com,
eden.r.hartman@gmail.com,
aumann@cs.biu.ac.il,
sarit@cs.biu.ac.il
}
\begin{document}
\crefname{connection}{Connection}{Connections}

\maketitle
\pagestyle{plain}

\begin{abstract}


    The traditional election control problem focuses on the use of control to promote a single candidate. In parliamentary elections, however, the focus shifts: voters care no less about the overall governing coalition than the individual parties' seat count.
    This paper introduces a new problem: controlling parliamentary elections, where the goal extends beyond promoting a single party to influencing the collective seat count of coalitions of parties.

    We focus on plurality rule and control through the addition or deletion of parties. 
    Our analysis reveals that, without restrictions on voters’ preferences, these control problems are W[1]-hard. 
    In some cases, the problems are immune to control, making such efforts ineffective.

    We then study the special case where preferences are symmetric single-peaked.
    We show that in the single-peaked setting, aggregation of voters into types allows for a compact representation of the problem.  Our findings show that for the single-peaked setting, some cases are solvable in polynomial time, while others are NP-hard for the compact representation - but admit a polynomial algorithm for the extensive representation.




    %
\end{abstract}

\section{Introduction}
The problem of control in voting has been extensively studied  (see, \cite{liu2009parameterized,betzler2009parameterized,faliszewski2016control}).
In this problem, an election organizer, or \emph{chair}, aims to promote \emph{one} candidate of choice, by influencing the election structure, e.g. adding or removing other candidates. In parliamentary systems, however, the fate of proposals is often not decided by a single party, but rather by blocs, or \emph{coalitions} of like-minded parties. The same is also true for many governing bodies and committees. In such cases, the determining factor is the aggregate voting power held by the blocs for and against the given proposal, not by a single party. More generally, while citizens - as well as a possible manipulative chair  - may have one \emph{most preferred} party, the success of an entire \emph{bloc of parties}, e.g. right, center, left, is also of prime importance. 
In such settings, the traditional model of \emph{control} is inadequate, as it fails to capture the full range of chair goals.







This paper aims to fill this gap.  
We introduce and study the problem of \emph{control for coalitions of parties}. At the same time, we acknowledge that the chair may also be interested in promoting a specific party.  So, we define a combined control problem, where the chair has a double objective - both promoting an entire coalition and promoting a specific preferred party therein. 

\begin{table*}[h]
\centering
\resizebox{\textwidth}{!}{
\begin{tabular}{|l|>{\columncolor{lightgray}}l|>{\columncolor{mygray2}}l|>{\columncolor{mygray}}l|>{\columncolor{lightgray}}l|>{\columncolor{mygray2}}l|>{\columncolor{mygray}}l|}
\hline
  & \multicolumn{3}{c|}{Control for Coalition (CC)} & \multicolumn{3}{c|}{Control for Coalition and Favored Party (CCFP)} \\ 
\hline
& General 
 & SSP
 & SSP-Con. 
 & General 
 & SSP
 & SSP-Con. 
 \\ 
\hline
 Adding Coalition Parties (ACP) &  W[2]-hard(T\ref{thm:acp-general-C})  & P(T\ref{thm:acp-ssp-C})     & P(T\ref{thm:acp-ssp-C})$^*$  & W[2]-hard(T\ref{thm:acp-general-C})$^*$ &  NP-hard(T\ref{thm:acp-ssp-CFP-non-con}), PP(T\ref{thm:CCFP-ACP-pseudo-polynomial}) & P(T\ref{thm:acp-ssp-CFP-con}) \\
\hline
 Adding Opposition Parties (AOP) & Immune(T\ref{thm:-ccaop-immune})  & Immune(T\ref{thm:-ccaop-immune})  &  Immune(T\ref{thm:-ccaop-immune}) & W[2]-hard(T\ref{thm:aop-general-CFP}) &  NP-hard(T\ref{thm:aop-ssp-CFP-non-con}), PP(T\ref{thm:CCFP-AOP-pseudo-polynomial}) &   P(T\ref{thm:aop-ssp-CFP-con})\\
\hline
 Deleting Coalition Parties (DCP) & Immune(T\ref{thm:-ccdcp-immune}) &  Immune(T\ref{thm:-ccdcp-immune}) & Immune(T\ref{thm:-ccdcp-immune}) &  W[1]-hard(T\ref{thm:dcp-general-CFP}) & NP-hard(T\ref{thm:dcp-ssp-CFP-non-con}), PP(T\ref{thm:CCFP-dcp-pseudo-polynomial}) & P(T\ref{thm:dcp-ssp-CFP-con})\\
\hline
 Deleting Opposition Parties (DOP) & W[1]-hard(T\ref{thm:dop-general-C}) &  P(T\ref{thm:dop-ssp-C}) &P(T\ref{thm:dop-ssp-C})$^*$ & W[1]-hard(T\ref{thm:dop-general-C})$^*$  & NP-hard(T\ref{thm:dop-ssp-CFP-non-con}), PP(T\ref{thm:CCFP-DOP-pseudo-polynomial}) & P(T\ref{thm:dop-ssp-CFP-con})\\
\hline
\end{tabular}
}

\caption{Summary of Results. An asterisk ($^*$) indicates that a theorem follows directly from the referenced theorem, and "PP" is shorthand for a pseudo-polynomial algorithm for the compact representation, which is equivalent to a polynomial algorithm for the extensive representation.}

\label{tbl:sum}

\end{table*}

\paragraph{Contributions.}
Our contribution is twofold. 
First, we introduce and define the problem of control for promoting a coalition, with and without an additional goal of promoting a favored party. Second, we provide a complexity analysis for different settings of the problem, providing algorithmic solutions for some settings, and hardness results for others. We focus on plurality rule and on control through addition or deletion of parties.
The results are summarized in Table \ref{tbl:sum}.

We prove that when there are no restrictions on voters' preferences, in most cases, the problem is W[1]-hard,  meaning that it cannot be solved even by using a Fixed-Parameter Tractability (FPT) algorithm (where the parameter is the number of parties to add or delete). 
In particular, it means that the problems are NP-hard.
In the remaining cases, the problem is immune -- achieving the desired outcome using the specific type of control is not possible.

We then shift our attention to symmetric single-peaked preferences, which have a natural interpretation in the context of political elections. 
We prove that in this case, the number of distinct possible preference orders significantly decreases, enabling the use of a compact input representation that depends only on the number of parties.
%
%
We prove that in the case of symmetric single-peaked preferences, some cases become polynomial-time solvable, while others are NP-hard under the compact representation, but admit a polynomial-time algorithm for the extensive representation.


\paragraph{Organization.}
Section \ref{sec:model} presents the model and required definitions. 
Section \ref{sec:lammas} presents required technical lemmas. Section \ref{sec:adding} focuses on control by adding parties, whereas Section \ref{sec:deleting} focuses on control by deleting parties. 
Section \ref{sec:conclusion} concludes with some future work directions.
Most proofs are deferred to the appendix.

\subsection{Related Work}

This paper extends the problem of constructive control by adding and deleting candidates, originally introduced by \cite{bartholdi1992hard}.
In the original problem, the objective is to promote a single candidate. In contrast, this paper focuses on promoting a coalition -- a set of parties -- and, potentially, a favored party within that set.

\paragraph{Time Complexity.} A common goal is to study the complexity of different types of control and election systems - e.g., \cite{put2016complexity,procaccia2007multi,sina2015adapting,menton2013control,maushagen2020last} (see~\cite{faliszewski2016control} for a comprehensive survey).
Control by adding and deleting candidates under the plurality rule is known to be w[2]-hard problems~\cite{betzler2009parameterized,liu2009parameterized,bartholdi1992hard}.
However, these results do not directly apply to our setting as control in parliamentary elections differs fundamentally. 
While single-winner elections aim to secure the highest number of votes, parliamentary elections focus on maximizing the percentage of votes, regardless of whether this leads to the highest percentage overall.


\paragraph{Symmetric-Single-Peaked Preferences.} The study of single-peaked preferences in elections is quite common (e.g., \cite{elkind2017structured,faliszewski2011multimode,brandt2016handbook}).
The key assumption is that there exists a common ordering of the parties, such that for any voter, their preference between two parties— both located to the left or right of their peak (i.e., most preferred party) — is based on which party is closer to the peak.
In this paper, we focus on \emph{symmetric} single-peaked preferences, as modeled by \cite{masso2011strategy}.
Unlike single-peaked preferences, this model allows voters to compare parties on opposite sides of the peak according to the common ordering as well; it has been explored in various contexts~\cite{bergantinos2015stable,Aziz2022Strategyproof,morimoto2013maximal}.

\paragraph{Strategic Voting and Coalition Preferences.}
Cox~\cite{cox2018portfolio} investigates how voters engage in portfolio-maximizing strategic voting, revealing that their choices are often influenced not just by their preferences for a single party but also by a desire to shape coalition outcomes and enhance governmental influence.
Several studies (e.g., \cite{meffert2010strategic,bowler2010strategic,gschwend2016drives,MCCUEN2010316}) have delved into strategic and tactical coalition voting, showing how voters might cast their votes based on coalition dynamics, even if it means not selecting their top-preferred party.

\section{Model and Definitions}\label{sec:model}
The setting postulates a set of parties,\footnote{Commonly referred to as \emph{candidates} in the context of control problems for single-winner elections.} $\universe = \{p_1, \ldots p_m\}$, that could potentially run in the election, and a set of $n$ voters $V= \{v_1,\ldots,v_n\}$. Each voter, $v$, has a strict preference order, $\succ_v$, over the parties in $\universe$.
An election system is a tuple $(\running, V)$, where $\running \subseteq \universe$ represents the set of parties running in the election. 
In this paper, we assume that the set of voters remains fixed - all voters participating according to their given preference orders. 
However, the chair may influence the set of running parties. 
Thus, for brevity, we represent different election systems by specifying only $\running$.

We denote the top (most preferred) party of voter $v$ in $R$ by $\topvR$ -- that is, $\topvR \in R$ and $\topvR \succ_v p$ for any $p \in R$.

\paragraph{Allocation Rule.}
The goal of parliamentary election is to allocate a fixed number of seats among the running parties. 
An \emph{allocation rule} takes as input the set of running parties $\running$ and the preference orders of the $n$~voters, and determines what fraction of the total number of seats to allocate to each party in $\running$. 
Following~\cite{put2016complexity}, we assume that any fraction is possible, even if the resulting number of seats is not an integer.

In this paper, we focus on the Plurality allocation rule.

\paragraph{Plurality.} 
Given a set of running parties, $\running \subseteq \universe$, each voter contributes its vote for their top party in $R$, $\topvR$.
For each running party $p\in R$, we denote the number of votes $p$ receives by $N(\running,p)$.
The fraction of votes received by party $p$ is denoted by $\pi(\running, p) := N(\running, p)/n$.


\paragraph{Objectives.} 
The chair is an external agent with two goals.

\begin{definition}[Coalition Objective]
    Given a \emph{target-coalition} $C \subset \universe$, and a \emph{coalition target-fraction} $\varphi \in (0,1]$. The chair wishes for the set of running parties $\running$ to satisfy:
    \begin{align}\label{eq:target-coalition-obj}
        \tag{OBJ-C}
        \sum_{p \in C \cap \running} \pi(\running, p) \geq \varphi
    \end{align}
    The objective is to ensure that the fraction of total votes received by the target-coalition is at least $\varphi$.
\end{definition}

\begin{definition}[Favored-Party Objective]
    Given a coalition $C \subseteq \universe$, a \emph{favored-party}, $p_F \in C$, and a \emph{target-favored-party-ratio} $\rho \in [0,1]$. The chair wishes for the set of running parties $\running$ to include $F$ and to satisfy the following:
    \begin{align}\label{eq:target-leader-obj}
        \tag{OBJ-FP}
        \pi(\running, p_F) \geq \rho \cdot \sum_{p \in C \cap P} \pi(\running, p)
    \end{align}
    The objective is to ensure that the fraction of total votes received by the favored party, relative to the votes of the coalition, is at least $\rho$.
\end{definition}

In our runtime analysis, we separately consider two problems: 
\begin{enumerate}
    \item Control for coalition (CC): when the chair cares only about the coalition objective.

    \item Control for coalition and favored-party (CCFP): when the chair cares about both objectives: coalition objective and favored-party Objective.
\end{enumerate}
To highlight the differences, see the example in Appendix~\ref{sec:examples}.




\paragraph{Control by Deleting Parties.} 
Given an integer $1 \leq k \leq m$, the chair is allowed to delete at-most $k$ parties from $\universe$. 

\begin{boxC}{gray}{}
\textbf{Question:} Does there exist a set $K \subseteq \universe$ with $|K| \leq k$ such that the chair's objectives are met when the set of running parties is $\running := \universe \setminus K$?
\end{boxC}

\paragraph{Control by Adding Parties.} 
Here, we have a set $S \subseteq \universe$ of $\ell$ \emph{spoiler} parties that, by default, do not run but can be added by the chair.
The parties $\universe \setminus S$ are called \emph{permanent} -- they run in the election, and the chair cannot change it. 

Given an integer $1 \leq k \leq \ell$, the chair is allowed to add at-most $k$ spoiler parties to run alongside the permanent ones.  

\begin{boxC}{gray}{}
\textbf{Question:} Does there exist a set $K \subseteq S$ with $|K| \leq k$ such that the chair's objectives are met when the set of running parties is $\running := (\universe \setminus S) \cup K$?
\end{boxC}

When $\rho > 0$, we assume that $p_F$ is permanent.

\paragraph{Control Variants.}
We treat coalition and opposition parties separately: the control can be applied either over coalition parties $K \subseteq C$, or over opposition parties $K \subseteq (\universe \setminus C)$.

\subsection{Symmetric Single-Peaked Preferences}
We compare general preferences -- any order over the parties is allowed, to the case of \emph{symmetric single-peaked} (SSP) preferences, defined as follows. 

We assume that the parties in $\universe$ can be positioned on a $[0, 1]$ interval. For simplicity, we assume that the name of each party is its position on the interval.
Based on that, a preference order $\succ_v$ is \emph{symmetric single-peaked} if there exists a point $h(v) \in [0,1]$, called the \emph{peak}, such that the parties are ordered in descending distance from this point.
That is, for any two parties $p_1,p_2 \in \universe$, 
$p_1 \succ_v p_2$, if-and-only-if 
$|h(v) - p_1| < |h(v) - p_2|$.
For brevity, we assume that the name of each voter is its peak.

\paragraph{Problem Size and Complexity Measure.} We prove that there are only $(m^2+1)$ distinct SSP preference orders (see \Cref{app:types} for more details), which we refer to as \emph{types}
\cite{shrot2010agent}.
This enables us to describe the entire set of voters using a \emph{compact} representation – indicating the number of voters of each specific type, which depends only on the number of parties.
This distinction in input size implies that the runtime complexity is not directly comparable to the complexity in the general case, as it is measured relative to a fundamentally different input representation - referred to as \emph{extensive}.

We note that the time complexity of calculating $N(\running, p)$ differs between representations. 
In the compact representation, these calculations depend only on the number of parties, whereas in the extensive representation, they also depend on the total number of voters.



\paragraph{The Target-Coalition.}
Recall that a coalition is a subset of parties $C \subseteq P$. 
For SSP preferences, any coalition can be described as a union of \emph{coalition-intervals}, which are sub-intervals of $[0,1]$ containing only parties in $C$.
We assume that the coalition is given as a set of $q$ disjoint coalition-intervals, $\mathcal{I}^C := \{I^C_1, \ldots, I^C_q\}$, where $q$ is minimal.
If $q > 1$, this implies that there must be at least one opposition party that lies between any two coalition-intervals in $\mathcal{I}^C$.  
When $q=1$, we say that the coalition is \emph{contiguous}.  

We refer to the sub-interval between the coalition-intervals as opposition-interval, denoted by $\mathcal{I}^O := \{I^O_0, \ldots, I^O_{q+1}\}$.


\section{Technical Lemmas}\label{sec:lammas}
We start by proving that adding a party to the set of running parties can only reduce the fraction of votes received by a running party.

\begin{lemma}\label{lemma:adding-only-decreases}
    Let $p_1, p_2 \in P$ and $R \subseteq (P\setminus\{p_2\})$ such that $p_1 \in R$. Then, $\pi(R, p_1) \geq \pi(R \cup \{p_2\}, p_1)$.
\end{lemma}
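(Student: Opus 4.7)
The plan is to compare, for each voter $v$, whether $v$ casts a vote for $p_1$ under $R$ versus under $R \cup \{p_2\}$, and show that the set of $p_1$-voters can only shrink when $p_2$ is added. Since $\pi(R,p_1) = N(R,p_1)/n$ with the denominator fixed, it suffices to prove $N(R\cup\{p_2\}, p_1) \leq N(R, p_1)$.

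The key step is the set containment $\{v : T_v(R\cup\{p_2\}) = p_1\} \subseteq \{v : T_v(R) = p_1\}$. I would prove this directly: suppose $v$ is a voter with $T_v(R\cup\{p_2\}) = p_1$. Then $p_1 \succ_v p$ for every $p \in (R\cup\{p_2\})\setminus\{p_1\}$, and in particular for every $p \in R\setminus\{p_1\}$, since $R \subseteq R\cup\{p_2\}$. Combined with the hypothesis $p_1 \in R$, this gives $T_v(R) = p_1$, establishing the containment.

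From the containment we immediately obtain $N(R\cup\{p_2\}, p_1) \leq N(R, p_1)$, and dividing both sides by the (fixed) number of voters $n$ yields $\pi(R\cup\{p_2\},p_1) \leq \pi(R,p_1)$, as required.

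There is essentially no obstacle here; the only thing to be careful about is the edge case where $p_2$ might equal $p_1$ or where $p_2$ is already in $R$. The hypothesis $R \subseteq P\setminus\{p_2\}$ rules out $p_2 \in R$, and since $p_1 \in R$ this also ensures $p_1 \neq p_2$, so the argument above applies verbatim.
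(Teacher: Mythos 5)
Your proposal is correct and takes essentially the same approach as the paper: both arguments reduce the claim to the observation that the set of voters whose top choice is $p_1$ can only shrink when $p_2$ is added (the paper phrases this as ``each $p_1$-voter in $R$ either keeps $p_1$ or switches to $p_2$,'' while you state the equivalent set containment in the other direction). Your handling of the edge cases $p_2 \in R$ and $p_2 = p_1$ is a minor additional care the paper leaves implicit.
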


%



Applying \Cref{lemma:adding-only-decreases} iteratively allows us to conclude:
\begin{corollary}\label{cor:adding-parties-decreases}
    Let $R \subseteq P$, $p \in R$ and $R^+$ such that $R \subset R^+$. Then $\pi(R^+, p) \leq \pi(R, p) $.
\end{corollary}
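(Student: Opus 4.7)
The plan is to derive the corollary by a direct induction on the cardinality of $R^+ \setminus R$, using \Cref{lemma:adding-only-decreases} as the inductive step. The underlying idea is that passing from $R$ to $R^+$ can be simulated by adding one extra party at a time, and each such single-party addition is already covered by the lemma.

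Concretely, I would enumerate the ``new'' parties as $R^+ \setminus R = \{q_1, q_2, \ldots, q_t\}$, where $t = |R^+ \setminus R| \geq 1$ (if $t=0$ then $R^+ = R$ and the claim is trivial). I then define the nested chain
\[
R = R_0 \subset R_1 \subset \cdots \subset R_t = R^+, \qquad R_i := R_{i-1} \cup \{q_i\}.
\]
By construction, each $R_i$ is obtained from $R_{i-1}$ by adding exactly one party not already present, and $p \in R \subseteq R_{i-1}$ for every $i$, so the hypotheses of \Cref{lemma:adding-only-decreases} are met at every step. Applying the lemma $t$ times yields
\[
\pi(R, p) = \pi(R_0, p) \geq \pi(R_1, p) \geq \cdots \geq \pi(R_t, p) = \pi(R^+, p),
\]
which is exactly the desired inequality.

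There is no real obstacle here; the only thing to be careful about is verifying that at each inductive step the hypotheses of the lemma are literally satisfied — namely that $q_i \notin R_{i-1}$ (guaranteed because the $q_j$'s are distinct and none of them lies in $R$) and that $p \in R_{i-1}$ (guaranteed because $p \in R = R_0 \subseteq R_{i-1}$). Once these bookkeeping checks are in place, chaining the inequalities gives the corollary immediately.
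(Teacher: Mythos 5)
Your proof is correct and matches the paper's approach exactly: the paper simply states that the corollary follows by applying \Cref{lemma:adding-only-decreases} iteratively, and your chain $R = R_0 \subset R_1 \subset \cdots \subset R_t = R^+$ is precisely that iteration made explicit, with the right bookkeeping on the hypotheses at each step.
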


This claim can also be interpreted as deleting parties from $R^+$; which implies that deleting parties can only increase the votes of a running party.

\subsection{SSP Preferences}
The following lemma proves that a voter's preference between two parties can be determined by their position relative to the average of the parties.

\begin{lemma}\label{lemma:divider-meaning}
    Let $v \in V$ and $p_1 < p_2 \in P$. Then, $p_1 \succ_v p_2$ if-and-only-if $ v < \frac{1}{2}(p_1+p_2)$.
\end{lemma}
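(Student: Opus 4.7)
The statement is essentially a geometric fact about which of two points on the real line is closer to a given third point, phrased in the language of the paper's SSP model. So the plan is to unfold the SSP definition and then do a short algebraic manipulation.

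First, I would recall from the paper's convention that each voter $v$ is identified with its peak $h(v)\in[0,1]$, and that the SSP definition gives $p_1 \succ_v p_2 \iff |v-p_1|<|v-p_2|$. So the lemma reduces to showing that, when $p_1<p_2$, the inequality $|v-p_1|<|v-p_2|$ is equivalent to $v<\tfrac12(p_1+p_2)$.

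Next, I would square both sides of $|v-p_1|<|v-p_2|$ (both quantities are nonnegative, so the inequality is preserved) and expand to obtain $v^2-2vp_1+p_1^2 < v^2-2vp_2+p_2^2$. Cancelling $v^2$ and rearranging gives $2v(p_2-p_1) < p_2^2-p_1^2 = (p_2-p_1)(p_2+p_1)$. Since $p_1<p_2$ we have $p_2-p_1>0$, so dividing by this positive quantity yields $2v < p_1+p_2$, i.e.\ $v<\tfrac12(p_1+p_2)$. Every step is reversible, which gives the ``if and only if''.

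There is no real obstacle here: the only thing to be mildly careful about is justifying that squaring preserves the strict inequality (both sides are distances, hence nonnegative) and that dividing by $p_2-p_1$ preserves it (the quantity is strictly positive by the hypothesis $p_1<p_2$). If desired, a one-line geometric remark could replace the computation: the midpoint $\tfrac12(p_1+p_2)$ is the unique point equidistant from $p_1$ and $p_2$, and on the line points strictly to the left of this midpoint are strictly closer to $p_1$ than to $p_2$.
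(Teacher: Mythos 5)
Your proof is correct, but it takes a different route from the paper's. The paper proves this lemma (stated in the appendix as the ``divider'' lemma) by a case analysis on the position of $v$ relative to $p_1$ and $p_2$: it splits into the cases $v<p_1$, $p_1<v<p_2$, and $p_2<v$, resolves the absolute values explicitly in each case, and derives the inequality or a contradiction. As written, the paper's argument only completes the forward direction --- the converse is left at ``assume that $v<\frac{1}{2}(p_1+p_2)$'' without further argument. Your squaring argument replaces the case analysis with a single chain of reversible algebraic equivalences, $|v-p_1|<|v-p_2| \iff (v-p_1)^2<(v-p_2)^2 \iff 2v(p_2-p_1)<(p_2-p_1)(p_2+p_1) \iff v<\tfrac12(p_1+p_2)$, which handles both directions at once and correctly flags the two points that need justification (squaring preserves strict inequality for nonnegative quantities; dividing by $p_2-p_1>0$ is legitimate). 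This is arguably cleaner and more complete than the paper's version; the paper's case analysis buys a little more geometric transparency about where $v$ sits, but at the cost of more bookkeeping and, here, an unfinished converse.
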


%

\Cref{lemma:add_change_to_added_or_not_change} says that adding a party can only shift 
 a voter's top preference to the newly added party or leave the top preference unchanged. 
\Cref{lemma:move-to-nerest_delete} says that deleting a party can only shift a voter's top preference to to one of the parties adjacent to the removed party.
 
\begin{lemma}\label{lemma:add_change_to_added_or_not_change}
    Let $v \in V$, $p_1, p_2 \in P$ and $R \subseteq (P\setminus\{p_2\})$ such that $p_1 \in R$. If $T_v(R) = p_1$, then $T_v(R\cup \{p_2\}) \in \{p_1, p_2\}$.
\end{lemma}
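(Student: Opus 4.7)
The plan is to do a two-case analysis based on the relative order of $p_1$ and $p_2$ under the strict preference $\succ_v$. The key observation is that the hypothesis $T_v(R) = p_1$ is equivalent to saying that $p_1 \succ_v p$ for every $p \in R \setminus \{p_1\}$, so $p_1$ dominates all old parties; the only candidate that could possibly overtake $p_1$ in $R \cup \{p_2\}$ is the newcomer $p_2$ itself.

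First I would split on whether $p_1 \succ_v p_2$ or $p_2 \succ_v p_1$ (these are the only possibilities since $\succ_v$ is strict and total). In the first case, I would verify that $p_1$ remains the top of $R \cup \{p_2\}$: for any $p \in R \cup \{p_2\}$ with $p \neq p_1$, either $p \in R$ and then $p_1 \succ_v p$ by the hypothesis $T_v(R) = p_1$, or $p = p_2$ and then $p_1 \succ_v p_2$ by the case assumption; hence $T_v(R \cup \{p_2\}) = p_1$. In the second case, I would show that $p_2$ becomes the new top: for any $p \in R \cup \{p_2\}$ with $p \neq p_2$, either $p = p_1$ and then $p_2 \succ_v p_1$ by the case assumption, or $p \in R \setminus \{p_1\}$ and then by transitivity $p_2 \succ_v p_1 \succ_v p$; hence $T_v(R \cup \{p_2\}) = p_2$.

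Conclusion: in either case $T_v(R \cup \{p_2\}) \in \{p_1, p_2\}$, which is precisely the claim. There is no real obstacle here — the argument is purely a consequence of $\succ_v$ being a strict total order together with the characterisation of the top element as the one dominating all others. I do not need to invoke any of the earlier lemmas (e.g., \Cref{lemma:adding-only-decreases} or the SSP-specific \Cref{lemma:divider-meaning}), since the statement is about an arbitrary preference profile and not restricted to the SSP setting.
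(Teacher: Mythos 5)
Your proof is correct. The case split on whether $p_1 \succ_v p_2$ or $p_2 \succ_v p_1$, combined with the fact that $T_v(R)=p_1$ means $p_1$ dominates everything else in $R$, cleanly identifies the new top as $\max_{\succ_v}\{p_1,p_2\}$, and the transitivity step in the second case is exactly what is needed. The paper proves the same lemma differently: it routes the argument through the SSP machinery of \Cref{lemma:divider-meaning}, observing that for each $p'\in R\setminus\{p_1\}$ the divider condition certifying $p_1 \succ_v p'$ (namely $v$'s position relative to $\tfrac{1}{2}(p'+p_1)$) is unaffected by introducing $p_2$, so no previously dominated party can become the top, leaving only $p_1$ or $p_2$ as candidates. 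The underlying insight is the same — adding a party does not alter the relative ranking of the existing parties — but your version makes explicit that the lemma is a fact about arbitrary strict total orders and needs no single-peaked structure at all, which is both more general and arguably cleaner than the paper's appeal to the SSP-specific divider characterization (the lemma is stated in the paper's SSP subsection, but as you note, nothing in the statement requires it). The paper's route buys nothing here beyond consistency with the surrounding SSP lemmas; yours buys generality and avoids an unnecessary dependency.
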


\begin{lemma}\label{lemma:move-to-nerest_delete}
     Let $p_1<\ldots< p_m$ be the common     ordering of the parties on $[0,1]$.
     Let $v \in V$ and $T_v(P) = p_i$. ~~Then, $T_v(P\setminus \{p_i\}) \in \{p_{i-1}, p_{i+1}\}$.
\end{lemma}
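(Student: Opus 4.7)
The plan is to proceed by contradiction. Assume $T_v(P\setminus\{p_i\}) = p_j$ for some $j \notin \{i-1, i+1\}$; by the symmetry of left and right under the SSP definition, it suffices to handle the case $j < i-1$ (the case $j > i+1$ is treated identically with the inequalities reversed). The key observation will be that under the common ordering $p_1 < \cdots < p_m$, the party $p_{i-1}$ lies strictly between $p_j$ and $p_i$, and I will use \Cref{lemma:divider-meaning} twice: once to locate the peak $v$ relative to $p_{i-1}$, and once to derive a preference between $p_{i-1}$ and $p_j$ that contradicts the assumption.

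First I would exploit $T_v(P) = p_i$, which gives $p_i \succ_v p_{i-1}$. Since $p_{i-1} < p_i$, applying \Cref{lemma:divider-meaning} to this preference yields
\[
v > \tfrac{1}{2}(p_{i-1} + p_i) > p_{i-1}.
\]
Next, using $p_j < p_{i-1}$ and the just-established $v > p_{i-1}$, I would apply \Cref{lemma:divider-meaning} to the pair $p_j < p_{i-1}$: since $v$ lies strictly above $p_{i-1}$, we have $v > \tfrac{1}{2}(p_j + p_{i-1})$, and hence $p_{i-1} \succ_v p_j$. Because $p_{i-1} \in P \setminus \{p_i\}$, this contradicts $T_v(P\setminus\{p_i\}) = p_j$, completing the case $j < i-1$.

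The only remaining step is to note the boundary cases $i = 1$ and $i = m$, where one of $p_{i-1}, p_{i+1}$ does not exist; the argument above still works, since we only ever need the existing neighbor on the side where $p_j$ lies, and the contradiction still arises. The main (minor) obstacle is simply keeping the direction of the inequality in \Cref{lemma:divider-meaning} correct when applying it on the two different sides of $v$; everything else follows from a single monotonicity observation about midpoints.
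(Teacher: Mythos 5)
Your proof is correct and follows essentially the same route as the paper's: both locate the peak via \Cref{lemma:divider-meaning} using $T_v(P)=p_i$ (giving $v > \tfrac{1}{2}(p_{i-1}+p_i)$, resp.\ $v < \tfrac{1}{2}(p_i+p_{i+1})$) and then use monotonicity of midpoints to conclude $p_{i-1} \succ_v p_j$ for all $j < i-1$ (symmetrically for $j > i+1$). The only differences are cosmetic — you phrase it as a contradiction and pass through the intermediate bound $v > p_{i-1}$, and you explicitly note the boundary cases $i\in\{1,m\}$, which the paper leaves implicit.
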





\subsubsection{Independent Analysis of Coalition-Intervals}
\Cref{lem:sperate-intervals} proves that modifying the set of parties in one coalition-interval does not influence the set of parties in other coalition-intervals.

\begin{lemma}\label{lem:sperate-intervals}
Let $I^C_1\neq I^C_2 \in \mathcal{I}^C$ be two coalition-intervals, $c_1 \in I^C_1$, $c_2 \in I^C_2$ and $\running \subseteq P\setminus \{c_2\}$ such that $(P\setminus C) \subseteq R$. ~Then: ~~$\pi(\running, c_1) = \pi(\running \cup \{c_2\}, c_1)$.
\end{lemma}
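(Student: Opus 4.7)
The plan is to reduce this to an application of \Cref{lemma:add_change_to_added_or_not_change}, \Cref{cor:adding-parties-decreases}, and \Cref{lemma:divider-meaning}, using the fact that two distinct coalition-intervals must be separated by at least one opposition party, which by assumption is already in $R$.

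First, I would set up the geometry. Without loss of generality, assume $I^C_1$ lies entirely to the left of $I^C_2$ on $[0,1]$, so $c_1 < c_2$. Since $\mathcal{I}^C$ is described with the minimal number of coalition-intervals, the opposition-interval $I^O$ strictly between $I^C_1$ and $I^C_2$ must contain at least one opposition party $o \in P \setminus C$, with $c_1 < o < c_2$. By assumption $(P\setminus C) \subseteq R$, so $o \in R$.

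Next, I would establish the inequality $\pi(R \cup \{c_2\}, c_1) \leq \pi(R, c_1)$ directly from \Cref{cor:adding-parties-decreases}. For the reverse inequality, it suffices to show that every voter whose top party in $R$ is $c_1$ still has $c_1$ as top in $R \cup \{c_2\}$. Fix any such voter $v$ with $T_v(R) = c_1$. By \Cref{lemma:add_change_to_added_or_not_change}, $T_v(R \cup \{c_2\}) \in \{c_1, c_2\}$, so I only need to rule out $T_v(R \cup \{c_2\}) = c_2$, i.e., show that $c_1 \succ_v c_2$.

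The key observation is that $o$ acts as a barrier. Since $T_v(R) = c_1$ and $o \in R$, we have $c_1 \succ_v o$, so \Cref{lemma:divider-meaning} gives $v < \tfrac{1}{2}(c_1 + o)$. Because $o < c_2$, this yields $v < \tfrac{1}{2}(c_1 + o) < \tfrac{1}{2}(c_1 + c_2)$, and applying \Cref{lemma:divider-meaning} in the other direction gives $c_1 \succ_v c_2$, as required. The only step that requires any care is making sure the opposition party separating the two coalition-intervals is guaranteed to be in $R$, which is immediate from $(P\setminus C) \subseteq R$ combined with the minimality of $q$; everything else is a direct chase through the technical lemmas. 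Combining both inequalities yields $\pi(R, c_1) = \pi(R \cup \{c_2\}, c_1)$.
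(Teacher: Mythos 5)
Your proof is correct and takes essentially the same approach as the paper's: both hinge on the fact that a separating opposition party $o\in R$ between the two coalition-intervals acts as a barrier, so $c_1 \succ_v o$ together with \Cref{lemma:divider-meaning} forces $c_1 \succ_v c_2$ for any voter with $T_v(R)=c_1$. The only cosmetic differences are that you argue directly (with a WLOG on the ordering of the intervals) and cite \Cref{lemma:add_change_to_added_or_not_change} explicitly, whereas the paper argues by contradiction and spells out both orderings of $c_1$ and $c_2$.
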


    



Applying the lemma iteratively implies:

\begin{corollary}\label{cor:interval-adding}
    Let $I^C\in \mathcal{I}^C$, $c \in I^C$, and $\running \subseteq P$ such that $c \in R$ and $(P\setminus C) \subseteq R$. ~Then, for any $R^+$ such that $R \subset R^+$ and $R \cap  I^C = R^+ \cap  I^C$: ~~$\pi(\running, c) = \pi(R^+, c)$.
\end{corollary}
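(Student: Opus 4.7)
The plan is to derive the corollary by an iterated application of \Cref{lem:sperate-intervals}. First I would examine the set of newly added parties $R^+\setminus R$ and argue that each of them lies in a coalition-interval different from $I^C$. Indeed, since $(P\setminus C)\subseteq R$, every party outside $R$ must be a coalition party, so $R^+\setminus R\subseteq C$. Combined with the assumption $R\cap I^C = R^+\cap I^C$, no element of $R^+\setminus R$ belongs to $I^C$ itself, so each such party lies in some other coalition-interval $I^C_{j}\in\mathcal{I}^C\setminus\{I^C\}$.

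Next I would enumerate $R^+\setminus R = \{c_1,\ldots,c_t\}$ in an arbitrary order and build the chain
\begin{equation*}
R = R_0 \subset R_1 \subset \cdots \subset R_t = R^+, \qquad R_i := R_{i-1}\cup\{c_i\}.
\end{equation*}
For each $i$, I would verify the hypotheses of \Cref{lem:sperate-intervals} with the roles $c_1\leftarrow c\in I^C$ and $c_2\leftarrow c_i\in I^C_{j_i}\ne I^C$: namely, $R_{i-1}\subseteq P\setminus\{c_i\}$ (since $c_i\notin R_{i-1}$), and $(P\setminus C)\subseteq R\subseteq R_{i-1}$. The lemma then gives $\pi(R_{i-1},c)=\pi(R_i,c)$.

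Finally I would chain these equalities to conclude $\pi(R,c)=\pi(R_0,c)=\pi(R_1,c)=\cdots=\pi(R_t,c)=\pi(R^+,c)$, as claimed.

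I do not anticipate a real obstacle here; the only thing to be careful about is ensuring that the hypothesis $(P\setminus C)\subseteq R_{i-1}$ is preserved along the chain, which is immediate because the sets are growing, and that each added party is genuinely in a different coalition-interval so that \Cref{lem:sperate-intervals} applies at each step.
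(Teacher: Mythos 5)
Your proof is correct and matches the paper's approach exactly: the paper states that the corollary follows by "applying the lemma iteratively," and your chain $R = R_0 \subset \cdots \subset R_t = R^+$ with the verification that each added party is a coalition party outside $I^C$ (hence in a different coalition-interval) is precisely that iteration of \Cref{lem:sperate-intervals}. No gaps.
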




\subsubsection{Independent Analysis of opposition-Intervals}


The following lemma proves that each opposition-interval can influence a unique set of voters.
\begin{lemma}\label{lem:sperate-opposition-intervals}
    Let $I^O_1\neq I^O_2$ be two opposition-intervals, ~$o_1 \in I^O_1$, $o_2 \in I^O_2$,
    and $\running \subseteq (P \setminus \{p_1,p_2\})$ such that $ C \subseteq R$.
    ~~Then, for any $c \in C$:
   \begin{align*}
    &\pi(\running, c) - 
    \pi(\running \cup \{p_1\}, c) \\
    &=\pi(\running \cup \{p_2\}, c) 
    - \pi(\running \cup \{p_2, p_1\}, c)
    \end{align*}

\end{lemma}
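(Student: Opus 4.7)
The plan is to establish the equality pointwise per voter, so that summing and dividing by $n$ yields the displayed identity. For each $v \in V$, define $\Delta_1(v) := \mathbb{1}[T_v(R) = c] - \mathbb{1}[T_v(R \cup \{p_1\}) = c]$ and $\Delta_2(v) := \mathbb{1}[T_v(R \cup \{p_2\}) = c] - \mathbb{1}[T_v(R \cup \{p_1, p_2\}) = c]$. By \Cref{lemma:add_change_to_added_or_not_change}, a voter with $c$ as top in $R$ either keeps $c$ or switches to $p_1$ when $p_1$ is added, and a voter whose top in $R$ is not $c$ cannot acquire $c$ as top when another party is added. Hence $\Delta_1(v) \in \{0,1\}$, with $\Delta_1(v)=1$ iff $T_v(R)=c$ and $T_v(R \cup \{p_1\}) = p_1$; the analogous characterization holds for $\Delta_2(v)$. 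So it suffices to show $\Delta_1(v) = \Delta_2(v)$ for each $v$.

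The easy direction, $\Delta_2(v)=1 \Rightarrow \Delta_1(v)=1$, uses only \Cref{lemma:divider-meaning}. If $T_v(R \cup \{p_2\}) = c$, then $v$ is strictly closer to $c$ than to every party in $R$, so $T_v(R)=c$. If additionally $T_v(R \cup \{p_1, p_2\}) = p_1$, then $|v-p_1| < |v-c|$, which combined with $T_v(R)=c$ forces $T_v(R \cup \{p_1\}) = p_1$.

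The reverse direction is the main work. Assuming $\Delta_1(v) = 1$, I plan to reduce everything to the single inequality $|v - c| < |v - p_2|$: this implies $T_v(R \cup \{p_2\}) = c$ (applying \Cref{lemma:add_change_to_added_or_not_change} with $p_2$) and, via the chain $|v - p_1| < |v - c| < |v - p_2|$, also $T_v(R \cup \{p_1, p_2\}) = p_1$. Assume WLOG $p_1 < c$, so by \Cref{lemma:divider-meaning} $v < (p_1+c)/2 < c$. If $p_2 > c$, then $v < c < (c+p_2)/2$ gives $|v-c| < |v-p_2|$ immediately.

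If instead $p_2 < c$, the structural hypothesis that $p_1$ and $p_2$ lie in distinct opposition intervals supplies a coalition party $c''$ strictly between them (since any two opposition intervals are separated by a coalition interval), and $c'' \in R$ because $C \subseteq R$. When $p_2 < c'' < p_1 < c$, the requirement $T_v(R) = c$ gives $v > (c''+c)/2 > (p_2+c)/2$, which is exactly $|v-c| < |v-p_2|$. When $p_1 < c'' < p_2 < c$, the bounds $v > (c''+c)/2$ (from $T_v(R)=c$) and $v < (p_1+c)/2$ (from $T_v(R\cup\{p_1\}) = p_1$) are incompatible because $c'' > p_1$; hence $\Delta_1(v)=0$ vacuously, and the identical bound argument shows $\Delta_2(v)=0$ as well. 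The case $p_1 > c$ is symmetric. The main obstacle is this same-side analysis, and in particular noticing that the vacuous subcase is vacuous \emph{symmetrically} for both $\Delta_1$ and $\Delta_2$, so the desired equality still survives.
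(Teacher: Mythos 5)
Your proof is correct and follows essentially the same route as the paper's: both reduce the identity to a per-voter statement --- that whether adding $p_1$ steals a vote from $c$ is unaffected by the presence of $p_2$ --- and both hinge on the coalition party that must separate the two opposition-intervals. Your indicator-difference formulation and explicit divider-based case analysis are in fact tighter and more complete than the paper's contradiction-style argument, but the underlying idea is identical.
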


\section{Control by Adding Parties}\label{sec:adding}

\subsection{Technical Proofs for SSP}
\Cref{lemma:left-most-right-most-equal} proves that, in each coalition-interval, it is sufficient to consider the addition of at most two spoiler parties.

\begin{lemma}\label{lemma:left-most-right-most-equal}
    Let $I^C \in \mathcal{I}^C$ be a coalition-interval.
    For any subset $K \subseteq S \cap I^C$, there exists $K' \subseteq K$ with $|K'| \leq 2$, such that for $R(K) := (\universe \setminus S) \cup K$ and $R(K'):= (\universe \setminus S) \cup K'$:
    \begin{align*}
        \sum_{p \in C \cap R(K)} \pi(R(K), p) = \sum_{p \in C \cap R(K')} \pi(R(K'), p) 
    \end{align*}
\end{lemma}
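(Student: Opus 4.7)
The plan is to take $K' := \{\min K, \max K\}$ (ordering the parties in $K$ by position on $[0,1]$), or $K' := K$ when $|K| \leq 1$, and show equality of the two coalition sums. The key insight is that because $K \subseteq S \cap I^C$ consists only of parties lying within a single coalition-interval $I^C$, any ``interior'' party of $K$ can only redistribute votes among coalition parties inside $I^C$, without changing the total mass of votes captured by the coalition.

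First I would split the sum $\sum_{p \in C \cap R(K)} \pi(R(K), p)$ into two parts: the contribution from parties in $I^C$ and the contribution from parties lying in the other coalition-intervals. For the latter, I would apply \Cref{lem:sperate-intervals} iteratively to argue that the vote share of any party in a different coalition-interval $I' \neq I^C$ is independent of which parties of $I^C$ are in the running set. Hence this portion of the sum is identical for $R(K)$ and $R(K')$.

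For the contribution from $I^C$ itself, I would use \Cref{lemma:divider-meaning} to characterize which voters have their top choice in $R(K) \cap I^C$. By the SSP property and the contiguity of $I^C$, this is exactly the set of voters whose peak lies between two threshold midpoints: one between the leftmost party in $R(K) \cap I^C$ (call it $a(K)$) and its left-adjacent opposition party, and one between the rightmost party in $R(K) \cap I^C$ (call it $b(K)$) and its right-adjacent opposition party. Since $R(K) \cap I^C$ always includes the permanent coalition parties of $I^C$, both $a(K)$ and $b(K)$ depend on $K$ only through $\min K$ and $\max K$ --- exactly the two elements preserved in $K'$. Therefore the $I^C$-contribution agrees for $R(K)$ and $R(K')$, and combined with the previous step this yields the desired identity.

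The main obstacle I expect is handling edge cases cleanly: when $I^C$ is the leftmost or rightmost coalition-interval (so one adjacent opposition-interval is empty and the corresponding threshold becomes $0$ or $1$); when $K$ is empty or a singleton, so $K' = K$ trivially; and when some permanent coalition party of $I^C$ lies strictly outside $[\min K, \max K]$, in which case the true extremes $a(K), b(K)$ are determined by those permanent parties and coincide with $a(K'), b(K')$ by the same reasoning. Each of these reduces to the same midpoint characterization from \Cref{lemma:divider-meaning}, so after a short case analysis they collapse into a single identity, completing the proof.
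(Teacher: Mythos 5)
Your proposal is correct, and it chooses the same set $K'$ (the leftmost and rightmost parties of $K$) as the paper, but it reaches the conclusion by a noticeably different route. The paper argues voter-by-voter in two directions: any voter whose top under $R(K)$ is a coalition party still votes for the coalition under $R(K')$ (either their top survives the removal, or \Cref{lemma:divider-meaning} shows they prefer $c_L$ or $c_R$ to every opposition party), while any voter whose top under $R(K)$ is an opposition party keeps that same top under $R(K')$ because all opposition parties are permanent and $R(K')\subseteq R(K)$; the two containments force equality of the coalition totals without ever separating the coalition-intervals. You instead decompose the coalition sum by interval, freeze the contribution of the other coalition-intervals via an iterated application of \Cref{lem:sperate-intervals} (effectively \Cref{cor:interval-adding}), and then compute the $I^C$-contribution exactly as the number of voters whose peak lies in the catchment interval $\left(\tfrac{1}{2}(o_L+c_1),\,\tfrac{1}{2}(c_t+o_R)\right)$, where $c_1,c_t$ are the extreme running parties of $I^C$ --- quantities that depend on $K$ only through $\min K$ and $\max K$. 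Both arguments rest on the same underlying fact (only the extremes of the running set inside $I^C$ matter, via \Cref{lemma:divider-meaning}), but yours buys an explicit, reusable description of exactly which voters an interval captures, at the cost of needing the interval-independence lemma and a few boundary cases (empty or singleton $K$, outermost intervals, permanent parties outside $[\min K,\max K]$) that the paper's leaner two-sided containment sidesteps. Your handling of those edge cases is the right one, so the argument goes through.
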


\begin{proofsketch}
    If $|K|\leq 2$, then the claim is trivial. 

    Assume that $|K| >2$. 
    Notice that all parties in $K$ are coalition parties.
    Let $c_L$ and $c_R$ be the left-most and right-most parties in $K$, respectively; and consider $K'$ that contains these two parties, that is:
    \begin{align*}
        K' := \{c_L \mid c_L < c, \forall c \in K\} \cup \{c_R \mid c < c_R, \forall c \in K\}
    \end{align*}

    We prove that if a voter's favorite party is in the coalition under $R(K)$, it will also be in the coalition under $R(K')$.
    If its favorite party is in $R(K')$, it will remain their favorite, as it is still available and there are even fewer options to choose from. Otherwise, we prove that the voter prefers either $c_L$ or $c_R$ (or both) over all the opposition parties. This ensures that their vote will still be given to the coalition under $R(K')$.

    Conversely, we show that if a voter's favorite party is in the opposition under $R(K)$, it will remain their favorite party under $R(K')$.
    This follows from the fact that $(P\setminus C) \subseteq R$, thus their favorite party is also in $R(K')$, with even fewer options available.
    
\end{proofsketch}


    


    

\begin{lemma}\label{lemma:adding-opposition-two}
    Let $I^O \in \mathcal{I}^O$ be an opposition-interval.   
    for any subset $K \subseteq S \cap I^O$, there exists $K' \subseteq K$ with $|K'| \leq 2$, such that for $R(K) := (\universe \setminus S) \cup K$ and $R(K'):= (\universe \setminus S) \cup K'$:
    \begin{align*}
        \sum_{p \in C \cap R(K)} \pi(R(K), p) = \sum_{p \in C \cap R(K')} \pi(R(K'), p) 
    \end{align*}
\end{lemma}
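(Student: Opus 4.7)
The plan is to mirror the proof of \Cref{lemma:left-most-right-most-equal}, but with the roles reversed: we again keep only the leftmost and rightmost spoilers in $K$, and we show that the middle spoilers merely redistribute votes among opposition parties without shifting any voter between coalition and opposition.

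If $|K| \leq 2$, take $K' := K$ and the claim is trivial. Otherwise let $o_L := \min K$ and $o_R := \max K$ (the leftmost and rightmost parties of $K$ in $I^O$), and set $K' := \{o_L, o_R\}$. I would prove the stronger claim that for every voter $v$, $\topvRP{v}{R(K)} \in C$ if and only if $\topvRP{v}{R(K')} \in C$; summing over $v$ then yields the required equality of coalition shares.

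For the direction $R(K) \Rightarrow R(K')$: because $K \setminus K' \subseteq I^O$ contains only opposition parties, any coalition party in $R(K)$ already belongs to $R(K')$, and restricting the alternative set from $R(K)$ to $R(K')$ cannot displace the top coalition party of $v$. For the other direction, suppose $\topvRP{v}{R(K')} = c \in C$. Iterating \Cref{lemma:add_change_to_added_or_not_change} while adding the elements of $K \setminus \{o_L, o_R\}$ to $R(K')$ one at a time, the top in $R(K)$ lies in $\{c\} \cup (K \setminus \{o_L,o_R\})$. Fix any $o \in K \setminus \{o_L,o_R\}$; then $o_L < o < o_R$. Since $c$ lies in a coalition-interval and $I^O$ is an opposition-interval, $c$ lies on one side of $I^O$: say $c < o_L$ (the case $o_R < c$ is symmetric). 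By \Cref{lemma:divider-meaning} applied to $\topvRP{v}{R(K')} = c$ versus $o_L$, we have $v < (c+o_L)/2 < (c+o)/2$, and \Cref{lemma:divider-meaning} applied again gives $c \succ_v o$. Hence the top in $R(K)$ cannot be $o$, so it must remain $c \in C$.

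The main technical point will be the midpoint inequality $(c+o_L)/2 < (c+o)/2$ (and its symmetric version with $o_R$), which is what makes a middle opposition spoiler unable to steal any coalition vote beyond what $o_L$ and $o_R$ already take. The crucial structural assumption is $K \subseteq I^O$: it guarantees that every middle spoiler sits on the same side of $c$ as the corresponding extreme spoiler, so the extreme pair already captures all the coalition losses; without this one could not push the inequality in the right direction.
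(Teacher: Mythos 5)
Your proposal is correct and follows essentially the same route as the paper's proof: the same choice of $K'$ as the two extreme spoilers of $K$, the same per-voter analysis, and the same midpoint-monotonicity argument via \Cref{lemma:divider-meaning} (you phrase the second direction as ``$c$ beats every middle spoiler,'' while the paper phrases the contrapositive as ``a voter topped by a middle spoiler prefers $o_L$ or $o_R$ to every coalition party''; these are equivalent). No gaps.
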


The proof is similar to the one of \Cref{lemma:left-most-right-most-equal} and provided in the appendix.

\subsection{Adding Coalition Parties }
For general preferences, \Cref{thm:acp-general-C} proves that the problem is W[2]-hard even when the chair cares only about the coalition objective ($\rho =0$).

For SSP preferences, \Cref{thm:acp-ssp-CFP-non-con} proves that the problem is W[1]-hard under the compact representation. 
However, \Cref{thm:acp-ssp-C} provides a polynomial-time algorithm for $\rho =0$; while \Cref{thm:acp-ssp-CFP-con} provides a polynomial-time algorithm for the case of a contiguous coalition ($q=1$). 
Then, \Cref{thm:CCFP-ACP-pseudo-polynomial} presents a pseudo-polynomial algorithm under the compact representation, which becomes polynomial under the extensive representation.


\subsubsection{General Preferences}


\begin{theorem}\label{thm:acp-general-C}
    \withoutTwithoutLacp\ is W[2]-hard.
\end{theorem}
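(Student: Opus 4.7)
The plan is to reduce from \textsc{Set Cover}, which is known to be W[2]-hard when parameterized by the cover size. Given a Set Cover instance $(U,\mathcal{F},k)$ with universe $U=\{u_1,\dots,u_n\}$ and family $\mathcal{F}=\{F_1,\dots,F_m\}$, I would build an election as follows. Let the spoiler set be $S=\{p_F : F\in\mathcal{F}\}$, with every $p_F$ placed inside the target coalition $C$. Add a single permanent opposition party $p^\star\notin C$, and a single permanent coalition dummy $c_0\in C$ (so $C\cap(P\setminus S)\neq\emptyset$). For every element $u\in U$, introduce one voter $v_u$ whose strict preference ranks first (in arbitrary order) all $p_F$ with $F\ni u$, then $p^\star$, then $c_0$, then all remaining $p_F$ with $F\not\ni u$. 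Set $\varphi=1$, $\rho=0$ (so the favored-party objective is vacuous), and keep the same budget $k$. Since the coalition objective alone is what CC-ACP tests, this instance belongs to the problem.

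For the forward direction, if $\mathcal{F}'\subseteq\mathcal{F}$ is a cover with $|\mathcal{F}'|\le k$, then picking $K=\{p_F:F\in\mathcal{F}'\}$ and $R=(P\setminus S)\cup K$ makes every voter $v_u$ top-rank some $p_F\in K\cap C$ (because $\mathcal{F}'$ covers $u$), so each of the $n$ voters contributes to the coalition and the coalition fraction equals $1=\varphi$. For the converse, suppose some $K\subseteq S$ with $|K|\le k$ achieves coalition-fraction $\ge 1$. Then every voter $v_u$ must top-rank a coalition party inside $R$. But if no $p_F\in K$ satisfies $F\ni u$, then within $R$ the best available party for $v_u$ is $p^\star$ (since the only parties $v_u$ prefers to $p^\star$ are the $p_F$ with $F\ni u$, none of which are in $R$), contradicting that $v_u$ votes for a coalition party. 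Thus $\{F:p_F\in K\}$ is a set cover of size at most $k$. Since the budget $k$ is preserved exactly, W[2]-hardness of \textsc{Set Cover} transfers to \withoutTwithoutLacp.

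The construction is completely standard in spirit (a coalition-level adaptation of the Bartholdi--Tovey--Trick reduction for control by adding candidates), and the only steps requiring care are bookkeeping ones: verifying that every voter's list is a strict total order over $P$, confirming that the contribution of $c_0$ is irrelevant because no voter prefers $c_0$ to $p^\star$, and checking that the budget parameter is transferred verbatim so that the parameterized reduction is an FPT reduction. I do not expect a genuine obstacle here; the main conceptual point is simply to observe that, because the coalition objective aggregates over all parties in $C\cap R$, the added spoiler parties do not ``compete'' with each other inside the coalition, so the election reduces cleanly to a covering question over the voters.
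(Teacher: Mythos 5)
Your reduction is correct and is essentially the same argument as the paper's: the paper reduces from Dominating Set (spoiler coalition parties are vertices, each voter $v_i$ ranks $N[s_i]$ above the permanent opposition party, plus $n$ padding voters so that $\varphi=\tfrac12$ forces all non-padding voters to be captured), which is just the closed-neighborhood special case of your Set Cover construction with $\varphi=1$ and no padding. Both hinge on the same observation you make at the end — the coalition objective aggregates over $C\cap R$, so added spoilers do not compete and the question collapses to covering the voters — and your bookkeeping (strict total orders, the inert dummy $c_0$, verbatim parameter transfer) checks out.
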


We provide a reduction from the dominating set problem.
 Clearly, \withoutTwithLacp\ is also w[2]-hard as  $\rho$ can be set to zero, and then it solves exactly the \withoutTwithoutLacp\ problem.


\subsubsection{Symmetric Single-Peaked Preferences}


\begin{theorem} \label{thm:acp-ssp-CFP-non-con}
   For SSP preferences, \withoutTwithLacp\ is NP-hard under the compact representation.
\end{theorem}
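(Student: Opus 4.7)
The plan is to reduce from \textsc{Subset Sum}: given positive integers $a_1, \dots, a_n$ and a target $B$, decide whether some $J \subseteq [n]$ satisfies $\sum_{j \in J} a_j = B$. Since \textsc{Subset Sum} is NP-complete even when the $a_j$ and $B$ are written in binary, the reduction must produce a compact SSP instance whose total size is polynomial in $\log(\max_j a_j + B)$; this is exactly what the compact representation affords, since voters appear only as type-counts.

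I construct $n$ disjoint ``item gadgets'' $G_1, \dots, G_n$ together with one ``favored-party gadget'' $G_F$, placed far apart on $[0,1]$ so that, by SSP, a voter of one gadget is never closer to a party of another gadget than to the parties of its own. In item gadget $G_j$ I place a permanent coalition party $c_j$ on the left, a permanent opposition party $o_j$ just outside on the right, and a spoiler coalition party $s_j$ in between; the compact input then declares a single voter type with multiplicity $a_j$ whose peak lies closer to $s_j$ than to anything else, but, with $s_j$ removed, falls on the $o_j$-side of the midpoint $(c_j+o_j)/2$. By Lemma~\ref{lemma:divider-meaning} this type votes for $o_j$ when $s_j$ is absent and for $s_j$ once $s_j$ is added; by Lemma~\ref{lemma:add_change_to_added_or_not_change} no other type changes its behavior inside $G_j$; and by Lemma~\ref{lem:sperate-intervals} the shift is invisible to every other coalition interval. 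Hence selecting a subset $J \subseteq [n]$ of spoilers to add raises the coalition total by exactly $\sum_{j \in J} a_j$, while $p_F$'s own vote count inside $G_F$ stays fixed at some value $V_F$ independent of $J$.

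For the parameters I let $V_C$ denote the baseline coalition total and $N$ the total number of voters, and set $k = n$, $\varphi = (V_C + B)/N$ and $\rho = V_F/(V_C + B)$. With these choices OBJ-C reads $\sum_{j\in J} a_j \geq B$ while OBJ-FP reads $\sum_{j\in J} a_j \leq B$, so the two objectives are satisfied simultaneously if and only if $\sum_{j\in J} a_j = B$, yielding the desired equivalence. Since the construction uses only $O(n)$ parties and $O(n)$ voter types, and the only large numbers are the binary-encoded counts $a_j$ plus a background opposition count used to calibrate $\varphi$ and $\rho$, the reduction is polynomial in the size of the compact input.

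The main technical obstacle I anticipate is the geometric calibration inside each $G_j$: the $a_j$ voters of that gadget must vote for $o_j$ when $s_j$ is absent, for $s_j$ when it is present, and must never become closer to a party of another gadget than to the parties of their own. The last point I will handle by scaling so that inter-gadget gaps dominate intra-gadget distances; the first two reduce to placing the peak on the correct side of the midpoints $(c_j+o_j)/2$ and $(s_j+o_j)/2$, which Lemma~\ref{lemma:divider-meaning} makes routine. A secondary concern is that $p_F$'s vote count is truly insensitive to $J$; for this I invoke Lemma~\ref{lem:sperate-intervals}, since the added spoilers all lie in coalition intervals distinct from the one containing $p_F$ and therefore cannot change $\pi(\running, p_F)$.
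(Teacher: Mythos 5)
Your proposal is correct and follows essentially the same route as the paper's own proof: a reduction from subset sum in which each spoiler coalition party, when added, pulls exactly $a_j$ votes away from an adjacent opposition party, the favored party's count stays fixed, and the parameters $\varphi$ and $\rho$ are calibrated so that OBJ-C forces $\sum_{j\in J} a_j \geq B$ while OBJ-FP forces $\sum_{j\in J} a_j \leq B$. The only differences are cosmetic (the paper omits your permanent parties $c_j$ and keeps the cardinality bound $k$ from a $k$-subset-sum instance rather than setting $k=n$), so no further comparison is needed.
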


The proof is by reduction from the k-subset-sum problem.



\begin{theorem} \label{thm:acp-ssp-C}
    For SSP preferences, \withoutTwithoutLacp\ is polynomial-time solvable.
\end{theorem}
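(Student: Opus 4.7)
The plan is to exploit two structural results already established in the paper. First, by Lemma~\ref{lem:sperate-intervals} (applied iteratively via a decomposition analogous to Corollary~\ref{cor:interval-adding}), adding coalition spoilers inside one coalition-interval $I^C_j$ does not alter the vote fraction of any running coalition party located in a different coalition-interval. Hence the coalition total splits as a sum of independent per-interval contributions $\sum_{p \in C \cap R} \pi(R,p) = \sum_{j=1}^{q} g_j(K_j)$, where $K_j := K \cap I^C_j$ denotes the spoilers the chair adds inside $I^C_j$. Second, Lemma~\ref{lemma:left-most-right-most-equal} tells us that, within any single coalition-interval, any feasible $K_j$ can be replaced by a subset $K'_j \subseteq K_j$ with $|K'_j| \leq 2$ (its leftmost and rightmost members) without changing the coalition total. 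Combining these two facts, we may restrict attention to allocations in which each coalition-interval receives at most two added spoilers.

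With this reduction in hand, I would precompute, for every coalition-interval $I^C_j$ and every local budget $b \in \{0,1,2\}$, the value
\[
f_j(b) := \max\bigl\{\, g_j(K_j) \;:\; K_j \subseteq S \cap I^C_j,\; |K_j| \leq b \,\bigr\}.
\]
For each $(j,b)$, there are only $O(\ell_j^2)$ candidate sets $K_j$ to examine (single spoilers or unordered pairs), and for each candidate the contribution $g_j(K_j)$ can be evaluated in polynomial time: under the SSP assumption, Lemma~\ref{lemma:divider-meaning} reduces each voter's top-party decision in the modified running set to midpoint comparisons against the parties in $(P \setminus S) \cup K_j$ that lie inside the relevant coalition-interval (together with the two flanking opposition parties). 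Summed over intervals this preprocessing costs $O(\ell^2)$ work in the compact representation, and remains polynomial in the extensive representation.

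Next, I would choose the per-interval budgets by a standard knapsack-style dynamic program: let $D[j][t]$ denote the maximum coalition total achievable using the first $j$ coalition-intervals and at most $t$ total additions, with recursion
\[
D[j][t] = \max_{b \in \{0,1,2\},\; b \leq t} \bigl( D[j-1][t-b] + f_j(b) \bigr), \qquad D[0][t] = 0.
\]
This table has $O(qk)$ entries and each transition is $O(1)$, so the whole optimization runs in polynomial time. The chair's instance is a yes-instance iff $D[q][k] \geq \varphi$.

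The one step that requires care — and which I expect to be the main obstacle — is justifying the additive decomposition $\sum_p \pi(R,p) = \sum_j g_j(K_j)$ in a form that is truly independent of the other intervals' choices. Lemma~\ref{lem:sperate-intervals} is stated for adding a single coalition party at a time, so I would argue, by induction on $|K \setminus K_j|$ using Corollary~\ref{cor:interval-adding}, that the contribution of the coalition parties inside $I^C_j$ to the plurality sum depends only on $K_j$ and on the (fixed) permanent parties, not on additions made elsewhere. Once this decomposition and the $\leq 2$-per-interval bound are in place, correctness of the DP and the polynomial runtime follow immediately, establishing the theorem.
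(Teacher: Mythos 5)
Your proposal is correct and follows essentially the same route as the paper's proof: the same two structural facts (interval independence via Corollary~\ref{cor:interval-adding} and the at-most-two-spoilers-per-interval reduction via Lemma~\ref{lemma:left-most-right-most-equal}) feeding the same knapsack-style dynamic program over coalition-intervals, with the same final threshold check against $\varphi$. The only differences are notational (your $f_j(b)$ and $D[j][t]$ correspond to the paper's $N'_C(I_{q'},k')$ and $N_C(\mathcal{I}_{q'},k')$).
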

\begin{proof}

    We present a dynamic programming algorithm to determine the maximum number of votes a coalition can achieve by adding at most $k$ coalition parties. Using this result, we can verify whether the coalition objective is attainable.

    The algorithm is based on two key observations. First, by \Cref{cor:interval-adding}, adding parties in one coalition-interval does not affect the impact of adding parties into other coalition-intervals. 
    Second, by \Cref{lemma:left-most-right-most-equal}, it suffices to consider subsets of size at most $2$ within each interval to maximize votes.

    Combining these results, we restrict the set of combinations to be evaluated: at most two parties can be added per interval, and the total number of added parties does not exceed $k$. This restricted set of combinations can be efficiently handled using dynamic programming, as described below.

    \paragraph{Algorithm.} The algorithm consists of three steps. 
    
    \paragraph{(1) Per-Interval Calculation.} 
    For each $q' = 1,\ldots,q$ and for $k' = 0,\ldots, k$, we compute the maximum number of votes that can be obtained by the coalition by adding at most $k'$ parties in the interval $I_{q'}$, denoted by $N'_C(I_{q'},k')$.
    
    For $k' \leq 2$, we directly check all subsets of size at most $k'$.
    For $k'\geq 3$, by \Cref{lemma:left-most-right-most-equal}, the maximum number of votes can be obtained by considering at most two parties, so we simply use the value in $N'_C(I_{q'},2)$.
    Clearly, this step can be done in polynomial time.

    \paragraph{(2) Dynamic Programming.} 
    For each $q' = 1,\ldots,q$ and for $k' = 0,\ldots, k$, we calculate the maximum number of votes that can be obtained by the coalition by adding at most $k'$ parties in the intervals $\mathcal{I}_{q'}:= \{I_1,\ldots, I_{q'}\}$, denoted by $N_C(\mathcal{I}_{q'},k')$.
    \begin{itemize}
        \item Base Cases: For $q'=1$:
        \begin{align*}
            N_C(\mathcal{I}_{1},k') := N'_C(I_{1},k')
        \end{align*}

        \item Recursive Step: For $q' \geq 2$:
        \begin{align*}
            N_C(\mathcal{I}_{q'},k'):= \displaystyle \max_{0\leq i \leq \min(2,k')}(&N'_C(I_{q'}, i) \\
            &+ N_C(\mathcal{I}_{q'-1},k' -i) )
        \end{align*}
    \end{itemize}

    \paragraph{(3) Objective Check.} We check whether the coalition objective is satisfied when the number of votes for the coalition is $N_C(\mathcal{I}_{q},k)$ -- that is, whether:
    \begin{align*}
        N_C(\mathcal{I}_{q},k) / n \geq \varphi
    \end{align*}
    If so, assert "yes"; otherwise assert "no".
%
%
\end{proof}
The proof of correctness is provided in the appendix. 




\begin{theorem} \label{thm:acp-ssp-CFP-con}
    For SSP preferences, \withoutTwithLacp\ is polynomial-time solvable for contiguous coalitions ($q=1$).
\end{theorem}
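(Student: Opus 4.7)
\begin{proofsketch}
The plan is to reduce the search to $O(|S \cap C|^2)$ candidate sets, each of size at most $2$, and then enumerate them and verify both objectives directly. Since the coalition is contiguous ($q=1$), every spoiler coalition party lies within the single coalition-interval $I^C_1$, so Lemma \ref{lemma:left-most-right-most-equal} governs the effect of \emph{any} chosen $K \subseteq S \cap C$ (not merely per-interval).

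The structural claim to establish is: if some $K^* \subseteq S \cap C$ with $|K^*| \leq k$ satisfies both OBJ-C and OBJ-FP, then so does $K' := \{a_L, b_R\} \setminus \{\bot\}$, where $a_L$ and $b_R$ are the leftmost and rightmost parties in $K^*$ (with $\bot$ denoting an empty choice when $K^*$ is empty). Let $R^* := (\universe \setminus S) \cup K^*$ and $R' := (\universe \setminus S) \cup K'$; then $R' \subseteq R^*$, and $p_F \in R'$ since $p_F$ is permanent. Lemma \ref{lemma:left-most-right-most-equal} yields $\sum_{p \in C \cap R^*} \pi(R^*, p) = \sum_{p \in C \cap R'} \pi(R', p)$, so OBJ-C is preserved. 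Corollary \ref{cor:adding-parties-decreases}, applied to $p_F$, gives $\pi(R', p_F) \geq \pi(R^*, p_F)$, since $R' \subseteq R^*$. Chaining the two, $\pi(R', p_F) \geq \pi(R^*, p_F) \geq \rho \sum_{p \in C \cap R^*} \pi(R^*, p) = \rho \sum_{p \in C \cap R'} \pi(R', p)$, so OBJ-FP is preserved as well; and $|K'| \leq |K^*| \leq k$.

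The algorithm then enumerates every pair $(a_L, b_R) \in ((S \cap C) \cup \{\bot\})^2$ with $a_L \leq b_R$, forms $K := \{a_L, b_R\} \setminus \{\bot\}$, skips it if $|K| > k$, and otherwise checks both objectives on $R := (\universe \setminus S) \cup K$. There are $O(|S|^2)$ pairs and each check is polynomial, giving an overall polynomial-time procedure. The main thing to watch for is that the transfer of OBJ-FP from $K^*$ to $K'$ relies on Lemma \ref{lemma:left-most-right-most-equal} giving \emph{exact} equality of the total coalition vote (not merely a non-decreasing bound), so that the denominator in the favored-party ratio is unchanged while the numerator weakly grows; this works precisely because removing only the ``middle'' spoilers from $K^*$ does not move any voter out of the coalition interval.
\end{proofsketch}
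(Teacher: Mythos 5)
Your proposal is correct and follows essentially the same route as the paper: enumerate all candidate sets of size at most $\min\{2,k\}$, and justify this via \Cref{lemma:left-most-right-most-equal} (exact preservation of the coalition total, hence OBJ-C and the denominator of OBJ-FP) combined with \Cref{cor:adding-parties-decreases} (the favored party's share can only grow when passing from $K^*$ to $K' \subseteq K^*$). Your explicit remark that the equality of the coalition total is what makes the favored-party ratio transfer is exactly the (implicit) crux of the paper's argument.
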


\begin{proof}
       We consider any subset of $S$ of size at most $min\{2, k\}$. 
    For each subset, we check if it satisfies the objectives. 
    If it does, the algorithm stops and asserts "yes". 
    If no subset satisfies the objectives, the algorithm asserts "no".  

    It is clear that this process takes polynomial time, as the number of subsets is less than $m^2$.
    We shall now see that the algorithm is always correct.
    
    When the algorithm asserts "yes", there indeed exists a subset of size at most $k$ that satisfies the objectives as the algorithm asserts "yes" only when it finds one.  

    When the algorithm returns "no", it means that no subsets of size at most $\min\{2, k\}$ satisfy the objectives. 
    If $k \leq 2$, the algorithm is correct, as it directly checks all possible subsets. 
    Otherwise, $k > 2$. Assume for contradiction that there exists a subset $K$ with $|K| \leq k$ that satisfies the objectives. 
    Since the algorithm asserted "no", it must be that $|K| > 2$.  

    Since the coalition is contiguous, all parties in $K$ are in the same coalition-interval. By \Cref{lemma:left-most-right-most-equal}, there exists a set $K' \subseteq K$ with $|K'| \leq 2$ such that adding the parties in $K'$ instead of $K$ results in the coalition receiving the same fraction of votes.  
    Since \( K \) satisfies the coalition objective, \( K' \) must also does.

    On the other hand, as the set of running parties when adding $K'$ is a subset of the running parties when adding $K$, it follows from \Cref{cor:adding-parties-decreases} that the fraction of any running party when adding $K'$ is at least as high as its fraction when adding $K$.
    This holds in particular for the favorite party. 
    As \( K \) satisfies the favorite-party objective, \( K' \) also does.  

    Thus, $K'$ satisfies both objectives.
    However, since $|K'| = 2$, the algorithm must have considered and checked $K'$. 
    But the algorithm asserted "no", which means that $K'$ did not satisfy the objectives -- a contradiction.  
\end{proof}



\begin{theorem}\label{thm:CCFP-ACP-pseudo-polynomial}
    For SSP preferences, \withoutTwithLacp\ admits a pseudo-polynomial algorithm for the compact representation.
\end{theorem}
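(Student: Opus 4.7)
The plan is to extend the interval-based dynamic programming of Theorem \ref{thm:acp-ssp-C} by tracking a vote-count coordinate and handling the interval containing $p_F$ separately by direct enumeration. The two structural ingredients are already provided: by \Cref{lem:sperate-intervals} the score $\pi(R,p_F)$ depends only on which spoilers are added inside the coalition-interval $I^*$ containing $p_F$, and by \Cref{cor:interval-adding} the coalition total decomposes as a sum over independent per-interval contributions. Together these let us treat $I^*$ as a black box of $O(\mathrm{poly}(m))$ candidate choices and reduce the remaining intervals to a knapsack-style DP.

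First, I would build the enumeration for $I^*$. Refining the argument of \Cref{lemma:left-most-right-most-equal}, only four positions among the added spoilers can affect the pair $(N_F,N^*_C)$, where $N_F$ is the score of $p_F$ and $N^*_C$ is the coalition count contributed by $I^*$: the leftmost added spoiler in $I^*$ and the rightmost added spoiler in $I^*$ (these determine the boundary of $I^*$ with the adjacent opposition-intervals and hence $N^*_C$), plus the largest added spoiler lying strictly between $p_F$ and its nearest permanent coalition neighbor on the left and the symmetric smallest spoiler on the right (these determine $p_F$'s catchment and hence $N_F$). Any other spoiler placed in $I^*$ affects neither quantity and only consumes budget, so one may assume $|K^*|\leq 4$. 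Enumerating all such subsets gives $O(m^4)$ candidates, and for each candidate $K^*$ I record the triple $(k^*,N_F(K^*),N^*_C(K^*))$ with $k^*=|K^*|$.

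Second, I would run a pseudo-polynomial DP over the remaining $q-1$ coalition-intervals. By \Cref{lemma:left-most-right-most-equal} it suffices, in each such interval, to consider subsets of size at most $2$, yielding $O(m^2)$ per-interval candidates whose per-interval coalition contribution can be precomputed. Define a boolean table $D[j][k'][V]$ that is true iff one can place $k'$ spoilers across the first $j$ such intervals so that the total coalition count from those intervals equals $V$; base case $D[0][0][0]=\text{true}$, and the transition from $j{-}1$ to $j$ enumerates the $O(m^2)$ candidates for interval $j$, updating the budget and vote coordinates accordingly. The table has $O(q\,k\,n)$ entries and $O(m^2)$ work per transition, so it is built in time $O(q\,k\,n\,m^2)$, which is polynomial in $n$ and $m$ and therefore pseudo-polynomial under the compact representation (where $n$ is encoded in binary). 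Finally, I would combine by scanning: for every $K^*$ and every $V$ with $D[q-1][k-k^*][V]$ true, check whether
\begin{align*}
V + N^*_C(K^*) \;\geq\; \varphi\,n
\quad\text{and}\quad
N_F(K^*) \;\geq\; \rho\,\bigl(V + N^*_C(K^*)\bigr),
\end{align*}
and answer yes iff some combination passes both tests.

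The main obstacle is the $O(m^4)$ reduction for $I^*$: one must argue that for every $K^*\subseteq S\cap I^*$ there is $K'^*\subseteq K^*$ with $|K'^*|\leq 4$ such that $N_F(K^*)=N_F(K'^*)$ and $N^*_C(K^*)=N^*_C(K'^*)$. This is the natural strengthening of \Cref{lemma:left-most-right-most-equal} to account for $p_F$'s individual share, and it follows from the same neighborhood-based case analysis (the outermost spoilers determine where voters switch between $I^*$ and the adjacent opposition-intervals, and the innermost spoilers on each side of $p_F$ determine where voters switch between $p_F$ and its neighboring coalition parties). Once this refined per-interval lemma is in place, correctness of the overall algorithm follows directly from the independence lemmas, and the claimed pseudo-polynomial running time is read off the explicit DP state space.
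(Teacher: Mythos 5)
Your proposal is correct and rests on the same two pillars as the paper's proof: independence of coalition-intervals (\Cref{lem:sperate-intervals}/\Cref{cor:interval-adding}) plus a constant-size per-interval reduction (\Cref{lemma:left-most-right-most-equal}), combined via a knapsack-style DP over budget and vote counts, pseudo-polynomial because the state space is indexed by $n$. The difference is in how the favored party is handled. The paper keeps both coordinates $(n_c,n_f)$ in the DP state for \emph{every} interval, using the observation that for a fixed per-interval coalition contribution the two extremal spoilers simultaneously maximize $p_F$'s share (by \Cref{cor:adding-parties-decreases}). You instead localize $p_F$ to its interval $I^*$, enumerate that interval directly, and run a one-coordinate DP over the remaining intervals; this is a clean and valid alternative, and your table $D[j][k'][V]$ is well-defined precisely because of \Cref{cor:interval-adding}. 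One remark: your refined ``four-spoiler'' lemma for $I^*$ is correct as stated (the outermost added spoilers fix the interval's coalition contribution and the innermost ones on each side of $p_F$ fix its catchment), but it is not actually needed. Since both objectives are monotone increasing in $N_F$ for a fixed coalition total, and since restricting $K^*$ to its leftmost and rightmost elements preserves $N^*_C$ while only increasing $N_F$ (again \Cref{lemma:left-most-right-most-equal} plus \Cref{cor:adding-parties-decreases}), the two-spoiler enumeration already suffices even in $I^*$; your $O(m^4)$ enumeration is harmless but conservative. With that simplification your argument coincides with the paper's up to the choice of DP state.
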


Notice that such an algorithm runs in polynomial time under the extensive representation.

\begin{proof}
The algorithm is conceptually similar to the one presented in \Cref{thm:acp-ssp-C} but with a key difference: instead of computing the maximum number of votes attainable by the coalition, it evaluates whether specific vote combinations meet the objectives.

    Formally, we define the set $A$ of acceptable vote combinations, representing pairs of votes for the coalition and the favored party that meet the objectives, as follows:
    \begin{align*}
    A := \left\{ (n_c, n_f) \in \{1, \ldots, n\}^2\ \middle\vert \begin{array}{l}
        \frac{1}{n}(n_f + n_c) \geq \varphi\\
        n_f \geq \rho \cdot n_c
      \end{array}\right\}
    \end{align*}

    The goal is to check whether such vote combination can be obtained by adding at most $k$ coalition parties from $S$. 
    
    As before, the algorithm is based on two key observations. \Cref{cor:interval-adding} ensures that adding parties in one coalition-interval  does not affect the impact  of adding parties into other coalition-intervals; 
    while \Cref{lemma:left-most-right-most-equal}, ensures that it suffices to consider only subsets of size at most $2$ in each interval.
    
    \paragraph{Algorithm.} The algorithm consists of three steps. 
    
    \paragraph{(1) Per-Interval Calculation.} 
    For each $q' = 1,\ldots,q$, ~~$k' = 0,\ldots ,k$,~~ $n'_c = 0, \ldots, n$,  and $n'_f = 0, \ldots, n$; we check whether it is possible to achieve exactly $n_c$ votes for the coalition and $n_f$ votes for the favored party, by adding at most $k'$ parties in the interval $I_{q'}$.
   To do so, we employ \Cref{lemma:left-most-right-most-equal} and \Cref{cor:adding-parties-decreases}, which imply that any number of votes for the coalition, $n_c$, that can be achieved by a certain number of parties can also be achieved by at most two parties while ensuring the maximum possible votes for the favored party, $n_f$, corresponding to this $n_c$.
    Clearly, this step can be done in polynomial time.
    
    The result, a "Yes" or "No" answer, is denoted by $W'(I_{q'},k',n'_c, n'_f)$.

    \paragraph{(2) Dynamic Programming.} 
    
    For each $q' = 1,\ldots,q$,~~ $k' = 0,\ldots, k$,~~ $n'_c = 0, \ldots, n$,  and $n'_f = 0, \ldots, n$; we check whether it is possible to achieve exactly $n_c$ votes for the coalition and $n_f$ votes for the favored party by adding at most $k'$ parties in the intervals $\mathcal{I}_{q'}:= \{I_1,\ldots, I_{q'}\}$, as follows. 
    \begin{itemize}
         \item Base Case:  For $q'=1$: ~~
         \begin{align*}
             W(\mathcal{I}_{1},k',n'_c, n'_f) =  W'(I_{1},k',n'_c, n'_f)
         \end{align*}
         
        \item Recursive Step: For $q' \geq 2$, 
         $W(\mathcal{I}_{q'},k',n'_c, n'_f) = "yes"$ if and only if there exist $k''<k'$, $n''_c <n'_c$ and $n''_f <n'_f$
          such that $W'(I_{q'},k'',n''_c, n''_f) = "yes"$ and $W(\mathcal{I}_{q'-1},k'-k'',n'_c-n''_c, n'_f-n''_f) = "yes"$.
       
    \end{itemize}

    \paragraph{(3) Objective Check.} We check, for $k'=0,\ldots k$ and any acceptable vote combination $(n'_c, n'_f) \in A$, whether the answer in $W(\mathcal{I}_{q},k',n'_c, n'_f)$ is "yes".
    If so, then the algorithm asserts "yes"; otherwise it asserts "no".

    \paragraph{Time Complexity.}
    The time complexity is $\mathcal{O}(p\times k \times n^2)$, which is polynomial in the extensive representation, but only pseudo polynomial in the compact representation (where the input size does not depend on $n$).
\end{proof}

\subsection{Adding Opposition Parties}

\Cref{thm:-ccaop-immune} shows that when the chair cares only about the coalition objective ($\rho =0$), achieving it using this type of control is not possible (i.e., the problem is immune).

In contrast, for $\rho >0$, \Cref{thm:aop-general-CFP} shows that the problem is W[2]-hard for general preferences.
\Cref{thm:aop-ssp-CFP-non-con} proves that the problem is NP-hard even for SSP preferences, but only under the compact representation.
However, \Cref{thm:aop-ssp-CFP-con} provides a polynomial-time algorithm for contiguous coalitions ($q=1$). 
Lastly, \Cref{thm:CCFP-AOP-pseudo-polynomial} presents a pseudo-polynomial algorithm under the compact representation.







\begin{theorem}\label{thm:-ccaop-immune}
     \withoutTwithoutLaop\ is immune.
\end{theorem}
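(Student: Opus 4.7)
The plan is to show that adding opposition parties can never increase the total fraction of votes received by the coalition, hence if the coalition objective is not already met by the default running set (permanent parties only), no addition of opposition parties can make it met. This is exactly the definition of immunity for this control action.

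First, I would fix notation. Let $R_0 := \universe \setminus S$ be the set of permanent running parties (the default election), and let $K \subseteq S \setminus C$ be any candidate set of opposition spoilers to add, so the resulting running set is $R := R_0 \cup K$. Since $K$ consists entirely of opposition parties, $C \cap R = C \cap R_0$, i.e., the set of coalition parties actually running is unchanged by the addition.

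Next, I would apply Corollary \ref{cor:adding-parties-decreases} iteratively (or in one shot, since $R_0 \subseteq R$): for every coalition party $p \in C \cap R_0$ that is also in $R$, $\pi(R, p) \leq \pi(R_0, p)$. Summing this inequality over all $p \in C \cap R_0 = C \cap R$ gives
\begin{align*}
\sum_{p \in C \cap R} \pi(R, p) \;\leq\; \sum_{p \in C \cap R_0} \pi(R_0, p).
\end{align*}
Thus the coalition objective \eqref{eq:target-coalition-obj} can hold for $R$ only if it already holds for $R_0$, in which case the chair need not (and indeed cannot benefit from) adding any spoiler. Consequently, whenever the default election fails the coalition objective, so does every election obtainable by adding opposition parties, which establishes immunity.

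The argument is essentially immediate once Corollary \ref{cor:adding-parties-decreases} is in hand; I do not foresee a genuine obstacle. The only subtlety is to emphasize that $K \subseteq S \setminus C$ guarantees $C \cap R = C \cap R_0$, so no new summand is introduced on the left-hand side that could offset the pointwise decreases, and that the corollary applies regardless of voter preferences (so this holds both in the general and the SSP settings, as indicated in the results table).
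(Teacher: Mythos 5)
Your proof is correct and rests on the same underlying idea as the paper's: adding opposition parties leaves the set of running coalition parties unchanged while plurality scores of running parties can only decrease, so the coalition's total fraction cannot increase. The paper carries out the voter-level case analysis inline, whereas you invoke \Cref{cor:adding-parties-decreases} to encapsulate that monotonicity — a slightly more modular write-up of the same argument.
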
 

\begin{proof}
    Each voter gives exactly one point to the party ranked first; hence, adding parties to other positions makes no difference.
    We distinguish between two possible cases:
    \begin{enumerate}
        \item  For a voter whose first-ranked party is not from the coalition, adding a spoiler opposition party to the first position does not change the outcome, as the total points received by opposition parties remain the same.
        \item For a voter whose first-ranked party is from the coalition, adding a spoiler opposition party to the first position reduces the points received by the coalition.  
    \end{enumerate}
    
    This problem concerns only the percentage of points received by the coalition as a whole, rather than by any specific favored party within it.
    As a result, adding spoiler opposition parties can only harm the coalition's overall standing.
\end{proof}


\subsubsection{General Preferences}


\begin{theorem} \label{thm:aop-general-CFP}
     \withoutTwithLaop\ is W[2]-hard.
\end{theorem}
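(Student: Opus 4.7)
The plan is to provide a parameterized reduction from \textsc{Set Cover}, which is W[2]-hard with respect to the cover size parameter. Given an instance $(U, \mathcal{F}, k)$ with $U = \{u_1, \ldots, u_t\}$ and $\mathcal{F} = \{F_1, \ldots, F_m\}$, I would construct a \withoutTwithLaop\ instance as follows. The coalition contains two parties, the favored $p_F$ and a single non-favored coalition party $p_C$; both are permanent, consistent with the requirement that $p_F$ be permanent whenever $\rho > 0$. For each $F_i \in \mathcal{F}$ we introduce one spoiler opposition party $s_i$, so $S = \{s_1, \ldots, s_m\}$. The voter set has two groups: for every element $u_j \in U$ an element-voter $\tilde{u}_j$ whose preference places all spoilers $s_i$ with $u_j \in F_i$ at the very top (in arbitrary internal order), followed by $p_C$, followed by every other party; and a single extra voter $v^\star$ who ranks $p_F$ above every other party in $\universe$. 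The budget is the same $k$, and we set $\varphi := 1/(t+1)$ and $\rho := 1$.

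For the correctness analysis, fix any $K \subseteq S$ and identify it with the corresponding subfamily of $\mathcal{F}$. By the construction of preferences, in the election $\running := (\universe \setminus S) \cup K$, the element-voter $\tilde{u}_j$ switches away from $p_C$ to some $s_i \in K$ (an opposition party) if and only if $u_j \in \bigcup K$, and otherwise it still votes for $p_C$. Voter $v^\star$ always votes for $p_F$, since $p_F$ is ranked above every spoiler in its preference. Hence $N(\running, p_C) = 0$ iff $K$ is a set cover of $U$. When $K$ is a cover of size at most $k$, the coalition receives exactly the single vote of $v^\star$ out of $t+1$ voters, so $\sum_{p \in C \cap \running}\pi(\running, p) = 1/(t+1) = \varphi$ and $\pi(\running, p_F)$ equals the full coalition mass, giving a favored-party ratio of $1 = \rho$; both objectives are met. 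Conversely, if $K$ does not cover $U$, then $p_C$ receives at least one vote, so the favored-party ratio strictly drops below $1 = \rho$ and \eqref{eq:target-leader-obj} fails. Since $k$ is preserved and the construction is polynomial in the Set Cover instance, this is a parameterized reduction, establishing W[2]-hardness of \withoutTwithLaop.

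The main obstacle, and the part deserving careful bookkeeping, is verifying the claim about vote movements under arbitrary $K$. One must check that (i) when several spoilers $s_i$ with $u_j \in F_i$ are simultaneously in $K$, the element-voter $\tilde{u}_j$ still ends up voting for some spoiler, hence an opposition party, so the precise internal order among these spoilers in $\tilde{u}_j$'s preference is irrelevant; (ii) $v^\star$ never defects to any spoiler, which is ensured by placing $p_F$ strictly above all of $S$ in its preference; and (iii) the partial orders specified above are extended to strict total orders over $\universe$ in a way that does not create any unintended top-preference change. These points are routine once the construction is written out explicitly, but they are precisely the details the full proof needs to handle carefully.
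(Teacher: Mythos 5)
Your proposal is correct and takes essentially the same approach as the paper: the paper reduces from Dominating Set (which is just Set Cover over closed neighborhoods $N[s_i]$), using the identical mechanism in which adding opposition spoilers cannot increase $p_F$'s votes, so the favored-party ratio can only be met by diverting every ``element''/vertex voter away from a non-favored coalition party to an added spoiler, which happens exactly when the added spoilers form a cover. Your gadget differs only in inessential parameter choices (a single $v^\star$ with $\rho=1$ versus the paper's padding voters with $\rho=1/2$).
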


The proof is by reduction from the dominating set problem. 

\subsubsection{Symmetric Single-Peaked Preferences}


\begin{theorem} \label{thm:aop-ssp-CFP-non-con}
   For SSP preferences, \withoutTwithLaop\ is NP-hard under the compact representation.
\end{theorem}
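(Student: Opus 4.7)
My plan is to reduce from k-SUBSET-SUM (given positive integers $a_1, \ldots, a_n$ in binary, target $T$, and size $k$, decide whether some $k$-subset sums to exactly $T$), which is NP-hard and fits the compact representation where voter counts can be exponentially large but encoded concisely. This mirrors the approach used for \withoutTwithLacp\ in \Cref{thm:acp-ssp-CFP-non-con}.

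The core idea is to make each spoiler's addition subtract a specific number of votes $a_i$ from the coalition while leaving the favored party untouched. I would arrange parties on $[0,1]$ as follows: place the favored party $p_F$ in its own coalition-interval near $0$, shielded by a permanent opposition-interval whose wall parties sit close enough to $p_F$ that no distant spoiler can draw its voters away. Then, for each $i \in \{1,\ldots,n\}$, create a coalition-interval containing one non-favored coalition party $c_i$, followed by an opposition-interval holding a permanent anchor party $w_i$ and a spoiler $o_i$ placed between $c_i$ and $w_i$. By \Cref{lem:sperate-opposition-intervals}, adding $o_i$ affects voters only within its own opposition-interval, and by \Cref{lemma:divider-meaning} I can control exactly which SSP types switch their top choice to $o_i$ when it is added. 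I populate the corresponding "switching band" with exactly $a_i$ voters and add a large baseline of voters supporting $p_F$ and each $c_i$ that are insensitive to any spoiler choice.

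With this construction, adding a subset $K \subseteq \{o_1,\ldots,o_n\}$ with $\sum_{o_i \in K} a_i = s$ leaves the coalition with $F + S - s$ votes while $p_F$ keeps its fixed count $F$. I then set $\varphi$ and $\rho$ so that the coalition objective $(F+S-s)/n \geq \varphi$ becomes $s \leq T$, and the favored-party objective $F \geq \rho(F+S-s)$ becomes $s \geq T$. Together these force $s = T$, and the control budget $k$ translates directly into the size constraint on $K$. Thus a valid control exists iff the k-SUBSET-SUM instance is a yes-instance.

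The main obstacle is the SSP geometric calibration: since the $m^2+1$ voter types are fully determined by the party positions, I must place $c_i$, $o_i$, and $w_i$ so that each spoiler's switching band is a distinct, non-overlapping type interval that can be populated by exactly $a_i$ voters; so that no spoiler's influence can reach $p_F$'s coalition-interval; and so that the baseline voters' top choices are stable under every choice of spoilers added. This requires spacing the $(c_i, w_i, o_i)$ triples widely along $[0,1]$ relative to their internal width, and relying on the permanent anchors $w_i$ — together with \Cref{lemma:add_change_to_added_or_not_change} — to contain each spoiler's effect to its own triple. Once the geometry is fixed, both directions of the equivalence follow from a direct vote-counting argument on the compact representation.
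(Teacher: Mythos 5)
Your proposal is essentially the paper's own proof: both reduce from $k$-subset-sum, encode each integer $a_i$ as an opposition spoiler whose addition siphons exactly $a_i$ votes from the coalition while the favored party's count stays fixed, and choose $\varphi$ and $\rho$ so that the two objectives sandwich the total siphoned votes to equal the target. The paper achieves the geometric calibration more simply -- it places the $a_i$ voters' peaks exactly at the spoiler $s_i$, with the nearest permanent party being a coalition party $b_i$, so no anchor parties or shielding intervals are needed -- but this is a simplification of, not a departure from, your construction.
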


The proof is by reduction from the k-subset sum problem.  




\begin{theorem}\label{thm:aop-ssp-CFP-con}
     For SSP preferences, \withoutTwithLaop\ is polynomial-time solvable for contiguous coalitions ($q=1$).
\end{theorem}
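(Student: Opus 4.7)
The plan is to adapt the algorithm of Theorem~\ref{thm:acp-ssp-CFP-con} to the opposition-adding setting, exploiting the fact that when the coalition is contiguous ($q=1$) there are only two (non-trivial) opposition-intervals, namely the ones lying to the left and to the right of the single coalition-interval. The algorithm enumerates every subset $K \subseteq S$ of size at most $\min(k,4)$ that contains at most two spoilers from each opposition-interval, and for each candidate $K$ it checks in polynomial time whether both objectives are met when the running set is $R(K):=(P\setminus S)\cup K$. Since there are only $O(|S|^4)$ such subsets, the overall runtime is polynomial.

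Correctness in the ``yes'' direction is immediate. For the ``no'' direction I argue by contradiction: assume some $K\subseteq S$ with $|K|\le k$ meets both objectives, and write $K=K_1\cup K_2$, where $K_i$ is the portion of $K$ lying in the $i$-th opposition-interval. Applying Lemma~\ref{lemma:adding-opposition-two} separately to $K_1$ and to $K_2$ yields $K'_1\subseteq K_1$ and $K'_2\subseteq K_2$, each of size at most $2$, such that the coalition vote total within each interval (measured in isolation) is preserved.

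The next step is to stitch the two local reductions into a single global reduction $K':=K'_1\cup K'_2$ that preserves the full coalition vote total obtained under $K$. For this I invoke Lemma~\ref{lem:sperate-opposition-intervals}, which states that the marginal change any coalition party's vote share undergoes when a spoiler is added in one opposition-interval is independent of which spoilers have been added in the other interval. Applied inductively, one swap at a time, as $K_1$ is transformed into $K'_1$ and $K_2$ into $K'_2$, this yields $\sum_{p\in C\cap R(K')}\pi(R(K'),p)=\sum_{p\in C\cap R(K)}\pi(R(K),p)$, so $K'$ meets the coalition objective. Moreover, $K'\subseteq K$ implies $R(K')\subseteq R(K)$, and by Corollary~\ref{cor:adding-parties-decreases} the vote share of $p_F$ under $K'$ is at least its share under $K$; combined with the preserved coalition total, the favored-party ratio can only improve, so $K'$ also meets the favored-party objective. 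But $|K'|\le 4$ and $|K'|\le|K|\le k$, so the algorithm must have inspected $K'$ and asserted ``yes'', a contradiction.

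The main obstacle I anticipate is the cross-interval stitching: Lemma~\ref{lem:sperate-opposition-intervals} is stated only for single-party additions, so I need to promote it to arbitrary subset replacements on both sides via a short induction on $|K_1\setminus K'_1|+|K_2\setminus K'_2|$, handling insertions and deletions symmetrically by rearranging the lemma's equality. A minor bookkeeping point is that the $\min(k,4)$ cap in the enumeration still captures the witness $K'$ when $k<4$, since $|K'|\le|K|\le k$ holds directly.
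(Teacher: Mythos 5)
Your proposal is correct, but it takes a genuinely different route from the paper's. The paper exploits the geometry of the contiguous case more aggressively: since the two opposition-intervals $I^O_0$ and $I^O_1$ flank the single coalition-interval, it enumerates subsets with at most \emph{one} spoiler per interval (so $O(m^2)$ candidates rather than your $O(|S|^4)$), and for the witness it takes only the spoiler \emph{closest to the coalition} on each side ($o_R$, the rightmost added spoiler in $I^O_0$, and $o_L$, the leftmost in $I^O_1$). It then gives a single, self-contained voter-level argument showing that \emph{every individual coalition party's} vote count is exactly preserved when passing from $K$ to $K'$, which yields both objectives at once with no appeal to \Cref{lemma:adding-opposition-two}, \Cref{lem:sperate-opposition-intervals}, or \Cref{cor:adding-parties-decreases}. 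Your route instead reduces each interval via \Cref{lemma:adding-opposition-two} (which is designed for interior opposition-intervals where both endpoints matter, hence the bound of two per interval) and then stitches the two local reductions together; as you correctly flag, this requires promoting \Cref{lem:sperate-opposition-intervals} from single-party additions to set replacements by a telescoping induction, a step the paper never needs. The induction is routine (all coalition parties are permanent in the AOP setting, so the lemma's hypothesis $C \subseteq R$ holds throughout), and your handling of the favored-party objective via \Cref{cor:adding-parties-decreases} together with preservation of the coalition total is sound. What your approach buys is modularity — it is the same decompose-and-recombine pattern that underlies the pseudo-polynomial algorithm of \Cref{thm:CCFP-AOP-pseudo-polynomial} — at the cost of a larger (still polynomial) enumeration and an extra lemma generalization; the paper's argument is tighter but specific to $q=1$.
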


\begin{proofsketch}
    We prove that it is sufficient to consider any subset $S' \subseteq S$ with $|S'| \leq k$, $|S' \cap I^O_0 | \leq 1$ and $|S' \cap I^O_1 | \leq 1$ -- that is, at most one party is added from the left of the coalition and at most one from the right. 
    For each subset, we check if it satisfies the objectives. 
    If it does, the algorithm stops and asserts "yes". 
    If no subset satisfies the objectives, the algorithm asserts "no". 
\end{proofsketch}

\begin{theorem}\label{thm:CCFP-AOP-pseudo-polynomial}
    For SSP preferences, \withoutTwithLaop\ admits a pseudo-polynomial algorithm for the compact representation.
\end{theorem}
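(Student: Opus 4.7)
The plan is to adapt the template of the proof of Theorem \ref{thm:CCFP-ACP-pseudo-polynomial}, swapping the roles of coalition-intervals and opposition-intervals and invoking the opposition-side structural lemmas. The two driving ingredients are \Cref{lemma:adding-opposition-two}, which guarantees that within any single opposition-interval it suffices to consider subsets of size at most two, and \Cref{lem:sperate-opposition-intervals}, whose iterated application ensures that the marginal effects of additions in distinct opposition-intervals on the coalition's vote counts decompose additively. Together these make a dynamic program over opposition-intervals feasible.

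First, I would compute the baseline coalition vote count $n_c^0$ and favored-party vote count $n_f^0$ under $R_0 := P \setminus S$ (in which no spoiler runs), and define the set of acceptable targets
\[
A := \bigl\{(n_c,n_f) \in \{0,\ldots,n\}^2 \;\bigm|\; n_c/n \geq \varphi \text{ and } n_f \geq \rho \cdot n_c\bigr\}.
\]
Then, for every opposition-interval $I^O_j$, every $k' \in \{0,1,2\}$, and every pair of vote-count shifts $(\delta_c,\delta_f) \in \{-n,\ldots,n\}^2$, I would tabulate $W'(I^O_j,k',\delta_c,\delta_f)$, a Boolean indicating whether some subset of $S \cap I^O_j$ of size at most $k'$ yields exactly these shifts relative to $R_0$. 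By \Cref{lemma:adding-opposition-two} each such table entry is computed by inspecting at most $O(|I^O_j|^2)$ subsets, which is polynomial.

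Next, I would run a dynamic program $W(\mathcal{I}^O_j,k',\delta_c,\delta_f)$ over prefixes $\mathcal{I}^O_j := \{I^O_0,\ldots,I^O_j\}$ of the opposition-intervals, with the base case $W(\mathcal{I}^O_0,k',\delta_c,\delta_f) = W'(I^O_0,k',\delta_c,\delta_f)$ and the recursion
\[
W(\mathcal{I}^O_j,k',\delta_c,\delta_f) = \bigvee_{\substack{0 \le k'' \le \min(2,k') \\ |\delta''_c|,|\delta''_f| \le n}} \bigl[W'(I^O_j,k'',\delta''_c,\delta''_f) \wedge W(\mathcal{I}^O_{j-1},k'-k'',\delta_c-\delta''_c,\delta_f-\delta''_f)\bigr].
\]
Correctness of this decomposition is where the key technical care lies and rests on \Cref{lem:sperate-opposition-intervals}: iterating that lemma proves that for disjoint subsets $K_0 \subseteq I^O_0, \ldots, K_q \subseteq I^O_q$ the total shift in $\pi(\cdot,c)$ for each $c \in C$ is the sum of the per-interval shifts, and similarly for $p_F$. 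Finally, I would assert \emph{yes} iff some entry $W(\mathcal{I}^O_{q+1},k',\delta_c,\delta_f)$ with $k' \le k$ is true for a pair producing $(n_c^0+\delta_c,n_f^0+\delta_f) \in A$.

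The running time is $O(q \cdot k \cdot n^2)$, which is pseudo-polynomial in the compact representation (where $n$ is exponential in the input size) and polynomial in the extensive one. The main obstacle I foresee is carrying out the inductive argument that lifts the two-interval additive decomposition of \Cref{lem:sperate-opposition-intervals} to an arbitrary number of opposition-intervals while simultaneously tracking both the coalition total and the favored party's individual count; the latter requires checking that $p_F$ behaves like a generic $c \in C$ in that lemma, so that its vote count shifts also aggregate independently across opposition-intervals.
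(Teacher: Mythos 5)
Your proposal is correct and follows essentially the same route as the paper: a dynamic program over opposition-intervals whose per-interval tables are bounded via \Cref{lemma:adding-opposition-two} and whose additive decomposition across intervals rests on iterating \Cref{lem:sperate-opposition-intervals}, followed by a membership check against the set of acceptable vote combinations, yielding $O(q\cdot k\cdot n^2)$ time. The extra care you flag (lifting the two-interval independence to all intervals and noting that $p_F$ is covered since the lemma holds for every $c\in C$) is exactly the implicit content of the paper's sketch, so there is no substantive divergence.
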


\begin{proofsketch}
    The algorithm is similar to the one presented in \Cref{thm:CCFP-ACP-pseudo-polynomial}; only that the dynamic programming check the possible combination of adding at most $2$ parties in each opposition-interval while the total size is at most $k$.    
    The correctness follows from \Cref{lem:sperate-opposition-intervals}, which ensures that different opposition-intervals can be treated separately; and \Cref{lemma:adding-opposition-two}, which ensures that it suffices to consider only subsets of size at most $2$ in each opposition-interval.    
\end{proofsketch}

\section{Control by Deleting Parties}\label{sec:deleting}
\subsection{Technical Proofs for SSP}
By Lemma~\ref{lemma:move-to-nerest_delete}, deleting a party can only transfer its votes to one of its adjacent parties; hence in order to move votes from the coalition to the opposition and vice versa, the deleted parties must be adjacent to one end of the interval. Formally:
\begin{lemma}\label{lemma:most-adjacent-only}
    Let $I$ be an opposition (coalition) interval.
    For any $K\subseteq P\cap I$, there exists $K'\subseteq K$ such that: (1) The parties in $K'$ are adjacent to one end of the interval. Specifically, there is no party $k_i \in K'$ with the two parties on its left and right sides are opposition (coalition) parties not in $K'$.
    (2) Let $R(K'):= P \setminus K'$ and $R(K):= \universe \setminus K$:
    \begin{align*}
        \sum_{p \in C} \pi(R(K), p) = \sum_{p \in C } \pi(R(K'), p) 
    \end{align*}
\end{lemma}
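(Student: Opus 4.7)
The plan is to prove the lemma by iteratively removing \emph{interior} parties from $K$, where a party $k_i \in K$ is interior if both of its immediate neighbors in the common left-to-right ordering of $P$, say $p_{i-1}$ and $p_{i+1}$, lie in the interval $I$ and are \emph{not} in $K$. Repeated removal of such parties must terminate, and the resulting set $K' \subseteq K$ has no interior parties, which is exactly condition (1). The whole argument then hinges on showing that each single removal preserves the coalition's total vote count, giving condition (2).

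For the key step, fix an interior $k_i \in K$ and set $K'' := K \setminus \{k_i\}$, so that $R(K'') = R(K) \cup \{k_i\}$. By Lemma~\ref{lemma:add_change_to_added_or_not_change}, adding $k_i$ to $R(K)$ can only shift a voter's top preference to $k_i$ itself, leaving every other top choice unchanged. I would then argue that any voter $v$ who switches to $k_i$ must have peak inside $[p_{i-1}, p_{i+1}]$: if $v < p_{i-1}$ then $p_{i-1}$ (already present in $R(K)$) is strictly closer to $v$ than $k_i$, and symmetrically if $v > p_{i+1}$, so such a voter could not possibly prefer $k_i$ to their current top. Applying Lemma~\ref{lemma:divider-meaning} to voters with peak in $[p_{i-1}, p_{i+1}]$, and noting that $k_i$ is the only party of $P$ strictly between $p_{i-1}$ and $p_{i+1}$, one concludes that such a voter's previous top $T_v(R(K))$ must itself be $p_{i-1}$ or $p_{i+1}$.

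Because $k_i$ is interior, the three parties $p_{i-1}$, $k_i$, $p_{i+1}$ all lie in the same interval $I$. So if $I$ is an opposition interval, all three are opposition parties and any switch stays within the opposition; and if $I$ is a coalition interval, all three are coalition parties and any switch stays within the coalition. In either case, $\sum_{p \in C} \pi(R(K), p) = \sum_{p \in C} \pi(R(K''), p)$. Iterating this argument until no interior parties remain yields the desired $K'$, and a telescoping chain of equalities gives the vote-preservation statement.

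The main obstacle is the geometric characterization of the switching voters in the key step: one must rule out that a voter with peak outside $[p_{i-1}, p_{i+1}]$ can switch to $k_i$, and then pin down the previous top as one of $p_{i-1}, p_{i+1}$. Both facts follow from the SSP assumption combined with Lemma~\ref{lemma:divider-meaning}, though boundary cases (where $k_i$ sits at the leftmost or rightmost position of $P$) deserve some attention; however, such a $k_i$ is never interior in the first place, so those cases never invoke the argument and the induction goes through unchanged.
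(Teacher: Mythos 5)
Your proposal is correct and follows essentially the same route as the paper, whose proof is a one-sentence appeal to Lemma~\ref{lemma:move-to-nerest_delete}: deleting a party flanked on both sides by same-type, non-deleted parties only shuffles votes within the same bloc, so such ``interior'' deletions can be dropped from $K$. Your write-up supplies the details the paper omits (the add-back framing via Lemma~\ref{lemma:add_change_to_added_or_not_change}, the divider argument pinning the switching voters' previous top to $p_{i-1}$ or $p_{i+1}$, and the explicit termination of the iteration), but the underlying idea is identical.
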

\begin{proof}
    By Lemma~\ref{lemma:move-to-nerest_delete}, deleting an opposition (coalition) party surrounded by other opposition (coalition) parties on both sides will transfer the votes to an opposition (coalition) parties, resulting in the same change to the sizes of the coalition and the opposition, as the group without that party.
\end{proof}

\subsection{Deleting Coalition Parties}

\Cref{thm:-ccdcp-immune} proves that when $\rho =0$, thr problem is immune. 
In contract, when a favored party is also taken under consideration ($\rho >0$), the objectives can be achieved.

For $\rho >0$, the problem is W[1]-hard due to \Cref{thm:dcp-general-CFP}. 
\Cref{thm:dcp-ssp-CFP-non-con} proves that the problem is NP-hard even for SSP preferences, but only under the compact representation. 
\Cref{thm:dcp-ssp-CFP-con} shows that the problem is polynomial-time solvable for contiguous coalitions ($q=1$). 
\Cref{thm:CCFP-dcp-pseudo-polynomial} presents a pseudo-polynomial algorithm for the compact representation, which is a polynomial-time for the extensive one.



\begin{theorem}\label{thm:-ccdcp-immune}
     \withoutTwithoutLdcp\ is immune.
\end{theorem}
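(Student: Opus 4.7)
The natural strategy mirrors the proof of Theorem~\ref{thm:-ccaop-immune}: show that the only control action available (deleting coalition parties) is monotone-non-increasing for the coalition's total vote share, so if the coalition objective fails without any deletion it continues to fail after any admissible deletion.

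\textbf{Key steps.} First, I would fix an election $R$ with $(P\setminus C)\subseteq R$ (so all opposition parties currently run) and pick an arbitrary coalition party $p\in C\cap R$ to delete, obtaining $R':=R\setminus\{p\}$. Partition the voters into two groups: those with $T_v(R)\neq p$ and those with $T_v(R)=p$. For a voter in the first group, since $p$ was not their top in $R$, removing $p$ cannot change their top, so $T_v(R')=T_v(R)$ and their contribution to $\sum_{p'\in C\cap R}\pi(R,p')$ is unchanged. For a voter in the second group, they previously contributed one unit to the coalition total (since $p\in C$); after deletion, $T_v(R')$ is some other running party, which is either in $C\cap R'$ (unchanged contribution) or in the opposition (contribution lost). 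Summing over voters gives
\begin{equation*}
\sum_{p'\in C\cap R'}\pi(R',p')\;\leq\;\sum_{p'\in C\cap R}\pi(R,p').
\end{equation*}

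Next, I would iterate this inequality across an arbitrary deletion set $K\subseteq C$ with $|K|\leq k$: deleting the parties of $K$ one at a time yields a chain of inequalities, each justified by the single-deletion argument, so the coalition's fraction under $R_{\mathrm{final}}=P\setminus K$ is at most the coalition's fraction under the no-deletion election $R=P$. Therefore, if $\sum_{p\in C}\pi(P,p)<\varphi$, then $\sum_{p\in C\cap R_{\mathrm{final}}}\pi(R_{\mathrm{final}},p)<\varphi$ for every admissible $K$, and in this case no deletion of coalition parties can make the chair's coalition objective satisfied. Conversely, if the inequality already holds at $R=P$, there is no need to delete at all. Either way, deletion of coalition parties is never the enabling action, which is precisely the definition of immunity used in Theorem~\ref{thm:-ccaop-immune}.

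\textbf{Main obstacle.} Conceptually the argument is essentially self-evident and there is no real combinatorial obstacle; the only thing one has to be careful about is the observation that removing a non-top party cannot change a voter's top, which is a direct consequence of the definition of $T_v(\cdot)$ and is already implicit in Lemma~\ref{lemma:move-to-nerest_delete}. The bookkeeping on which side of the coalition/opposition split a displaced voter lands on is routine, so the proof should be very short, fully parallel in structure to the proof of Theorem~\ref{thm:-ccaop-immune}.
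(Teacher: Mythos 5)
Your proposal is correct and follows essentially the same route as the paper's proof: a two-case analysis of voters (those whose top party is the deleted coalition party versus the rest) showing that deleting coalition parties can only keep or decrease the coalition's total vote share, hence can never be the enabling action for \objC. Your version merely adds the explicit single-deletion-then-iterate bookkeeping, which the paper leaves implicit.
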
 
\begin{proofsketch}
Each voter awards one point to their top-ranked party. Deleting coalition parties does not affect the points received by opposition parties for voters whose first-ranked party is outside the coalition. For voters whose first-ranked party is in the coalition, deleting that party may decrease the coalition's points if their next preference is an opposition party. Since the problem concerns the coalition's total percentage of points rather than any specific favored party, deleting opposition parties can only reduce the coalition's overall standing. 
\end{proofsketch}
More details of the proof are provided in the appendix. 

\subsubsection{General Preferences}


\begin{theorem} \label{thm:dcp-general-CFP}
    \withoutTwithLdcp\ is W[1]-hard.
\end{theorem}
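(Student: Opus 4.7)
The plan is to reduce from the $k$-\textsc{Clique} problem, which is W[1]-hard under its natural parameterization. Given an instance $(G=(V_G,E_G),k)$ of \textsc{Clique}, I would build a CCFP-DCP instance in which the deletion budget is exactly $k$, the coalition objective is trivially achievable, and the favored-party threshold is tuned so that it can be met only by deleting the coalition parties corresponding to a $k$-clique of $G$.

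For the construction, introduce one coalition party $c_v$ for each vertex $v \in V_G$, the favored party $p_F$ (also in the coalition), and a single opposition party $o$. For each vertex $v$, add a \emph{vertex voter} with preference $c_v \succ p_F \succ o \succ \ldots$; for each edge $\{u,v\} \in E_G$, add two \emph{edge voters} with preferences $c_u \succ c_v \succ p_F \succ o \succ \ldots$ and $c_v \succ c_u \succ p_F \succ o \succ \ldots$, respectively. Let $N := |V_G| + 2|E_G|$; set $\varphi := 1$ and $\rho := k^2 / N$ (padding with isolated dummy vertices, which cannot belong to any new $k$-clique, to ensure $\rho \leq 1$ if needed). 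Since no voter ranks $o$ first, the coalition keeps all $N$ votes regardless of what is deleted, so \eqref{eq:target-coalition-obj} is automatic and only the favored-party constraint is active.

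The correctness argument rests on a simple accounting. Because $\rho > 0$, \eqref{eq:target-leader-obj} forces $p_F \in R$, so any valid deletion $K$ consists only of vertex parties, say $K = \{c_{v_1}, \ldots, c_{v_j}\}$ with $j \leq k$. Each vertex voter of a deleted vertex rolls forward to $p_F$, contributing $j$ votes, while each edge voter on an edge with \emph{both} endpoints in $K$ also rolls forward to $p_F$, contributing $2 e(K)$ votes, where $e(K)$ is the number of edges inside $K$. Every other voter stays on some surviving $c_v$. Coalition votes remain $N$, so \eqref{eq:target-leader-obj} reduces to $j + 2 e(K) \geq k^2$. Since $e(K) \leq \binom{j}{2}$, the left side is at most $j + j(j-1) = j^2 \leq k^2$, with equality iff $j = k$ and $K$ induces a clique. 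Hence a valid deletion exists iff $G$ has a $k$-clique.

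The main obstacle I foresee is making sure the threshold is tight enough to rule out all ``shortcuts''—for instance a size-$j<k$ set with many internal edges, or a size-$k$ set missing just one edge. Both are closed by the strict monotonicity of $j \mapsto j^2$ and the exact edge bound $\binom{j}{2}$: any deviation from a $k$-clique loses at least $2$ votes relative to the target. Since the reduction is polynomial in $|G|$ and the target parameter (deletion budget) equals the source parameter $k$, it is an FPT reduction, establishing W[1]-hardness of \withoutTwithLdcp.
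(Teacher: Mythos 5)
Your proposal is correct and follows essentially the same route as the paper: a reduction from $k$-clique with one coalition party per vertex and edge voters who rank both endpoints above the favored party, so that an edge voter's vote reaches $p_F$ only when both endpoint parties are deleted. Your bookkeeping is slightly different (two symmetric edge voters per edge plus vertex voters, giving the clean identity $j + 2e(K) \leq j^2 \leq k^2$ with equality exactly at a $k$-clique, and $\varphi = 1$ made trivial), and is in fact tighter than the paper's count, which writes $k^2$ where $\binom{k}{2}$ is meant.
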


The proof is by reduction from the k-clique problem.

\subsubsection{Symmetric Single-Peaked Preferences}



\begin{theorem} \label{thm:dcp-ssp-CFP-non-con}
   For SSP preferences, \withoutTwithLdcp\ is NP-hard  under the compact representation.
\end{theorem}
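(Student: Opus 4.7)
The plan is to reduce from the k-subset-sum problem, paralleling the strategy used for Theorems \ref{thm:acp-ssp-CFP-non-con} and \ref{thm:aop-ssp-CFP-non-con}. Given an instance $(a_1,\ldots,a_m,T,k)$, which asks whether some $S\subseteq[m]$ of size $k$ satisfies $\sum_{i\in S}a_i=T$, I will build an SSP election in which deleting each designated coalition party $c_i$ shifts exactly $a_i$ coalition votes to the opposition, while the favored party's vote count remains invariant. The two objectives will then force the total lost coalition mass to be \emph{exactly} $T$.

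The construction places $m+1$ disjoint coalition-intervals on $[0,1]$ separated by opposition-intervals, so the coalition is non-contiguous (this is necessary since Theorem \ref{thm:dcp-ssp-CFP-con} rules out hardness when $q=1$). One distinguished coalition-interval houses the favored party $p_F$, together with a block of voters whose peaks sit so close to $p_F$ that, by Lemma~\ref{lemma:divider-meaning}, $p_F$ is strictly preferred over every other running party regardless of which parties elsewhere are deleted; consequently $p_F$'s vote count is a fixed constant throughout. Each of the remaining $m$ ``element gadgets'' contains a single coalition party $c_i$ positioned between two opposition parties, together with $a_i$ voters whose peak is placed so that (a) $c_i$ is currently their top choice, and (b) by Lemma~\ref{lemma:move-to-nerest_delete}, once $c_i$ is removed their vote transfers to the adjacent opposition party rather than to any coalition party in a neighboring interval. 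Thus the map $S\mapsto\{c_i:i\in S\}$ translates element selections into coalition-party deletions that lose exactly $\sum_{i\in S}a_i$ coalition votes.

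Tune $\varphi$ and $\rho$ so that OBJ-C forces $\sum_{i\in S}a_i\le T$ and OBJ-FP forces $\sum_{i\in S}a_i\ge T$; jointly, they select sums of exactly $T$. Size can be pinned down to $k$ by calibrating the baseline coalition mass so that any deletion set of size $<k$ leaves the coalition too large for $p_F$'s share to reach $\rho$. Under the compact representation, each of the $\mathcal{O}(m^2)$ SSP types is listed with its count, so specifying $a_i$ voters of one type contributes only $\mathcal{O}(\log a_i)$ bits; the construction is therefore polynomial in the k-subset-sum input.

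The main obstacle will be ruling out \emph{unintended} deletion strategies: deleting a coalition party inside the favored interval, deleting a non-boundary party of a gadget, or deleting a mix that fortuitously yields a feasible vote profile. Lemma~\ref{lemma:most-adjacent-only} restricts the useful deletions in each interval to its boundary parties, and by positioning each gadget's opposition parties tightly around $c_i$ (and by padding with voters of additional types so that non-designated boundary deletions either overshoot or undershoot the $T$ threshold) every alternative deletion pattern will violate at least one of OBJ-C or OBJ-FP. Verifying this combinatorially --- in particular, that the gadgets do not interact through shared voter types and that the favored-interval voters are insensitive to deletions outside that interval (cf.\ Lemma~\ref{lem:sperate-intervals}) --- is the most delicate step of the proof.
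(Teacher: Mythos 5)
Your proposal matches the paper's proof in all essentials: both reduce from k-subset-sum by building a non-contiguous SSP instance in which the favored party's vote count is frozen, each element $a_i$ is realized by a coalition party whose deletion ships exactly $a_i$ votes to an adjacent opposition party, and the thresholds $\varphi$ and $\rho$ squeeze the total transferred mass to exactly the target. The paper's construction is just a leaner version of yours (one opposition neighbor $s_i$ per coalition party $b_i$, $\rho=\tfrac12$, and a block of $\bar\tau=\sum_i a_i-\tau$ voters at $p_F$), and it resolves your ``unintended strategies'' worry purely by the vote-accounting argument rather than by extra padding.
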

The proof is by reduction from the $k$-subset sum problem.




\begin{theorem} \label{thm:dcp-ssp-CFP-con}
    For SSP preferences, \withoutTwithLdcp\ is polynomial-time solvable for contiguous coalitions ($q=1$).
\end{theorem}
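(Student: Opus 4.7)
The plan is to exploit two structural facts that hold for SSP preferences with a contiguous coalition and then enumerate over a small family of configurations. Given a feasible deletion set $K \subseteq C$ with $p_F \notin K$, let $R_C := C \setminus K$, $c_{L'} := \min R_C$, $c_{R'} := \max R_C$, and let $p_{\mathrm{prev}}$, $p_{\mathrm{next}}$ denote the running parties immediately preceding and following $p_F$ in position; here $p_{\mathrm{prev}}$ is either an element of $R_C$ less than $p_F$ or, if no such element exists, the opposition party $o_L$ immediately adjacent to the coalition interval on its left (symmetrically for $p_{\mathrm{next}}$ and $o_R$). Applying Lemma~\ref{lemma:divider-meaning} at each relevant boundary yields: (i) a voter contributes to the coalition iff its peak lies in the open window between $(o_L + c_{L'})/2$ and $(c_{R'} + o_R)/2$, so the total coalition vote is a function of $(c_{L'}, c_{R'})$ alone; and (ii) a voter contributes to $p_F$ iff its peak lies in the open window between $(p_{\mathrm{prev}} + p_F)/2$ and $(p_F + p_{\mathrm{next}})/2$, a function of $(p_{\mathrm{prev}}, p_{\mathrm{next}})$ alone. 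Contiguity of the coalition and the fact that DCP deletes only coalition parties are both essential for (i), since they guarantee that the opposition parties flanking $R_C$ are always $o_L$ and $o_R$.

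Given (i) and (ii), I would enumerate every admissible 4-tuple $(c_{L'}, p_{\mathrm{prev}}, p_{\mathrm{next}}, c_{R'})$ satisfying $c_{L'} \le p_{\mathrm{prev}} < p_F < p_{\mathrm{next}} \le c_{R'}$, with the sentinel choices $p_{\mathrm{prev}} := o_L$ when $c_{L'} = p_F$ and $p_{\mathrm{next}} := o_R$ when $c_{R'} = p_F$. There are at most $O(m^4)$ such tuples. For each, I would (a) read off the coalition vote and $p_F$'s vote from the explicit windows above, and (b) compute the minimum number of deletions needed to realize the tuple, namely the coalition parties outside $[c_{L'}, c_{R'}]$ together with those strictly between $p_{\mathrm{prev}}$ and $p_F$ and strictly between $p_F$ and $p_{\mathrm{next}}$; every remaining coalition party, i.e.\ those in $[c_{L'}, p_{\mathrm{prev}}] \cup [p_{\mathrm{next}}, c_{R'}]$, can be retained without altering either vote count. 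The algorithm asserts ``yes'' iff some tuple jointly satisfies OBJ-C, OBJ-FP, and the budget of at most $k$ deletions.

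Correctness is then routine: soundness holds because an accepted tuple exhibits an explicit deletion set of size at most $k$ whose vote counts meet both objectives; completeness holds because any feasible $K$ induces a 4-tuple for which the algorithm's computed vote counts are correct by (i)-(ii) and whose minimum deletion count is at most $|K| \le k$, so the tuple is accepted. The main obstacle is the careful treatment of degenerate configurations: $p_F$ coinciding with $c_{L'}$ or $c_{R'}$; the coalition interval touching $0$ or $1$, so that $o_L$ or $o_R$ is undefined and the corresponding half-window must be taken to extend to the boundary of $[0,1]$; and situations in which $p_{\mathrm{prev}} = c_{L'}$ or $p_{\mathrm{next}} = c_{R'}$, making the ``middle'' sets degenerate. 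All of these are handled uniformly by introducing sentinel values and treating the corresponding empty sets consistently in the vote and deletion counts.
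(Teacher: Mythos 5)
Your proposal is correct and follows essentially the same route as the paper's proof: for a contiguous coalition, only contiguous deletions at the two endpoints of the coalition-interval and on the two sides of $p_F$ matter, so one enumerates a polynomial family of configurations and checks both objectives for each. The only (cosmetic) difference is that you parametrize configurations by the resulting boundary parties $(c_{L'}, p_{\mathrm{prev}}, p_{\mathrm{next}}, c_{R'})$ and justify the reduction directly via Lemma~\ref{lemma:divider-meaning}, whereas the paper parametrizes by how the $k$ deletions are split among the four locations and invokes Lemma~\ref{lemma:most-adjacent-only}; the two parametrizations are equivalent, and your version spells out the vote-count formulas more explicitly.
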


\begin{proofsketch}
    Deleting coalition parties can either shift votes to the favored party or reduce the coalition's size, which might help achieve the favored-party objective. Thus, we consider two specific types of deletions:
    (1) deleting parties near the favored party, and (2) deleting parties near the two endpoints of the coalition-interval; causing some or all of their votes to shift to the opposition.
    By Lemma~\ref{lemma:most-adjacent-only}, it suffices to consider continuous deletions within each of these types. 

    Given that $q=1$, the number of locations to consider for deletions is limited to $4$.
    We can delete up to $k$ parties. For each $0\leq k' \leq k$, we evaluate all possible ways to partition the $k'$  deletions among these $4$ locations. If any such partition satisfies both objectives, the algorithm returns "yes" Otherwise, it returns "no".

    Since the number of possible partitions of the deleted parties that need to be evaluated is polynomial, the algorithm runs in polynomial time.
\end{proofsketch}

    



\begin{theorem}\label{thm:CCFP-dcp-pseudo-polynomial}
    For SSP preferences, \withoutTwithLdcp\ admits a pseudo-polynomial algorithm for the compact representation.
\end{theorem}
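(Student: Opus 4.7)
The plan is to adapt the dynamic programming approach of \Cref{thm:CCFP-ACP-pseudo-polynomial} to the deletion setting, using the structural insight behind the contiguous proof of \Cref{thm:dcp-ssp-CFP-con} to bound the work inside each coalition-interval. Let $A$ denote the set of vote pairs $(n_c, n_f)$ that simultaneously satisfy the coalition and favored-party objectives, exactly as in \Cref{thm:CCFP-ACP-pseudo-polynomial}; the question is whether some $(n_c, n_f) \in A$ can be realized by deleting at most $k$ coalition parties. The first observation is a deletion analogue of \Cref{cor:interval-adding}: iterating \Cref{lem:sperate-intervals} shows that deletions restricted to one coalition-interval do not alter the vote share of any party lying in a different coalition-interval. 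This independence justifies a DP across coalition-intervals, exactly as in the adding-parties case.

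Next, for each coalition-interval $I_{q'}$ and each deletion budget $k'$, I would compute a Boolean table $W'(I_{q'}, k', n_c', n_f')$ indicating whether some deletion of at most $k'$ parties from $I_{q'}$ produces exactly $n_c'$ coalition votes and $n_f'$ favored-party votes inside that interval (with $n_f' = 0$ when $p_F \notin I_{q'}$). By \Cref{lemma:most-adjacent-only}, every useful deletion consists of contiguous blocks adjacent either to one of the two endpoints of $I_{q'}$ or, in the unique interval containing $p_F$, to $p_F$ itself on either side. This leaves only $O(k^2)$ relevant deletion patterns for intervals not containing $p_F$ and $O(k^4)$ patterns for the interval that does; for each pattern the resulting pair $(n_c', n_f')$ can be read off the compact type representation in polynomial time by applying \Cref{lemma:move-to-nerest_delete} to trace where the displaced voters' top preferences land.

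I would then combine the interval tables by the usual DP: set $W(\mathcal{I}_1, k', n_c, n_f) := W'(I_1, k', n_c, n_f)$, and for $q' \geq 2$ set $W(\mathcal{I}_{q'}, k', n_c, n_f)$ to ``yes'' iff there exist $k'' \leq k'$, $n_c'' \leq n_c$ and $n_f'' \leq n_f$ such that both $W'(I_{q'}, k'', n_c'', n_f'')$ and $W(\mathcal{I}_{q'-1}, k' - k'', n_c - n_c'', n_f - n_f'')$ equal ``yes''. The algorithm asserts ``yes'' iff some $(n_c, n_f) \in A$ satisfies $W(\mathcal{I}_q, k', n_c, n_f) = \text{yes}$ for some $k' \leq k$. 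With $O(q \cdot k \cdot n^2)$ table entries and $O(k \cdot n^2)$ work per transition, the total running time is polynomial in $n$, $k$, and $q$: polynomial in the extensive representation and pseudo-polynomial in the compact one, where the input size does not depend on $n$.

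The step I expect to be the main obstacle is the per-interval computation for the coalition-interval containing $p_F$, since deletions near the endpoints (which push votes outward to the opposition) and deletions near $p_F$ (which pull votes inward to $p_F$) may interact. I would handle this by enumerating the sizes of the four contiguous deletion blocks subject to the total being at most $k$, verifying that the blocks remain pairwise disjoint when the interval is large enough (and otherwise reducing to configurations with fewer active locations), and computing the joint effect on $(n_c', n_f')$ by applying \Cref{lemma:move-to-nerest_delete} independently at each boundary of each block, using the compact type representation to count the voters of each type whose top choice changes.
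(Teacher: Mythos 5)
Your proposal matches the paper's proof in all essentials: both define a set of acceptable vote combinations, exploit the independence of coalition-intervals (via \Cref{lem:sperate-intervals}/\Cref{cor:interval-adding}) to run a dynamic program over intervals, and reduce the per-interval computation to enumerating contiguous deletion blocks at the two interval endpoints and on the two sides of $p_F$, exactly as in the contiguous case of \Cref{thm:dcp-ssp-CFP-con}. The only differences are cosmetic — you track absolute per-interval vote counts where the paper tracks vote shifts, and you are somewhat more explicit about the interaction of the four deletion locations in the interval containing $p_F$ — so the argument is correct and follows the paper's route.
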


\begin{proofsketch}
    Building on the polynomial algorithm for \Cref{thm:dcp-ssp-CFP-con} and the key observations from \Cref{cor:interval-adding}—which state that deleting parties from one coalition-interval does not affect the impact of deletions from other coalition-intervals, we can apply dynamic programming techniques similar to those used in the proof of \Cref{thm:CCFP-ACP-pseudo-polynomial}.
\end{proofsketch}

\subsection{Deleting Opposition Parties}

\Cref{thm:dop-general-C} proves that the problem is W[1]-hard even when $\rho =0$.
For SSP preferences, \Cref{thm:dop-ssp-CFP-non-con} proves that the problem is W[1]-hard under the compact representation. 
\Cref{thm:dop-ssp-C} provides a polynomial-time algorithm for $\rho =0$; while \Cref{thm:dop-ssp-CFP-con} provides such algorithm for contiguous coalitions ($q=1$). 
\Cref{thm:CCFP-DOP-pseudo-polynomial} presents a pseudo-polynomial algorithm under the compact representation.

\subsubsection{General Preferences}


\begin{theorem} \label{thm:dop-general-C}
    \withoutTwithoutLdop\ is W[1]-hard.
\end{theorem}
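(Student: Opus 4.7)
The plan is to prove W[1]-hardness via a parameterized reduction from $k$-\textsc{Clique}, mapping the clique size directly onto the deletion budget. Given an instance $(G, k)$ with $G = (V_G, E_G)$, construct the following CC-DOP instance: introduce one opposition party $p_v$ for each vertex $v \in V_G$, and a single coalition party $c$ so that $C = \{c\}$. For each edge $\{u, v\} \in E_G$, create one voter with preference ordering
\[
p_u \succ p_v \succ c \succ (\text{all remaining parties in arbitrary order}).
\]
Set the deletion budget to $k$ and the coalition target-fraction to $\varphi = \binom{k}{2}/|E_G|$ (if $|E_G| < \binom{k}{2}$, output a trivial no-instance, since no $k$-clique can exist in that case).

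The correctness claim to prove is: for any set $K \subseteq P \setminus C$ of deleted opposition parties with $|K| \leq k$, the fraction of votes received by the coalition in $R := P \setminus K$ is exactly $|\{\{u,v\} \in E_G : \{p_u, p_v\} \subseteq K\}|/|E_G|$. The key observation supporting this is that an edge-voter for $\{u,v\}$ has $c$ in the third position, so its top choice in $R$ becomes $c$ precisely when both $p_u$ and $p_v$ lie in $K$; otherwise either $p_u$ or $p_v$ survives and remains its top. Thus the coalition objective is satisfied if and only if $K$ induces at least $\binom{k}{2}$ edges in $G$, which—since $|K| \leq k$ and $\binom{k}{2}$ is the maximum number of edges on $k$ vertices—happens iff the corresponding vertex set is a $k$-clique.

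The reduction is polynomial in $|G|$ (producing $|V_G|+1$ parties and $|E_G|$ voters), and the deletion-budget parameter is taken to be exactly the input clique size, so the parameterization is preserved. Since $k$-\textsc{Clique} is W[1]-hard with respect to $k$, this yields the desired W[1]-hardness of \withoutTwithoutLdop.

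\textbf{Expected obstacle.} I do not anticipate a deep obstacle; the delicate point is simply the placement of $c$ in the third slot of every edge-voter's ranking, so that (i) a partial deletion of one endpoint $p_u$ or $p_v$ leaves the voter's top choice inside the opposition, and (ii) deleting opposition parties unrelated to $\{u,v\}$ never accidentally promotes $c$. Both properties follow immediately from placing all other opposition parties strictly below $c$ being unnecessary — what matters is only that $c$ sits above all opposition parties \emph{other than} $p_u, p_v$, which the construction guarantees. Once the preference structure is fixed, the rest is a straightforward edge-counting argument.
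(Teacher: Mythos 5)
Your reduction is essentially identical to the paper's own proof: both reduce from $k$-\textsc{Clique} with one opposition party per vertex, a single coalition party $C=\{c\}$, one voter per edge ranking its two endpoints first and second and $c$ third, and the threshold set so that the coalition objective is met iff the $k$ deleted parties induce $\binom{k}{2}$ edges. If anything, your write-up is slightly cleaner, since the paper's version states the edge count as $k^2$ where $\binom{k}{2}$ is meant.
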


The proof is by reduction from the $k$-clique problem. 
Clearly, \withoutTwithLdop\ is also w[1]-hard as  $\rho$ can be set to zero, and then it solves exactly the \withoutTwithoutLdop\ problem.

\subsubsection{Symmetric Single-Peaked Preferences}



\begin{theorem} \label{thm:dop-ssp-CFP-non-con}
   For SSP preferences, \withoutTwithLdop\ is NP-hard under the compact representation.
\end{theorem}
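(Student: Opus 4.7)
The plan is to reduce from the standard Subset Sum problem: given positive integers $a_1, \ldots, a_n$ and a target $T$, decide whether some $K \subseteq [n]$ satisfies $\sum_{i \in K} a_i = T$. Subset Sum is NP-hard with numbers encoded in binary, which precisely matches the compact SSP representation in which voter counts per type are specified as integers. The CCFP-DOP instance will use both the coalition threshold $\varphi$ and the favored-party threshold $\rho$ as tight complementary bounds that together force exact equality $\sum_{i \in K} a_i = T$; this matches the problem at hand since CCFP by definition imposes both \eqref{eq:target-coalition-obj} and \eqref{eq:target-leader-obj}, so every YES-certificate for our instance is required to meet them simultaneously.

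For the construction, place $2n+1$ parties on $[0,1]$ in the common ordering $p_F < c_1 < p_1 < c_2 < p_2 < \cdots < c_n < p_n$, with coalition $C = \{p_F, c_1, \ldots, c_n\}$, opposition $\{p_1, \ldots, p_n\}$, and favored party $p_F$; this yields $n$ coalition-intervals $\{p_F, c_1\}, \{c_2\}, \ldots, \{c_n\}$, which is exactly the non-contiguous regime the theorem addresses. Introduce a type $T_F$ consisting of $B := 1$ voters with peak at $p_F$ (they always vote $p_F$), and for each $i \in [n]$ a type $T_i$ of $a_i$ voters with peak $v_i$ placed strictly between $c_i$ and $p_i$ and close enough to $p_i$ that $v_i$ is closer to $p_i$ than to any other party in $P$. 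Set $k := n$, $N := B + \sum_i a_i$, $\varphi := (B+T)/N$, and $\rho := B/(B+T)$. The whole construction is polynomial in the bit-length of the Subset Sum input.

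For correctness, first observe non-cascading behavior: for any $K \subseteq \{p_1,\ldots,p_n\}$, type $T_F$ always votes $p_F$; type $T_i$ with $i \notin K$ still has $p_i$ present and hence still votes $p_i$; and type $T_i$ with $i \in K$ is, by Lemma~\ref{lemma:move-to-nerest_delete}, redirected to an adjacent running party, which because of the chosen peak $v_i \in (c_i, p_i)$ must be either $c_i$ or $c_{i+1}$ (or $c_n$ when $i = n$), both coalition. Consequently, after deletions the coalition receives $B + \sum_{i \in K} a_i$ votes and $p_F$ receives exactly $B$ votes, while the total $N$ is unchanged. Substituting into the objectives, OBJ-C reduces to $\sum_{i \in K} a_i \geq T$ and OBJ-FP reduces (after cancelling $B$) to $\sum_{i \in K} a_i \leq T$; their conjunction holds iff $\sum_{i \in K} a_i = T$. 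Therefore the CCFP-DOP instance is a YES instance iff Subset Sum is, establishing NP-hardness.

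The main obstacle is verifying the non-cascading property robustly: one must rule out the possibility that deleting several opposition parties simultaneously causes some type $T_i$ (with $i \in K$) to drift to a surviving opposition party, or some type $T_j$ (with $j \notin K$) to change its vote. Both are handled by checking, once and for all, that with $v_i \in (c_i, p_i)$ the closest party to $v_i$ other than $p_i$ itself is always $c_i$ or $c_{i+1}$, because any surviving opposition party lies strictly beyond one of these two coalition parties in the common ordering; Lemma~\ref{lemma:divider-meaning} then pins down the voter's preference without needing any case analysis over $K$.
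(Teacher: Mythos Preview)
Your proof is correct and takes essentially the same approach as the paper's: both reduce from Subset Sum, arrange parties so that deleting opposition party $i$ transfers exactly $a_i$ votes to the coalition (but never to $p_F$), and use $\varphi$ and $\rho$ as tight complementary bounds forcing $\sum_{i\in K}a_i = T$. The paper reduces from the $k$-Subset Sum variant (with a cardinality parameter), places voter peaks directly at the opposition parties, and uses $\rho = 1/2$ with $\tau$ voters at the favored party, whereas you reduce from plain Subset Sum (setting $k=n$), place peaks strictly inside $(c_i,p_i)$, and use a single favored-party voter with $\rho = 1/(1+T)$; these are cosmetic differences in constants and peak placement, not in the underlying idea.
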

The proof is by reduction from the k-subset sum problem.



\begin{theorem} \label{thm:dop-ssp-C}
    For SSP preferences, \withoutTwithoutLdop\ is polynomial-time solvable.
\end{theorem}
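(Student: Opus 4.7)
The plan is to solve this via a direct dynamic program that exploits two SSP structural lemmas already established in the paper. By \Cref{lem:sperate-opposition-intervals} (applied iteratively to the deletion setting, which is symmetric to addition: starting from $R=P$ and removing parties one at a time in each interval), the effect of deleting opposition parties in one opposition-interval on the coalition's vote total is independent of the deletions made in any other opposition-interval, so the optimization decomposes additively across the $q+1$ opposition-intervals. Within any single opposition-interval, \Cref{lemma:most-adjacent-only} guarantees that we may restrict attention to deleting a prefix of leftmost parties together with a suffix of rightmost parties, since any internal deletion can be replaced by a boundary deletion without changing the coalition's vote share.

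First, I would compute, for each opposition-interval $I^O_j$ and each pair $(a,b)$ of non-negative integers with $a+b \le |I^O_j \cap P|$, a value $g_j(a,b)$ equal to the number of additional votes the coalition receives when the $a$ leftmost and $b$ rightmost parties of $I^O_j$ are deleted. For every voter (or voter type, under the compact representation) whose peak originally placed its top party among the deleted ones, iterated application of \Cref{lemma:move-to-nerest_delete} together with \Cref{lemma:divider-meaning} identifies the new top party: it is either the outermost surviving opposition party inside $I^O_j$ or the adjacent coalition party on that side, decided by comparing the voter's peak to the midpoint of the two. This yields all $g_j(a,b)$ values in polynomial time.

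Second, I would run the dynamic program
\begin{align*}
M(j,k') \;=\; \max_{0 \le a+b \le \min(k',\, |I^O_j \cap P|)}\; \bigl( M(j-1,\, k'-a-b) + g_j(a,b) \bigr),
\end{align*}
with base case $M(-1,k')=0$, and answer ``yes'' iff $n_C + M(q,k) \ge \varphi \cdot n$, where $n_C$ is the coalition's vote count in the unmodified election $R=P$. The table has $O(qk)$ entries, each computed in time $O(k^2)$ (once all $g_j$'s are tabulated), giving an overall polynomial runtime in both the compact and the extensive representation.

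The main obstacle I expect is the careful bookkeeping for $g_j(a,b)$: one must verify that after removing the $a+b$ extreme parties of $I^O_j$, any voter who originally topped one of them has their vote transferred only to one of at most four specific parties (the two extremes that remain inside $I^O_j$, and the two coalition parties bordering $I^O_j$), and that these transfers do not interact with deletions in neighboring intervals because the coalition parties separating opposition-intervals remain intact throughout. Once this local analysis is in place, additivity across intervals follows from \Cref{lem:sperate-opposition-intervals}, optimality of boundary deletions follows from \Cref{lemma:most-adjacent-only}, and the correctness of the DP is a routine induction on $j$.
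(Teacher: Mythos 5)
Your proposal is correct and follows essentially the same route as the paper: decompose across opposition-intervals via \Cref{lem:sperate-opposition-intervals}, restrict to boundary (prefix/suffix) deletions within each interval via \Cref{lemma:most-adjacent-only}, and combine with a knapsack-style dynamic program over the deletion budget. The only difference is cosmetic --- the paper first collapses your $g_j(a,b)$ into a per-interval maximum $N'_C(I^O_j,i)=\max_{a+b=i}g_j(a,b)$ before running the DP, and note a minor indexing slip in your final check, which should range over all $q+2$ opposition-intervals $I^O_0,\ldots,I^O_{q+1}$.
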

\begin{proofsketch}
    We present a dynamic programming algorithm to determine the maximum votes a coalition can achieve by deleting up to $k$ opposition parties, allowing us to verify if the coalition's objective is attainable.

    The algorithm relies on two observations: (1) By \Cref{lem:sperate-opposition-intervals}, deleting parties from one opposition-interval does not affect the impact of deleting an opposition party from another opposition-interval. (2) By \Cref{lemma:most-adjacent-only}, to maximize votes that move to the coalition by deleting $k'$   parties from one opposition-interval, it suffices to consider their distribution between the two sides of that opposition-interval.

    This restricts the combinations to evaluate, allowing us to use dynamic programming techniques.
\end{proofsketch}

\begin{theorem} \label{thm:dop-ssp-CFP-con}
    For SSP preferences, \withoutTwithLdop\ is polynomial-time solvable for contiguous coalitions ($q=1$).
\end{theorem}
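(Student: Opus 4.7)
\begin{proofsketch}
The plan is to restrict the search to a polynomial-size family of candidate deletion sets and check each directly. Since the coalition is contiguous there are exactly two opposition-intervals: $I^O_0$ to the left of the coalition and $I^O_1$ to the right. By \Cref{lem:sperate-opposition-intervals} these two intervals can be analysed independently, and by \Cref{lemma:most-adjacent-only} within each opposition-interval we may restrict attention to deletion sets whose parties form at most two contiguous blocks, one at each end of the interval.

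Combining these facts, every deletion set worth considering is parameterised by four non-negative integers $(a,b,c,d)$: $a$ parties deleted from the far (non-coalition) end of $I^O_0$, $b$ from the coalition-adjacent end of $I^O_0$, $c$ from the coalition-adjacent end of $I^O_1$, and $d$ from the far end of $I^O_1$, subject to $a+b \leq |I^O_0|$, $c+d \leq |I^O_1|$, and $a+b+c+d \leq k$. The algorithm enumerates all such tuples (at most $O(m^4)$ of them), constructs for each one the resulting running set $R := P \setminus K$, and checks directly whether both OBJ-C and OBJ-FP hold. It asserts ``yes'' as soon as some tuple works, and ``no'' if none does. Each check runs in polynomial time, so the total runtime is polynomial.

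The main obstacle is establishing correctness, specifically that the reduction in \Cref{lemma:most-adjacent-only} preserves not only the coalition's total votes but also the favored party's vote count $\pi(R,p_F)$, on which OBJ-FP depends. The key observation is that a deleted opposition party whose two neighbors are both opposition and non-deleted, by \Cref{lemma:move-to-nerest_delete}, only redistributes its voters to those two opposition neighbors; hence no coalition party (and in particular not $p_F$) gains or loses any votes. Iterating this observation turns any feasible deletion set into one of the enumerated four-endpoint form without altering the coalition total or $\pi(R,p_F)$, so the enumeration is guaranteed to find a solution whenever one exists. The only remaining subtlety is to handle cleanly the degenerate cases in which $b+c$ (or $a+b$, $c+d$) exhausts an entire opposition-interval, so that the two ``blocks'' at opposite ends merge; this is a direct case-check once the above invariant has been verified.
\end{proofsketch}
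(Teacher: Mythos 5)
Your proposal is correct and follows essentially the same route as the paper's proof: decompose into the two opposition-intervals via \Cref{lem:sperate-opposition-intervals}, reduce to deletions at the interval endpoints via \Cref{lemma:most-adjacent-only}, and enumerate the polynomially many ways to distribute the at most $k$ deletions among the four endpoint positions, checking both objectives for each. Your explicit observation that the endpoint reduction preserves not only the coalition total but every individual coalition party's count (hence $\pi(R,p_F)$) is a worthwhile sharpening, since \Cref{lemma:most-adjacent-only} as stated only asserts preservation of the coalition's aggregate fraction, which the favored-party objective alone does not obviously follow from.
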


\begin{proofsketch}
The goal is to check whether deleting at most $k$ opposition parties can achieve a valid combination of additional votes for the coalition and the favored party.

By \Cref{lem:sperate-opposition-intervals}, deletions in one opposition-interval do not affect the impact of deletions in the other.
By \Cref{lemma:most-adjacent-only}, we only need to consider deletions from the two sides of each interval.The approach involves distributing the $k$ deletions between the two intervals $(k_0, k_1)$. For each interval, we compute all possible vote combinations resulting from deletions by considering every way to split the $k_i$ deletions between the two endpoints of the interval.
Then, check whether any pair of vote combinations from the two intervals satisfies both objectives while ensuring $k_0+k_1 \leq k$.
\end{proofsketch}

\begin{theorem}\label{thm:CCFP-DOP-pseudo-polynomial}
    For SSP preferences, \withoutTwithLdop\ admits a pseudo-polynomial algorithm for the compact representation.
\end{theorem}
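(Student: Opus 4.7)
The plan is to combine the dynamic programming framework of \Cref{thm:CCFP-ACP-pseudo-polynomial} with the opposition-interval decomposition used in \Cref{thm:dop-ssp-C} and \Cref{thm:dop-ssp-CFP-con}. Exactly as in the \withoutTwithLacp\ pseudo-polynomial algorithm, I would first fix a set $A$ of vote pairs $(n_c, n_f)$ that jointly satisfy \eqref{eq:target-coalition-obj} and \eqref{eq:target-leader-obj}, and then decide whether some pair in $A$ is realizable by deleting at most $k$ opposition parties.

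The main computation is a three-step algorithm. In Step (1), for each opposition-interval $I^O_j \in \mathcal{I}^O$, each $k' \leq k$, and each pair $(n'_c, n'_f) \in \{0, \ldots, n\}^2$, I fill a Boolean table entry $W'(I^O_j, k', n'_c, n'_f)$ that records whether exactly $n'_c$ additional votes for the coalition and $n'_f$ additional votes for the favored party can be produced by deleting exactly $k'$ parties from $I^O_j$. By \Cref{lemma:most-adjacent-only}, it suffices to enumerate how $k'$ is split between the two endpoints of $I^O_j$, so Step (1) runs in time polynomial in the table size. In Step (2), I combine the per-interval tables across opposition-intervals via the natural convolution recurrence on $(k', n'_c, n'_f)$, directly paralleling the DP of \Cref{thm:CCFP-ACP-pseudo-polynomial}: a prefix entry $W(\mathcal{I}^O_{j'}, k', n'_c, n'_f)$ is "yes" iff some decomposition $k' = k'' + (k' - k'')$, $n'_c = n''_c + (n'_c - n''_c)$, $n'_f = n''_f + (n'_f - n''_f)$ makes both $W'(I^O_{j'}, k'', n''_c, n''_f)$ and $W(\mathcal{I}^O_{j'-1}, k' - k'', n'_c - n''_c, n'_f - n''_f)$ equal to "yes". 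In Step (3), I return "yes" iff there exist $k' \leq k$ and $(n'_c, n'_f) \in A$ with $W(\mathcal{I}^O, k', n'_c, n'_f) = $ "yes".

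Correctness of the decomposition in Step (2) is exactly the content of \Cref{lem:sperate-opposition-intervals}: deletions in distinct opposition-intervals contribute independently to the vote counts of every coalition party, hence to both $n_c$ and $n_f$. The table has $O(q \cdot k \cdot n^2)$ entries and each is filled in time polynomial in the same quantities. Since the compact representation's size does not depend on $n$, this is only pseudo-polynomial there, while under the extensive representation the input already has size $\Omega(n)$, so the algorithm becomes genuinely polynomial (matching the claim made after \Cref{thm:CCFP-ACP-pseudo-polynomial}).

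The main obstacle I anticipate is verifying that the pruning justified by \Cref{lemma:most-adjacent-only} remains sound when tracking the two quantities $(n_c, n_f)$ simultaneously, rather than only the coalition size as the lemma is stated. The key observation is that deleting an interior opposition party (one whose left and right neighbors are opposition parties that are not themselves being deleted) transfers its voters, by \Cref{lemma:move-to-nerest_delete}, to an adjacent opposition party; so it affects neither $n_c$ nor $n_f$. Hence Step (1) may restrict attention to subsets $K' \subseteq K$ of deletions adjacent to the endpoints, preserving both coordinates of the achievable vote pair, which is precisely what the DP requires.
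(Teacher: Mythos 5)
Your proposal is correct and matches the paper's own proof essentially step for step: the same acceptable-vote-pair set $A$, the same per-interval tables computed by splitting deletions between the two ends of each opposition-interval via \Cref{lemma:most-adjacent-only}, the same convolution DP across intervals justified by \Cref{lem:sperate-opposition-intervals}, and the same final membership check. Your closing remark --- that interior deletions shift voters to adjacent opposition parties and hence leave both $n_c$ and $n_f$ unchanged, so the pruning is sound for the pair and not just the coalition total --- is a point the paper leaves implicit, and is worth stating.
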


\begin{proofsketch}
    Building on the polynomial algorithm for \Cref{thm:dop-ssp-CFP-con} and the key observations from \Cref{lem:sperate-opposition-intervals}—which state that deleting parties from one opposition-interval does not affect the impact of deletions from other opposition-intervals, we can apply dynamic programming techniques similar to those used in the proof of \Cref{thm:CCFP-ACP-pseudo-polynomial}. 
\end{proofsketch}

\section{Conclusion and Future Work}\label{sec:conclusion}
This paper introduces a novel problem of control in parliamentary elections, revealing both the inherent complexity of this problem and the unique structure of symmetric single-peaked preferences, which make many types of such control possible and computable. This work lays the foundation for a wide range of open questions.

\paragraph{Alternative Election Rules.} 
While this paper focuses on plurality voting, an important future research is exploring the complexity under alternative rules, such as Borda, Condorcet, and Approval.

\paragraph{New Types of Control.}
Expanding the types of control for parliamentary election is particularly intriguing. Specifically, studying control through splitting and unifying parties - which are common in practice. 
Another direction is destructive control, where the chair aims to prevent a rival-coalition from succeeding. 
It would also be interesting to study scenarios where the chair not only has a favored party but specifies a desired balance of power among the coalition parties.


\paragraph{Beyond Control.} Extending the framework to study other ways to influence parliamentary elections — such as manipulation and bribery — is another natural and compelling next step.





\section{Acknowledgments}
This research is partly supported by the Israel Science Foundation grants 2544/24, 3007/24 and 2697/22.

\clearpage
\bibliographystyle{ijcai-format/named}
\bibliography{main}

\clearpage
\appendix

\section{Organization of the Appendix}
\Cref{appendix:sec:notations} summarizes the notations used throughout the paper.
\Cref{sec:examples} presents examples of the two types of objectives. 
\Cref{app:types} focuses on symmetric single-peaked preferences.
\Cref{appendix:sec:lammas} contains the proofs of the technical lemmas.
\Cref{appendix:sec:poly} provides the details of the polynomial algorithm, while \Cref{apx:immune} focuses on the immune proofs, and \Cref{appendix:sec:hard} includes the hardness proofs.

\section{Table of Notations} \label{appendix:sec:notations}
See Table~\ref{tab:notation} for a summary of notations, and \Cref{tab:problem_shorthand} for the full names of the problems.
\begin{table}[H]
    \centering
  
    \begin{tabular}{|p{2cm}|p{6cm}|}
        \hline
        \textbf{Notation} & \textbf{Description} \\ \hline
        $\universe$ & Set of all parties (or candidates) that could potentially run in the election. \\ \hline
        $V$ & Set of $n$ voters, $V= \{v_1, \ldots, v_n\}$. \\ \hline
        $\succ_v$ & Strict preference order of voter $v$ over the parties in $\universe$. \\ \hline
        $\running$ & Set of parties running in the election, $\running \subseteq \universe$. \\ \hline
        $\topvR$ & Top (most-preferred) party of voter $v$ in set $R$. \\ \hline
        $N(\running, p)$ & Number of voters whose top choice in $\running$ is party $p$. \\ \hline
        $\pi(\running, p)$ & Fraction of votes received by party $p$ out of the total number of votes, \\ 
        & $\pi(\running, p) := N(\running, p)/n$. \\ \hline
        $C$ & Target-coalition, a subset of $\universe$. \\ \hline
        $\varphi$ & Coalition target-fraction. \\ \hline
        $p_F$ & Favored-party within the coalition $C$. \\ \hline
        $\rho$ & Target-favored-party-ratio. \\ \hline
        $k$ & Integer representing the maximum number of parties that can be deleted or added. \\ \hline
        $S$ & Set of spoiler parties that can be added by the chair. \\ \hline
        $\ell$ & Number of spoiler parties in set $S$. \\ \hline
        $\mathcal{I}^C := \{I^C_1, \ldots, I^C_q\}$ & The set of coalition intervals. If $q = 1$, the coalition is contiguous.  \\ \hline
        $\mathcal{I}^O := \{I^O_0, \ldots, I^O_{q+1}\}$ & The set of opposition intervals. \\ \hline

    \end{tabular}
      \caption{Notation and Description}
    \label{tab:notation}
\end{table}
\begin{table}[htbp]
    \centering
     \begin{tabular}{|p{2cm}|p{6cm}|}
        \hline
        \textbf{Shorthand} & \textbf{Full Name} \\ \hline
        CC-ACP             & Control for Coalition by Adding Coalition Parties \\ \hline
        CC-AOP             & Control for Coalition by Adding Opposition Parties \\ \hline
        CC-DCP             & Control for Coalition by Deleting Coalition Parties \\ \hline
        CC-DOP             & Control for Coalition by Deleting Opposition Parties \\ \hline
        CCFP-ACP           & Control for Coalition and Favored Party by Adding Coalition Parties \\ \hline
        CCFP-AOP           & Control for Coalition and Favored Party by Adding Opposition Parties \\ \hline
        CCFP-DCP           & Control for Coalition and Favored Party by Deleting Coalition Parties \\ \hline
        CCFP-DOP           & Control for Coalition and Favored Party by Deleting Opposition Parties \\ \hline
    \end{tabular}
    \caption{Shorthand and Full Names of Problems}
    \label{tab:problem_shorthand}
\end{table}

\section{Simple Examples} \label{sec:examples}
\begin{Example} [\withoutTwithoutLacp\ problem]
    Let the set of parties be $P = \{p_1, p_2, p_3\}$ and the coalition be $C = \{p_1, p_2\}$.
    Let the set of voters be $V = \{v_1,v_2,\ldots ,v_6\}$. 
    The preference orders of $v_1$ and $v_2$ are $p_3 \succ p_2 \succ p_1$.  
    and the preference orders of $v_3$ and $v_4$ are $p_2\succ p_3 \succ p_1$. 
    and the preference orders of $v_5$ and $v_6$ are $p_1\succ p_3 \succ p_2$.    
    The target support $\varphi = \frac{2}{3}$.
    
    Let $k = 1$ and the set of spoiler party be $S=\{p_2\}$.
    
    By adding the spoiler coalition party $p_2$, the voters $v_3$ and $v_4$ vote for a coalition party ($p_2$) instead of voting for an opposition party ($p_3$).
    Thus, the coalition receives $\frac{2}{3}$ of the votes as required. 
\end{Example}

\begin{Example}[\withoutTwithLacp\ problem]
    Let the set of parties be $P = \{p_1, p_2, p_3, p_4\}$ the coalition be $C = \{p_1, p_2,p_3\}$, and the favored party be $p_1$.
    Let the set of voters be $V = \{v_1,v_2,\ldots ,v_6\}$. 
    The preference orders of $v_1$ and $v_2$ are $p_3 \succ p_4\succ p_2 \succ p_1$.  
    and the preference orders of $v_3$ and $v_4$ are $p_2\succ p_3 \succ p_4\succ p_1$. 
    and the preference orders of $v_5$ and $v_6$ are $p_1\succ p_3 \succ p_2 \succ p_4$.    
    The target support $\varphi = \frac{2}{3}$, and the target ratio $\rho = \frac{1}{2}$.
    Let $k = 1$ and the set of spoiler party be $S=\{p_2, p_3\}$.

    By adding the spoiler coalition party $p_2$, voters $v_3$ and $v_4$ vote for a coalition party ($p_2$) instead of voting for an opposition party ($p_4$).
    The coalition receives $4$ votes, and the favored party, $p_1$, receives $2$ votes from these votes. Thus, both requirements are met. 

    Note that adding the spoiler coalition party $p_3$, is not a successful control action.
    By adding $p_3$, the $4$ voters $v_1 \ldots, v_4$ vote for a coalition party ($p_3$) instead of voting for an opposition party ($p_4$).
    The coalition receives $6$ votes and the favored party, $p_1$, receives only $2$ votes from these votes.
    Thus, only the requirement of the target support $\varphi$ is met.
\end{Example}

\section{Types for Voters in Symmetric Single-Peaked Preferences}
\label{app:types}
In this section we prove that for symmetric single-peaked preferences, the set of voters can be fully represented in a more efficient way --- by $\mathcal{O}(m^2)$ values, where $m$ is the number of parties in the election; this allows us to be able to work with a number of voters which is exponential to the problem size.
\paragraph{Efficient Representation.} We start by proving that there are only $(n^2+1)$ different symmetric single-peaked preference orders.

\newcommand{\divP}[2]{d(#1,#2)}

\begin{definition}[Divider]
    Let $p_1, p_2 \in P$. Their average, denoted by $\divP{p_1}{p_2} := \frac{1}{2}(p_1 + p_2) $, is referred to as their divider.
\end{definition}

The importance of the divider of two parties is as follows. 
\begin{lemma}\label{duplicate}
    Let $v \in V$ and $p_1 < p_2 \in P$. Then, $p_1 \succ_v p_2$ if-and-only-if $ v < \divP{p_1}{p_2}$.
\end{lemma}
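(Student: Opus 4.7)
The plan is to unpack the definition of symmetric single-peaked preferences and reduce the equivalence to a simple algebraic manipulation. By the definition given earlier in the paper, $p_1 \succ_v p_2$ holds if and only if $|h(v) - p_1| < |h(v) - p_2|$, where $h(v)$ is the peak of $v$; by the naming convention, $h(v) = v$, so the condition becomes $|v - p_1| < |v - p_2|$.

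First, I would observe that both $|v-p_1|$ and $|v-p_2|$ are nonnegative, so the strict inequality $|v-p_1| < |v-p_2|$ is equivalent to $(v-p_1)^2 < (v-p_2)^2$. Rewriting this as a difference of squares gives
\begin{align*}
(v-p_2)^2 - (v-p_1)^2 = (p_1 - p_2)\bigl((v-p_2) + (v-p_1)\bigr) \cdot (-1) = (p_2 - p_1)\bigl(2v - p_1 - p_2\bigr),
\end{align*}
so $(v-p_1)^2 < (v-p_2)^2$ is equivalent to $(p_2 - p_1)(2v - p_1 - p_2) > 0$. Since by assumption $p_1 < p_2$, the factor $p_2 - p_1$ is strictly positive, and the inequality reduces to $2v < p_1 + p_2$, i.e., $v < \divP{p_1}{p_2}$. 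Both directions of the equivalence follow immediately from this chain of equivalences.

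As an alternative presentation, one could split into three cases according to whether $v \leq p_1$, $p_1 < v < p_2$, or $v \geq p_2$, drop the absolute values in each case, and verify the equivalence directly; however, the difference-of-squares route handles all cases uniformly and is cleaner. There is no real obstacle in this proof: it is essentially unpacking the definition of symmetric single-peaked preferences together with the convention identifying a voter with its peak. The only thing to be slightly careful about is matching the strict inequality in $\succ_v$ with a strict inequality in distances, which is immediate from the definition.
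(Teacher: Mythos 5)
Your proof is correct in substance and takes a genuinely different, cleaner route than the paper's. The paper proves the forward direction by a three-way case split on the position of $v$ relative to $p_1$ and $p_2$ (and, as written, leaves the converse direction unfinished), whereas your difference-of-squares argument is a single chain of equivalences that yields both directions at once and avoids the case analysis entirely. However, there is a sign slip in your displayed factorization: since $(v-p_2)-(v-p_1)=p_1-p_2$, the correct identity is
\begin{align*}
(v-p_2)^2-(v-p_1)^2=(p_1-p_2)\bigl(2v-p_1-p_2\bigr),
\end{align*}
with no extra factor of $(-1)$; hence $|v-p_1|<|v-p_2|$ is equivalent to $(p_1-p_2)(2v-p_1-p_2)>0$, and because $p_1-p_2<0$ this forces $2v-p_1-p_2<0$, i.e.\ $v<\frac{1}{2}(p_1+p_2)$. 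As you wrote it, $(p_2-p_1)(2v-p_1-p_2)>0$ together with $p_2-p_1>0$ would give $2v>p_1+p_2$, the opposite of the conclusion you then state; the two sign errors cancel, so your final claim is right, but the intermediate line should be corrected before the argument is sound as written.
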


\begin{proof}
    Assume that $p_1 \succ_v p_2$.
    Then, by definition, $|v - p_1| < |v - p_2|$.
    Consider the three following cases.
    
    First, if $v < p_1$ then it is clear that $ v < \frac{1}{2} (p_1+p_2)$.

    If $p_1 < v < p_2$.
    As $p_1 < v$, we get that  $|v - p_1| =  v - p_1$.
    However, $v < p_2$, which means that $|v - p_2| = p_2 - v$.
    Thus, $ v < \frac{1}{2} (p_1+p_2)$:
    \begin{align*}
        &v - p_1 = |v - p_1| < |v - p_2| = p_2 - v\\
        &\Rightarrow 2v < p_1 + p_2
    \end{align*}

    Otherwise, $p_2 < v$.
    As before, $p_1 < v$, which means that $|v - p_1| =  v - p_1$.
    But now also, $p_2 < v $; thus $|v - p_2| = v - p_2$.
    Since $p_1 < p_2$, we can conclude that $|v - p_1|> |v-p_2|$ -- a contradiction: 
    \begin{align*}
        & |v-p_1| = v - p_1 >  v - p_2 = |v - p_2| 
    \end{align*}

    On the other hand, assume that $ v < \frac{1}{2} (p_1+p_2)$
\end{proof}

We start by defining the set of \emph{dividers}: $$D := \{ \frac{p_1 + p_2}{2} \mid p_1, p_2 \in \universe\} \cup \{0,1\}$$
and then denote the dividers in $D$ in ascending order by $d_1, \ldots, d_{|D|}$. 
Next, we define the set of \emph{bands}, where each band lies between a pair of consecutive dividers:
\begin{align*}
    B := \{[d_i, d_{i+1}] \mid i = 1,\ldots, |D|-1\}
\end{align*}
We prove that the set of voters can be fully represented by specifying, for each band, what is the number of voters whose peak lies within it.
Accordingly, in this setting, we assume that the preferences are provided as 
a list of bands along with the corresponding number of voters in each band.
Notice that the number of bands depends only on the number of parties.


%
%
%

\begin{lemma}\label{lemma:bands}
    Let $v_1,v_2 \in V$ be two voters on the same band $[d_i, d_{i+1}] \in B$. Then, the preference orders of $v_1$ and $v_2$ are identical.
\end{lemma}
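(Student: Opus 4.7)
The plan is to leverage \Cref{duplicate} (equivalently, \Cref{lemma:divider-meaning}), which states that for any two parties $p_1 < p_2$, a voter $v$ prefers $p_1$ over $p_2$ if and only if $v < \divP{p_1}{p_2}$. Thus, the preference of a voter between any pair of parties is fully determined by the position of the voter's peak relative to the single threshold $\divP{p_1}{p_2}$. Two voters will have identical preference orders exactly when, for every pair of parties, they fall on the same side of the corresponding divider.

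First I would fix an arbitrary pair of parties $p_1 < p_2 \in P$ and let $d := \divP{p_1}{p_2}$. By the definition of the divider set $D$, we have $d \in D$, so $d = d_j$ for some index $j \in \{1, \ldots, |D|\}$. Now, by construction, the bands $[d_i, d_{i+1}]$ are the consecutive intervals determined by the sorted elements of $D$, so no divider in $D$ lies strictly in the open interior $(d_i, d_{i+1})$. Consequently, the divider $d_j$ must satisfy either $d_j \leq d_i$ or $d_j \geq d_{i+1}$.

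Next, since both $v_1, v_2 \in [d_i, d_{i+1}]$ (treating voters strictly between consecutive dividers in the standard way, so that no voter's peak coincides with a divider, or, equivalently, breaking ties by a fixed convention consistent with \Cref{duplicate}), they both lie on the same side of $d_j$: either both are $\geq d_j$, or both are $\leq d_j$. Applying \Cref{duplicate} (or its contrapositive), we conclude that $v_1$ and $v_2$ agree on the preference between $p_1$ and $p_2$.

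Since the pair $(p_1, p_2)$ was arbitrary, $v_1$ and $v_2$ agree on every pairwise comparison, and therefore $\succ_{v_1}$ and $\succ_{v_2}$ coincide. The only real subtlety — and the point I would be most careful about when formalizing — is the boundary case where a voter's peak lies exactly on a divider $d_j = \divP{p_1}{p_2}$. In that case \Cref{duplicate} does not by itself dictate a strict preference, so a tie-breaking convention must be fixed in advance (for instance, placing each divider consistently in exactly one adjacent band); once this is done, the argument above goes through unchanged, and the total count of distinct preference orders is at most $|B| = |D| - 1 = \mathcal{O}(m^2)$, as claimed in the paper.
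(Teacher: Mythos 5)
Your proposal is correct and follows essentially the same route as the paper's proof: fix an arbitrary pair $p_1 < p_2$, observe that their divider cannot lie strictly inside the band (since bands are delimited by consecutive dividers), conclude both peaks fall on the same side of it, and apply \Cref{lemma:divider-meaning} pairwise. The boundary issue you flag is resolved in the paper by noting that strict preferences preclude a peak from coinciding with any divider, which is a cleaner version of the tie-breaking convention you propose.
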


\begin{proof}
    Let $p_1, p_2 \in \universe$ be two parties such that $p_1 < p_2$.
    
    We prove that it is either the case where both $v_1$ and $v_2$ prefer $p_1$ over $p_2$ or that both prefer $p_2$ over $p_1$.

    Let $d(1,2) := \frac{p_1 + p_2}{2}$ be the divider induces by parties $p_1$ and $p_2$. 
    Notice that since we assume that the voters have \emph{strict} preferences, their peaks cannot be on such dividers (as it will imply that the distance to $p_1$ and $p_2$ is equal).

    This divider partitions the interval $[0, 1]$ into two segments: $[0, d(1,2))$ and $(d(1,2), 1]$; where all the points in the first are closer to $p_1$ and all the points in the second are closer to $p_2$. 
    
    Since $v_1$ and $v_2$ are on the same band and since bands are defined as the smallest divisions of $[0,1]$ (as they formed by consecutive dividers), we can conclude that both voter $v_1, v_2$ lie together within $[0, d(1,2))$ or within $(d(1,2), 1]$.
    But this means that either both voters are closer to $p_1$ than to $p_2$ or the opposite. 
    Thus, both $v_1$ and $v_2$ prefer $p_1$ over $p_2$, or that both prefer $p_2$ over $p_1$ -- as required.
\end{proof}

\begin{figure}[h]
    \centering
    \includegraphics{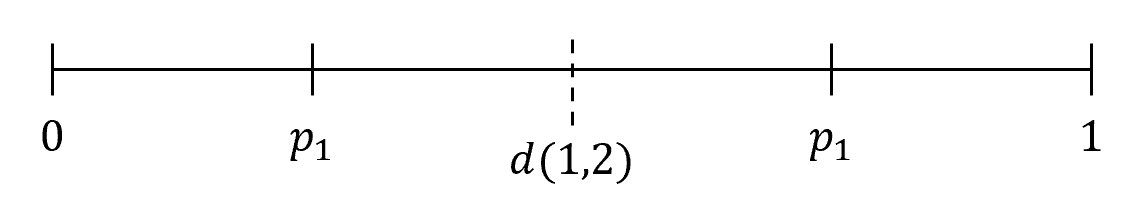}
    \caption{Dividers}
    \label{fig:dividers}
\end{figure}

\section{Proofs for the Technical Lemmas} \label{appendix:sec:lammas}
\begin{reminderlemma}{\ref{lemma:adding-only-decreases}}
    Let $p_1, p_2 \in P$ and $R \subseteq (P\setminus\{p_2\})$ such that $p_1 \in R$. Then,
    \begin{align*}
        \pi(R, p_1) \geq \pi(R \cup \{p_2\}, p_1) 
    \end{align*}
\end{reminderlemma}

 \begin{proof}   
    Since $R$ is a subset of $R \cup \{p_2\}$, adding $p_2$ to the election introduces a new option for voters.
    For any voter whose most-preferred party in $R$ is $p_1$, either (1) $p_1$ is still their most-preferred party in $R \cup \{p_2\}$, or (2) the new party $p_2$ becomes their most-preferred party in $R \cup \{p_2\}$.  
    This implies that the number of voters whose most-preferred party is $p_1$ in $R$ is at least as large as in $R \cup \{p_2\}$.
 \end{proof}

\subsection{SSP Preferences}

\begin{reminderlemma}{\ref{lemma:divider-meaning}}
    Let $v \in V$ and $p_1 < p_2 \in P$. Then, $p_1 \succ_v p_2$ if-and-only-if $ v < \frac{1}{2}(p_1+p_2)$.
    
\end{reminderlemma}
The proof is provided in \Cref{duplicate} in the section of types for voters in symmetric single-peaked preferences in the appendix.


\begin{reminderlemma}{\ref{lemma:add_change_to_added_or_not_change}}
     Let $v \in V$, $p_1, p_2 \in P$ and $R \subseteq (P\setminus\{p_2\})$ such that $p_1 \in R$. If $T_v(R) = p_1$, then $T_v(R\cup \{p_2\}) \in \{p_1, p_2\}$.
\end{reminderlemma}
\begin{proof}
    $T_v(P') = p_i$, hence, from \Cref{lemma:divider-meaning}, $\forall p'\in P'$ it holds that if $p'< p_i$ then $v < \frac{1}{2}(p'+p_i)$, otherwise. $v > \frac{1}{2}(p'+p_i)$  
    The position of $v$ is unchanged, hence $\forall p'\in P'$, it still hold from \Cref{lemma:divider-meaning} that $T_v(P'\cup \{p_j\}) \not= p'$.
\end{proof}
\begin{reminderlemma}{\ref{lemma:move-to-nerest_delete}}
     Let $p_1<\ldots< p_m$ be the common     ordering of the parties on $[0,1]$.
     Let $v \in V$ and $T_v(P) = p_i$. ~~Then, $T_v(P\setminus \{p_i\}) \in \{p_{i-1}, p_{i+1}\}$.
\end{reminderlemma}
\begin{proof}
   
    Since $T_v(P) = p_i$, then from \Cref{lemma:divider-meaning}, 
    $\frac{1}{2}(p_{i-1}+p_i) < v < \frac{1}{2}(p_i+p_{i+1})$.
    Since $\forall j < i-1: p_j < p_{i-1}$ we get that $ v > \frac{1}{2}(p_j+p_{i-1})$.
    Since $\forall j > i+1: p_{i+1} < p_j$ we get that $ v < \frac{1}{2}(p_{i+1}+p_j)$. 

    Hence, $T_v(P\setminus \{p_i\}) \in \{p_{i-1}, p_{i+1}\}$
\end{proof}


\subsubsection{The Coalition}
Recall that a coalition is a subset of parties $C \subseteq P$. 
For SSP preferences, any coalition can be described as a union of \emph{coalition-intervals}. A  coalition-interval is a sub-interval on $[0,1]$ such that each party $p$ which lies in $[a,b]$ is in the coalition $C$.
We assume that the coalition is given as a set of $q$ disjoint coalition-intervals, $\mathcal{I} := \{I_1, \ldots, I_q\}$, with minimum size.  
If $q > 1$, this implies that there must be at least one opposition party that lies between any two coalition-intervals in $\mathcal{I}$.  
When $q=1$, we say that the coalition is \emph{contiguous}.  

For convenience, we will refer to the sub-interval between the coalition-intervals as opposition-interval. Denote the set of these intervals by $\mathcal{I}^O := \{I^O_0, I^O_1, \ldots, I^O_q, I^O_{q+1}\}$.


\subsubsection{Independent Analysis of Coalition-Intervals}

\begin{reminderlemma}{\ref{lem:sperate-intervals}}
Let $I_1\neq I_2 \in \mathcal{I}$ be two coalition-intervals, $p_1 \in I_1$, $p_2 \in I_2$ and $\running \subseteq P\setminus \{p_2\}$ such that $(P\setminus C) \subseteq R$. ~Then:
\begin{align*}
    \pi(\running, p_1) = \pi(\running \cup \{p_2\}, p_1)
\end{align*}

\end{reminderlemma}

\eden{I think this proof should be inside the paper}
\begin{proof}
    First, by \Cref{lemma:adding-only-decreases}, we can conclude that $\pi(\running, p_1) \geq \pi(\running \cup \{p_2\}, p_1)$.
    
    Suppose by contradiction that $\pi(\running, p_1) > \pi(\running \cup \{p_2\}, p_1)$. 
    This means that there exists a voter $v$ who prefers $p_1$ over all parties in $R$, but prefers $p_2$ over $p_1$.
    Since $I_1$ and $I_2$ are two distinct coalition-intervals in $I$, it means that there exists an opposition party, $o \in P\setminus C$, that lies between them.
    Notice that $o \in R$, which means that $v$ prefers $p_1$ over $o$.

    Consider the following two cases.
    If $p_1 < p_2$, it means that $p_1 < o < p_2$. By \Cref{lemma:divider-meaning}, we get that $p_1 \succ_v o$ implies that $v \leq \frac{1}{2}(p_1 + o)$. Since $o< p_2$, this means that $v \leq \frac{1}{2}(p_1 + p_2)$. But this implies that $p_1 \succ_v p_2$ --- a contradiction.     

    Otherwise, $p_2 < p_1$, which means that $p_2 < o < p_1$. Here, $p_1 \succ_v o$ implies that $v > \frac{1}{2}(p_1 + o)$. As $p_2 < o$, this means that $v > \frac{1}{2}(p_1 + p_2)$. But this implies that $p_1 \succ_v p_2$ --- a contradiction. 
\end{proof}

\subsubsection{Independent Analysis of opposition-Intervals}


\begin{reminderlemma}{\ref{lem:sperate-opposition-intervals}}
    Let $I^O_1\neq I^O_2$ be two opposition-intervals, $p_1 \in I^O_1$, $p_2 \in I^O_2$. Let $R\subset P$
    Then:
   \begin{align*}
    \pi(\running, C) - 
    \pi(\running \cup \{p_1\}, C) = \\
    \pi(\running \cup \{p_2\}, C) 
    - \pi(\running \cup \{p_2, p_1\}, C)
    \end{align*}

\end{reminderlemma}

\eden{I'm not sure the description is accurate. First, $R$ is not defined. second, $\pi$ is a function that gets a set of parties and a \textbf{party} (instead of $C$). We can either define it or add notation }
\begin{proof}
Note that deleting an opposition party can only increase the size of the coalition. Hence, the results are non-negative for both sides of the inequalities.
    Suppose by contradiction that
    \begin{align*}
    \pi(\running, C) - 
    \pi(\running \cup \{p_1\}, C) > \\
    \pi(\running \cup \{p_2\}, C) 
    - \pi(\running \cup \{p_2, p_1\}, C)
    \end{align*}
    
    This implies that there exists a voter $v$ who prefers $p_1$, and when $p_1$ does not run, $v$ votes for a coalition party. That is, $T_v(R \cup {p_1}) = p_1$, but $T_v(R) \in C$. However, if $p_2$ runs, this does not occur, and there are two possibilities:
    (1) $T_v(R \cup {p_2, p_1}) \neq p_1$.
    This cannot be true, because $T_v(R \cup {p_1}) = p_1$. If $T_v(R \cup {p_2, p_1}) \neq p_1$, it must be that $T_v(R \cup {p_2, p_1}) = p_2$, since $p_2$ is the only added party. However, since $p_1 \in I^O_1$ and $p_2 \in I^O_2$, there must be at least one coalition party between them. Therefore, it is impossible for adding $p_2$ to cause a shift in the vote from $p_1$ to $p_2$.

    (2) $T_v(R\cup\{p_2\}) \not\in C$. This cannot be since $T_v(R) \in C$, and adding only $p_2$ can only move the vote to $p_2$.
    However, $T_v(R\cup\{p_2\})$ cannot be $p_2$, since we know that  $T_v(R \cup {p_1}) = p_1$, but $T_v(R) \in C$, which imply that $v$ is somewhere in  the opposition-interval $I^O_1$.

    On the other hand, suppose by contradiction that
     \begin{align*}
    \pi(\running, C) - 
    \pi(\running \cup \{p_1\}, C) < \\
    \pi(\running \cup \{p_2\}, C) 
    - \pi(\running \cup \{p_2, p_1\}, C)
    \end{align*}

     It means that when $p_2$ also runs, there exists a voter $v$ who prefers $p_1$ and when $p_1$ does not run $v$ votes to a coalition party. That is, $T_v(\running \cup \{p_2, p_1\} = p_1$, but $T_v(\running \cup \{p_2\}, C)\in C$.
    But if $p_2$ does not run then it does not happen, there are two options:
    (1) $T_v(R\cup\{p_1\}) \not= p_1$. This can not be since $T_v(\running \cup \{p_2, p_1\} = p_1$, and deleting $p_2$ which is not the top of $v$ can not influence on his top.
    (2) $T_v(R) \not\in C$. This can not be from same argument since $T_v(\running \cup \{p_2\}, C)\in C$.

    In both cases, we reach a contradiction, which concludes the proof.

\end{proof}

\section{Missing Polynomial Algorithms}\label{appendix:sec:poly}

\subsection{Adding Coalition Parties}

\begin{reminderlemma}{\ref{lemma:left-most-right-most-equal}}
        Let $I \in \mathcal{I}^C$ be a coalition-interval.
    For any subset $K \subseteq S \cap I$, there exists $K' \subseteq K$ with $|K'| \leq 2$, such that for $R(K) := (\universe \setminus S) \cup K$ and $R(K'):= (\universe \setminus S) \cup K'$:
    \begin{align*}
        \sum_{p \in C \cap R(K)} \pi(R(K), p) = \sum_{p \in C \cap R(K')} \pi(R(K'), p) 
    \end{align*}
\end{reminderlemma}

\begin{proof}
    If $|K|\leq 2$, then the claim is trivial. 

    Assume that $|K| >2$. 
    Notice that all parties in $K$ are coalition parties.
    Let $c_L$ and $c_R$ be the left-most and right-most parties in $K$, respectively; and consider $K'$ that contains these two parties, that is:
    \begin{align*}
        K' := \{c_L \mid c_L < c, \forall c \in K\} \cup \{c_R \mid c < c_R, \forall c \in K\}
    \end{align*}

    We first prove that if a voter's favorite party is in the coalition under $R(K)$, it will also be in the coalition under $R(K')$.
    Thus, the coalition's vote count with $R(K')$ is at least as high as with $R(K)$.
    
    Formally, let $v \in V$ be a voter whose top party in $R(K)$, $\topvRP{v}{R(K)}$,\footnote{Recall that $\topvR$ denotes the top (most preferred) party in $R$.} is in the coalition $C$.
    We prove that $\topvRP{v}{R(K')}$ is in the coalition as well. 
    
    If $\topvRP{v}{R(K)} \in R(K')$, then clearly $\topvRP{v}{R(K')} = \topvRP{v}{R(K)}$ as there are even less options for the voters.
    Otherwise, $\topvRP{v}{R(K)} \notin R(K)$, which means that $\topvRP{v}{R(K)} \in I$. 
    We show that, in this case, voter $v$ prefers either $c_L$ or $c_R$ (or both) over all the opposition parties.
    Therefore, its top party in $R(K')$ must be in the coalition.

    Let $o\in \universe \setminus C $ be an opposition party.
    It follows that $o$ does not lie within the interval $I$, as all parties in $I$ are coalition parties. 
    Let $I = [a,b]$. 
    There are only two cases, either $o < a$ or $b < o$.
    If $o<a$, then $o<a\leq \topvRP{v}{R(K)}$. Since $\topvRP{v}{R(K)} \succ_v o$, by \Cref{lemma:divider-meaning}, we can conclude that $\frac{1}{2} (o+\topvRP{v}{R(K)}) < v$. 
    However, since $c_L \leq \topvRP{v}{R(K)}$, then $\frac{1}{2} (o+c_L) < \frac{1}{2} (o+\topvRP{v}{R(K)}) < v$. As $o < c_L$, this means that $c_L \succ_v o$. 
    
    Otherwise, $b < o$, then $\topvRP{v}{R(K)}\leq b< o$.
    As $\topvRP{v}{R(K)} \succ_v o$, \Cref{lemma:divider-meaning} implies that $v < \frac{1}{2} (\topvRP{v}{R(K)} + o)$. 
    However, since $\topvRP{v}{R(K)} \leq c_R$, then $v < \frac{1}{2} (\topvRP{v}{R(K)} + o) < \frac{1}{2} (c_R + o) $. As $c_R < o$, this means that $c_R \succ_v o$.

    Conversely, if a voter's favorite party is in the opposition under $R(K)$, it will remain their favorite party under $R(K')$.
    This follows from the fact that $(P\setminus C) \subseteq R$, thus its favorite party is also in $R(K')$ and there are even less options available.
    Thus, the coalition's vote count with $R(K')$ is at most as with $R(K)$.  
    
    Combining these two claims, we establish the desired equality.  
\end{proof}

\subsection{Adding Opposition Parties}

\begin{reminderlemma}{\ref{lemma:adding-opposition-two}}
    Let $I^O \in \mathcal{I}^O$ be an opposition-interval.   
    for any subset $K \subseteq S \cap I^O$, there exists $K' \subseteq K$ with $|K'| \leq 2$, such that for $R(K) := (\universe \setminus S) \cup K$ and $R(K'):= (\universe \setminus S) \cup K'$:
    \begin{align*}
        \sum_{p \in C \cap R(K)} \pi(R(K), p) = \sum_{p \in C \cap R(K')} \pi(R(K'), p) 
    \end{align*}
\end{reminderlemma}

\begin{proof}
    If $|K|\leq 2$, then the claim is trivial. 

    Assume that $|K| >2$. 
    Notice that all parties in $K$ are opposition parties.
    Let $o_L$ and $o_R$ be the left-most and right-most parties in $K$, respectively; and consider $K'$ that contains these two parties, that is:
    \begin{align*}
        K' := \{o_L \mid o_L < o, \forall o \in K\} \cup \{o_R \mid o < c_R, \forall o \in K\}
    \end{align*}

    Let $v \in V$ be a voter. 
    First, it is clear that if $v$'s favorite party is in the coalition under $R(K)$, it will remain its favorite party under $R(K')$.
    This follows from the fact that $C \subseteq R$, thus its favorite party is also in $R(K')$ and there are even less options available.
    Thus, the coalition's vote count with $R(K')$ is at least as high as with $R(K)$.

    On the other hand, assume that $v$'s favorite party under $R(K)$ is in the opposition.
    If $\topvRP{v}{R(K)} \in R(K')$ - then, as before, it will remain its favorite party under $R(K')$.
    Otherwise, $\topvRP{v}{R(K)} \notin R(K)$. 
    We show that, in this case, voter $v$ prefers either $o_L$ or $o_R$ (or both) over all the coalition parties.
    This will imply that its top party under $R(K')$ must be in the opposition as well.
    
    Let $c\in C $ be an coalition party.
    Clearly, $c \notin I^O$ as all parties in $I^O$ are opposition parties. 
    Let $I^O = [a,b]$. 
    There are only two cases, either $c < a$ or $b < c$.
    If $c<a$, then $c<a\leq \topvRP{v}{R(K)}$. Since $\topvRP{v}{R(K)} \succ_v c$, by \Cref{lemma:divider-meaning}, we can conclude that $\frac{1}{2} (c+\topvRP{v}{R(K)}) < v$. 
    However, since $o_L \leq \topvRP{v}{R(K)}$, then $\frac{1}{2} (c+o_L) < \frac{1}{2} (c+\topvRP{v}{R(K)}) < v$. As $c < o_L$, this means that $o_L \succ_v c$. 
    
    Otherwise, $b < c$, then $\topvRP{v}{R(K)}\leq b< c$.
    As $\topvRP{v}{R(K)} \succ_v c$, \Cref{lemma:divider-meaning} implies that $v < \frac{1}{2} (\topvRP{v}{R(K)} + c)$. 
    However, since $\topvRP{v}{R(K)} \leq o_R$, then $v < \frac{1}{2} (\topvRP{v}{R(K)} + c) < \frac{1}{2} (o_R + c) $. As $o_R < c$, this means that $o_R \succ_v c$.

    Combining these two claims, we establish the desired equality.  
\end{proof}

\paragraph{Control for Coalition and Favored Party.}
Contiguous: \eden{text}

\begin{reminder}{\ref{thm:aop-ssp-CFP-con}}
     For SSP preferences, control for coalition and favored party by adding opposition parties is solvable in polynomial time for contiguous coalition (\withoutTwithLaop).
\end{reminder}

\begin{proof}
    Let $I^O_0, I^O_1$ be the two opposition-intervals, where $I^O_0$ is on the left of the (contiguous) coalition-interval $I_1$ and $I^O_1$ is on its right. 

    We consider any subset $S' \subseteq S$ with $|S'| \leq k$, $|S' \cap I^O_0 | \leq 1$ and $|S' \cap I^O_1 | \leq 1$ -- that is, at most one party is added from the left of the coalition and at most one from the right. 
    For each subset, we check if it satisfies the objectives. 
    If it does, the algorithm stops and asserts "yes". 
    If no subset satisfies the objectives, the algorithm asserts "no".  


    It is clear that this process takes polynomial time, as the number of subsets is less than $m^2$ (where $m$ is the number os parties).
    We shall now see that the algorithm is always correct.
    
    When the algorithm asserts "yes", there indeed exists a subset of size at most $k$ that satisfies the objectives as the algorithm asserts "yes" only when it finds one.  

    When the algorithm returns "no", it means that no subsets $S'$ with $|S'| \leq k$, $|S' \cap I^O_0 | \leq 1$ and $|S' \cap I^O_1 | \leq 1$ satisfy the objectives.
    Assume for contradiction that there exists a subset $K$ with $|K| \leq k$, but either $|K \cap I^O_0 | > 1$ or $|K \cap I^O_1 | > 1$, that satisfies the objectives. 
    
    If $|K \cap I^O_0 | > 1$, let $o_R$ be the right-most party in $K \cap I^O_0$ -- that is, $o < o_R$ for any $o \in K \cap I^O_0$.
    Similarly, if $|K \cap I^O_1 | > 1$, let $o_L$ be the left-most party in $K \cap I^O_1$ -- that is, $o_L < o$ for any $o \in K \cap I^O_0$.
    Notice that at least one of $o_R$ and $o_L$ must exist. Let $K'$ include whichever of them that exists (or both, if they both exist).

    Let $R(K)$ and $R(K')$ be the sets of running parties  when adding the parties in $K$ and $K'$, respectively.
    Let $c \in C$ be a coalition party.
    We prove that the number of votes $c$ gets under $R(K)$ equals to the number of votes it gets under $R(K')$.
    This would imply that if $K$ satisfies the objectives, $K'$ also does. 
    
    To do so, we prove that any voter who votes for $c$ under $R(K)$ still votes for it under $R(K')$; while any voter who votes for the opposition under $R(K)$ still votes for the opposition under $R(K')$.

    Let $v \in V$ be a voter. Assume that $v$'s favorite party under $R(K)$ is $c$. It is clear that $c$ will remain its favorite party under $R(K')$ as the change between $R(K)$ and $R(K')$ involves only opposition parties.
    Thus $c$ is also in $R(K')$ and there are even less options available.

    On the other hand, assume that $v$'s favorite party under $R(K)$ is in the opposition.
    If $\topvRP{v}{R(K)} \in R(K')$ - then, as before, it will remain its favorite party under $R(K')$.
    Otherwise, $\topvRP{v}{R(K)} \notin R(K)$. 
    We show that, in this case, voter $v$ prefers either $o_L$ or $o_R$ (or both) over all the coalition parties.
    This will imply that its top party under $R(K')$ must be in the opposition as well.

    Let $c'\in C $ be an coalition party.
    If $\topvRP{v}{R(K)} \in I^O_0$, then $\topvRP{v}{R(K)} < o_R < c'$.
    Since $\topvRP{v}{R(K)} \succ_v c'$, by \Cref{lemma:divider-meaning}, we can conclude that $v < \frac{1}{2} (\topvRP{v}{R(K)} + c')$. 
    However, since $\topvRP{v}{R(K)} < o_R$, then $ v < \frac{1}{2} (\topvRP{v}{R(K)} + c') < \frac{1}{2} (o_R + c') $. As $o_R < c'$, this means that $o_R \succ_v c'$. 
    
    Otherwise, $\topvRP{v}{R(K)} \in I^O_1$, then $ c' < o_L < \topvRP{v}{R(K)}$.
    Since $\topvRP{v}{R(K)} \succ_v c'$, by \Cref{lemma:divider-meaning}, we can conclude that $ \frac{1}{2} (c' + \topvRP{v}{R(K)} ) < v$. 
    However, since $o_L < \topvRP{v}{R(K)}$, then $\frac{1}{2} (c' + o_L ) < \frac{1}{2} (c' + \topvRP{v}{R(K)} ) < v$. As $c' < o_L$, this means that $o_L \succ_v c'$.

    Thus, $K'$ satisfies both objectives.
    However, since $|K' \cap I^O_0 | \leq 1$ and $|K' \cap I^O_1 | \leq 1$, the algorithm must have considered and checked $K'$. 
    But the algorithm asserted "no", which means that $K'$ did not satisfy the objectives -- a contradiction.  
\end{proof}

\begin{reminder}{\ref{thm:CCFP-AOP-pseudo-polynomial}}
    For SSP preferences Control for coalition and favored party by adding opposition parties admits a pseudo-polynomial algorithm for the compact representation.
\end{reminder}

\begin{proof}
    The algorithm is similar to the one presented in \Cref{thm:CCFP-AOP-pseudo-polynomial}.

    As before, we define the set $A$ of acceptable vote combinations, representing pairs of votes for the coalition and the favored party that meet the objectives, as follows:
    \begin{align*}
    A := \left\{ (n_c, n_f) \in \{1, \ldots, n\}^2\ \middle\vert \begin{array}{l}
        \frac{1}{n}(n_f + n_c) \geq \varphi\\
        n_f \geq \rho \cdot n_c
      \end{array}\right\}
    \end{align*}

    The goal is to find out whether such vote combination can be exactly obtained by adding at most $k$ opposition parties from $S$. 
    
    The algorithm is based on two key observations.
    \Cref{lem:sperate-opposition-intervals} ensures that different opposition-intervals can be treated separately; while \Cref{lemma:adding-opposition-two}, ensures that it suffices to consider only subsets of size at most $2$ in each interval.

    \eden{todo: to copy last pseudo version and edit}
\end{proof}

\subsection{Deleting Coalition Parties}
\begin{reminder}{\ref{thm:dcp-ssp-CFP-con}}
For SSP preferences, control for coalition and favored party by deleting coalition parties polynomial-time solvable for contiguous coalitions (\withoutTwithLdcp).
\end{reminder}
\begin{proof}
   The algorithm evaluates whether specific vote combinations meet the objectives.

Formally, we define the set $A$ of acceptable vote combinations, $(n_c, n_f) \in A$, representing pairs of number of votes shifted from the  coalition to the favored party ($n_f$) and the number of votes shifted from the coalition to the opposition ($n_c$) that meet the objectives.
That is:
    \begin{align*}
    A^- := \left\{ (n_c, n_f) \in \{1, \ldots, n\}^2\ \middle\vert \begin{array}{l}
        \frac{(N(P, C) - n_c)}{n} \geq \varphi\\
        \frac{N(P, p_1) + n_f}{(N(P, C \setminus\{p_1\}) - n_c)} \geq \rho 
      \end{array}\right\}
    \end{align*}
The goal is to find out whether such votes combination can be exactly obtained by deleting at most $k$ parties.
    
In the contiguous coalition case, the problem can be solved in polynomial time. Since there is only one coalition-interval, we can examine all possible ways to partition the $k$ deleted parties between the two endpoints of that coalition-interval and the two sides of the favored party. Then check if one of the optional outcomes is in $A^-$.
\end{proof}

\begin{reminder}{\ref{thm:CCFP-dcp-pseudo-polynomial}}
    For SSP preferences,
    Control for coalition and favored party by deleting coalition parties admits a pseudo-polynomial algorithm for the compact representation
  (\withoutTwithLdcp).
\end{reminder}
\begin{proof}
    As in the proof of \Cref{thm:dcp-ssp-CFP-con},  we define the set $A$ of acceptable vote combinations, $(n_c, n_f) \in A$, representing pairs of number of votes shifted from the coalition to the favored party ($n_f$) and the number of votes shifted from the coalition to the opposition ($n_c$) that meet the objectives.
    That is:
    \begin{align*}
    A^- := \left\{ (n_c, n_f) \in \{1, \ldots, n\}^2\ \middle\vert \begin{array}{l}
        \frac{(N(P, C) - n_c)}{n} \geq \varphi\\
        \frac{N(P, p_1) + n_f}{(N(P, C \setminus\{p_1\}) - n_c)} \geq \rho 
      \end{array}\right\}
    \end{align*}
    
    \paragraph{Algorithm Description.} 
     \paragraph{(1) Per-Interval Calculation.} 
    For each $q' = 0,\ldots,q+1$, ~~$i = 0,1,\ldots, k$,~~ $n'_c = 0, \ldots, n$,  and $n'_f = 0, \ldots, n$; we check whether it is possible to move exactly $n_c$ votes from the coalition to the opposition  and achieve $n_f$ votes for the favored party, by deleting at most $i$ parties from the interval $I^O_{q'}$.
    To compute this, we use the same technique from \Cref{thm:dcp-ssp-CFP-con}.
    The result, a "Yes" or "No" answer, is denoted by $W'(I_{q'},i,n'_c, n'_f)$.

    \paragraph{(2) Dynamic Programming.} 
    Let $W(\mathcal{I}_{q'},k',n_c, n_f)$ be the answer to the same question but by deleting at most $k'$ parties from the intervals $\mathcal{I}_{q'}:= \{I_1,\ldots, I_{q'}\}$, as follows. 

    \begin{itemize}
        \item Base Case: $W(\mathcal{I}_{1},k',n_c, n_f) =  W'(\mathcal{I}_{1},k',n_c, n_f)$.
         
        \item Recursive Formula for the Dynamic Programming: 
        $W(\mathcal{I}_{q'},k',n_c, n_f) = "yes"$ if and only if exist $k'',n''_c, n''_f$
          such that $W'(\mathcal{I}_{q'},k'',n'_c, n'_f) = "yes"$ and $W(\mathcal{I}_{q'-1},k'-k'',n_c-n'_c, n_f-n'_f) = "yes"$.
       
    \end{itemize}
    \paragraph{(3) Objective Check.} We check, for $k'=0,\ldots k$ and any acceptable combination $(n'_c, n'_f) \in A^-$, whether the answer in $W(\mathcal{I}_{q},k',n'_c, n'_f)$ is "yes".
    If so, then the algorithm asserts "yes"; otherwise it asserts "no".

    Notice that the complexity is polynomial in the extensive representation, but only pseudo polynomial in the compact representation (where the input size does not depend on $n$).
  
\end{proof}

\subsection{Deleting Opposition Parties}
\begin{reminder}{\ref{thm:dop-ssp-C}}
For SSP preferences, control for a coalition by deleting opposition parties is polynomial-time solvable (\withoutTwithoutLdop).
\end{reminder}
\begin{proof}
    We present a dynamic programming algorithm to determine the maximum votes a coalition can achieve by deleting up to $k$ opposition parties, allowing us to verify if the coalition's objective is attainable.

    The algorithm relies on two observations: (1) By \Cref{lem:sperate-opposition-intervals}, deleting parties from one opposition-interval does not affect the impact of deleting an opposition party from another opposition-interval. (2) By \Cref{lemma:most-adjacent-only}, to maximize votes that move to the coalition by deleting $k'$   parties from one opposition-interval, it suffices to consider their distribution between the two sides of that opposition-interval.
    Combining these results, we restrict the set of combinations to be evaluated.
    \paragraph{Dynamic Programming Algorithm Description.}
    \paragraph{(1) Per-Interval Calculation.} 
    we start by calculating for each $q' = 0,\ldots,q+1$ and for each $i = 0, \ldots, k$, the maximum number of votes that can be added to the coalition by deleting $i$ parties from the interval $I^O_{q'}$, denoted by $N'_C(I^O_{q'}, i)$.
    The computation of $N'_C(I^O_{q'}, i)$ involves checking all ways to distribute these $i$ deletions between the two sides, overall $i+1$ polynomial computations. Thus, it can be done in polynomial time.

    Note that $N'_C$ is the number of added votes, which refers to the votes added to the coalition due to the deletion of opposition parties. Thus, the total number of votes for the coalition is the original number of votes the coalition received before the control action, plus $N'_C$.


    \paragraph{(2) Recursive Formula for the Dynamic Programming .} 
    For each $q' = 0,\ldots,q+1$ and for $k' = 0,\ldots, k$, we calculate the maximum number of votes that can be added to the coalition by deleting at most $k'$ parties from the intervals $\mathcal{I}_{q'}:= \{I_0,\ldots, I_{q'}\}$, denoted by $N_C(\mathcal{I}_{q'},k')$.
    \begin{itemize}
        \item Base Cases:  $N_C(\mathcal{I}_{0},k') = N'_C(I^O_{0}, k')$.

        \item Recursive Step, for $q' \geq 1$:
        \begin{align*}
            N_C(\mathcal{I}_{q'},k'):= \displaystyle \max_{0\leq i \leq k'}(&N'_C(I_{q'}, i) \\
            &+ N_C(\mathcal{I}_{q'-1},k' -i) )
        \end{align*}
    \end{itemize}

    \paragraph{(3) Objective Check.} 
    We check whether the coalition objective is satisfied when the number of added votes to the coalition is $N_C(\mathcal{I}_{q+1},k)$ -- that is, whether:
        \begin{align*}
            \frac{N(P, C) + N_C(\mathcal{I}_{q+1},k)}{n} \geq \varphi
        \end{align*}
    If so, then the algorithm asserts "yes"; otherwise it asserts "no".
    
\end{proof}

\begin{reminder} {\ref{thm:dop-ssp-CFP-con}}
    For SSP preferences, control for coalition and favored party by deleting opposition parties is polynomial-time solvable  for contiguous coalition (\withoutTwithLdop).
\end{reminder}
\begin{proof}
    There are at most two opposition-intervals, denoted by $I^O_0$ and $I^O_1$.
    We define the set $A$ of acceptable vote combinations, $(n_c, n_f) \in A$, representing pairs of additional votes shifted from the opposition to the coalition ($n_c$) or the favored party ($n_f$) that meet the objectives.
    That is:
    \begin{align*}
    A^+ := \left\{ (n_c, n_f) \in \{1, \ldots, n\}^2\ \middle\vert \begin{array}{l}
        \frac{(N(P, C) + n_f + n_c)}{n} \geq \varphi\\
        \frac{N(P, p_1) + n_f}{(N(P, C \setminus\{p_1\}) + n_c)} \geq \rho 
      \end{array}\right\}
    \end{align*}
    The goal is to find out whether such votes combination can be exactly obtained by deleting at most $k$ parties.

    First, by \Cref{lem:sperate-opposition-intervals}, deleting parties from one opposition-interval does not affect the impact of deleting an opposition party from another opposition-interval. Second, by \Cref{lemma:most-adjacent-only}, given that we want to delete $k'$  parties from one opposition-interval, it suffices to consider how to distribute these $k'$ deletions between the two sides of that opposition-interval, since deleting other opposition party has no effect.
    Note that we will also check the option of deleting fewer than $k$ opposition parties, as we may not want a coalition that is too large. 
       
    Hence, we consider all possible ways to split the $k$ deleted parties between $I^O_0$ and $I^O_1$ (there are $k+1$ options). 
    For each $t\in\{0,1\}$, for each $0\leq k' \leq k$ we compute all possible pairs of addition votes to the favored party, and addition votes to the other parties in the coalition, let $N_{F, C}(I_{t},k')$ be the list of all such pairs (there are $k'+1$ such pairs).

    If there exists a couple of pairs, $(n^0_c, n^0_f) \in N_{F, C}(\mathcal{I}_{0},k_0)$, $(n^1_c, n^1_f) \in N_{F, C}(\mathcal{I}_{1},k_1)$ such that:
      
    (1) $k_0 + k_1 \leq k$ and (2) $(n^0_c + n^1_c, n^0_f + n^1_f) \in A$
    
    There are only a polynomial number of couples of pairs to check if they satisfy both objectives, ensuring that the total runtime of the algorithm remains polynomial.

\end{proof}

\begin{reminder}{\ref{thm:CCFP-DOP-pseudo-polynomial}}
    For SSP preferences, control for coalition and favored party by deleting opposition parties admits a pseudo-polynomial algorithm for the compact representation.
\end{reminder}
\begin{proof}
    The algorithm is conceptually similar to the one presented in \Cref{thm:dop-ssp-C} but with a key difference: instead of computing the maximum number of votes attainable by the coalition, it evaluates whether specific vote combinations meet the objectives.

   As in the proof of \Cref{thm:dop-ssp-CFP-con}, we define the set $A$ of acceptable vote combinations, $(n_c, n_f) \in A$, representing pairs of additional votes shifted from the opposition to the coalition ($n_c$) or the favored party ($n_f$) that meet the objectives.
    That is:
    \begin{align*}
    A^+ := \left\{ (n_c, n_f) \in \{1, \ldots, n\}^2\ \middle\vert \begin{array}{l}
        \frac{(N(P, C) + n_f + n_c)}{n} \geq \varphi\\
        \frac{N(P, p_1) + n_f}{(N(P, C \setminus\{p_1\}) + n_c)} \geq \rho 
      \end{array}\right\}
    \end{align*}
    The goal is to find out whether such votes combination can be exactly obtained by deleting at most $k$ parties. 

    As before, the algorithm is based on two key observations. First, by \Cref{lem:sperate-opposition-intervals}, deleting parties from one opposition-interval does not affect the impact of deleting an opposition party from another opposition-interval. Second, by \Cref{lemma:most-adjacent-only}, given that we want to delete $k'$  parties from one opposition-interval, it suffices to consider how to distribute these $k'$ deletions between the two sides of that opposition-interval, since deleting other opposition party has no effect.
    Note that we will also check the option of deleting fewer than $k$ opposition parties, as we may not want a coalition that is too large.

    \paragraph{Algorithm Description.} 
     \paragraph{(1) Per-Interval Calculation.} 
    For each $q' = 0,\ldots,q+1$, ~~$i = 0,1,\ldots, k$,~~ $n'_c = 0, \ldots, n$,  and $n'_f = 0, \ldots, n$; we check whether it is possible to achieve exactly $n_c$ votes for the coalition and $n_f$ votes for the favored party, by deleting at most $i$ parties from the interval $I^O_{q'}$.
    To compute this, we use the same technique of checking all options for distributing these $i$ deletions between the two sides of the opposition-interval, as proved in \Cref{lemma:most-adjacent-only}.
    The result, a "Yes" or "No" answer, is denoted by $W'(I_{q'},i,n'_c, n'_f)$.
  
   \paragraph{(2) Dynamic Programming.}
    Let $W(\mathcal{I}_{q'},k',n_c, n_f)$ be the answer to the question whether it is possible to achieve exactly $n_c$ votes for the coalition and $n_f$ votes for the favored party by deleting at most $k'$ parties from the intervals $\mathcal{I}_{q'}:= \{I_1,\ldots, I_{q'}\}$, as follows.

    \begin{itemize}
        \item Base Case: $W(\mathcal{I}_{1},k',n_c, n_f) =  W'(\mathcal{I}_{1},k',n_c, n_f)$.
         
        \item Recursive Formula for the Dynamic Programming: 
        $W(\mathcal{I}_{q'},k',n_c, n_f) = "yes"$ if and only if exist $k'',n''_c, n''_f$
          such that $W'(\mathcal{I}_{q'},k'',n'_c, n'_f) = "yes"$ and $W(\mathcal{I}_{q'-1},k'-k'',n_c-n'_c, n_f-n'_f) = "yes"$.
       
    \end{itemize}
    \paragraph{(3) Objective Check.} We check, for $k'=0,\ldots k$ and any acceptable vote combination $(n'_c, n'_f) \in A$, whether the answer in $W(\mathcal{I}_{q},k',n'_c, n'_f)$ is "yes".
    If so, then the algorithm asserts "yes"; otherwise it asserts "no".

    Notice that the complexity is polynomial in the extensive representation, but only pseudo polynomial in the compact representation (where the input size does not depend on $n$).

\end{proof}

\section{Missing Immune Proofs}\label{apx:immune}
\subsection{Proof of Theorem \ref{thm:-ccdcp-immune}}

\begin{reminder}{\ref{thm:-ccdcp-immune}}
     Control for the coalition by deleting coalition parties (\withoutTwithoutLdcp) is not possible (the problem is immune).
\end{reminder} 

\begin{proof}
    Each voter gives exactly one point to the party ranked first.  
    Since all deleted parties are from the coalition, we distinguish between two cases:
    \begin{enumerate}
        \item  For a voter whose first-ranked party is not from the coalition, deleting a coalition party, does not change the outcome, and the total points received by opposition parties remain the same.
        \item For a voter whose first-ranked party is from the coalition, deleting a coalition party that is in the first position may reduce the points received by the coalition, if the next party is from the opposition.  
    \end{enumerate}
    
    This problem concerns only the percentage of points received by the coalition as a whole, rather than by any specific favored party within it.
    As a result, deleting opposition parties can only harm the coalition's overall standing.
\end{proof}

\section{Missing Hardness Proofs} \label{appendix:sec:hard}
For the hardness proofs, we need the following problem definitions: 
\begin{definition}[Dominating Set Problem]
    \textbf{Given:} A graph $G = (W,E)$, and a parameter $k$.
    \textbf{Question:} Does there exists a dominating set $W'\subseteq W$ of size at most $k$. That is, $\forall w\in W: w\in W' \lor \exists w'\in W'\ s.t\ (w,w')\in E$.
\end{definition}
The Dominating Set Problem is w[2]-hard \cite{downey1999structure}.

\begin{definition}[k-Subset Sum Problem]
    \textbf{Given:} A set $A$ of positive integers and a target integer $\tau$, parameter $k$.
    
    \textbf{Question:} Does there exist a subset $A' \subseteq A$, such that $|A'| \leq k$, and $\sum_{a\in A'}a = \tau$.
\end{definition}
The k-subset Sum Problem is an NP-hard problem, even when the given set $A$ contains only positive integers. 

\begin{definition}[k-clique problem]
    \textbf{Given:} A graph $G = (W,E)$, and a parameter $k$.
    \textbf{Question:} Does there exists a clique of size $k$ in the graph. That is a set, $W'$, of $k$ vertices such that $\forall w,w'\in W' (w,w')\in E$.
\end{definition}
The k-clique problem is a w[1]-hard problem \cite{Fixed1995Downey}.

\subsection{Proof of Theorem~\ref{thm:acp-general-C}}\label{apx:acp-general-C}
The proof is by a reduction from the dominating set problem. 
\begin{reminder}{\ref{thm:acp-general-C}}
    For general preferences, control for the coalition by adding coalition parties (\withoutTwithoutLacp) is W[2]-hard.
\end{reminder}

\begin{proof}
    Given a graph $G=(W, E)$, where $|W| = n$, $|E| = m$ and a parameter $k$ of the dominating set instance.  
    Construct an instance of the \withoutTwithoutLacp\ problem with parameter $k$ such that a successful control action exists if and only if there exists a dominating set.

    \textbf{Construction.}
    Let $P= \{p_1, p_2\}$ be the set of parties.  
    Define the set of spoiler parties as $S=\{s_1,\ldots, s_n\}$, contains a spoiler party for each vertex in $G$, that is, each spoiler party $s_i$ corresponds to the vertex $w_i\in W$ of $G$.
    Let the coalition be $C=\{p_1,s_1,\ldots, s_n\}$.
    Let the parameter of the number of added parties be $k$.
    Let the target-fraction $\varphi = \frac{1}{2}$.

    \textbf{Voter Preference Order:}
    Let $V = \{v_1, \ldots, v_n, v_{n+1}, \ldots, v_{2n}\}$ be the set of voters, each voter is corresponding to one vertex. 
    Let $N[s_i]$ be the set of parties that contain the party $s_i$ and all $s_j\in S$ such that $(w_i,w_j)\in D$.
    For $1\leq i \leq n$, the preference order of $v_i$ is defined to be $N[s_i] \succ p_2 \succ p_1 \succ S\setminus N[s_i]$. 
    For $n+1\leq i \leq 2n$ , the preference order of $v_i$ is defined to be $p_2 \succ p_1 \succ S$

    \textbf{To Sum Up.}
    Assume there exists a dominating set of size at most $k$.
    In this case, the control action that adds the corresponding spoiler parties ensures that the coalition will achieve $50\%$ of the votes.
    This is because, for $1\leq i \leq n$ the first party in each preference order of voter $v_i$ is one of the spoiler parties that belong to the coalition and is part of the dominating set.
    For $n+1\leq i \leq 2n$, the first party in each preference order of voter $v_i$ is $p_2$, which is not part of the coalition, and it cannot be changed by adding spoiler parties.
    As a result, the coalition achieves $50\%$ of the votes.

    On the other hand, assume that there is successful control action, $S'\subseteq S$, that added at most $k$ spoiler parties and achieves $50\%$ of the votes for the coalition. 
    For $n+1\leq i \leq 2n$, the first party in each preference order of voter $v_i$ is $p_2$, which is not part of the coalition, and it cannot be changed by adding parties.
    Hence, it must be that for all $1\leq i \leq n$, the first party in each preference order of voter $v_i$ is part of the coalition. 
    That is, in each voter one of $N[s_i]$ must be part of $S'$, and $|S'|\leq k$, hence the set that contains the corresponding vertices is a dominating set of the given graph.

    To conclude, we proved that \withoutTwithoutLacp\ is w[2]-hard.
\end{proof}

\subsection{Proof of Theorem \ref{thm:acp-ssp-CFP-non-con}}
The proof is by a reduction from the k-subset sum problem.
\begin{reminder}{\ref{thm:acp-ssp-CFP-non-con}}
    For SSP preferences, control for the coalition and favored party by adding coalition parties (\withoutTwithLacp) is NP-hard under the compact representation.
\end{reminder}
\begin{proof}
Given an instance of the $k$-subset sum problem, let $A = \{a_1, a_2, \ldots, a_m\}$ be the set of integers, $\tau$ be the target integer, and $k$ be the parameter. We construct an instance of the \withoutTwithLacp\ problem such that there is a solution to the $k$-subset sum problem if and only if there exists a successful control action.

Let the set of parties be $P = \{p_1, b_1, b_2, \ldots, b_m\}$ and the set of spoiler parties be $S = \{s_1, s_2, \ldots, s_m\}$.
The coalition is $C = \{p_1, s_1, s_2, \ldots, s_m\}$, and the favored party is $p_1$.
That is, each integer $a_i$, has a corresponding opposition-party $b_i$, and a corresponding spoiler coalition party $s_i$.

Place the parties on the interval such that the closest party to each $s_i$ is $b_i$.

Let the set of voters $V$ include:
\begin{itemize}
    \item For each $a_i \in A$, voters $v_{i,1}, v_{i,2}, \ldots, v_{i,a_i}$ whose peak is at the spoiler party $s_i$, that is $h(v_{i,j}) = s_i$.
    \item $\tau$ voters $v_1, v_2, \ldots, v_{\tau}$ whose peak is at the favored party $p_1$, that is $h(v_i) = p_1$.
\end{itemize}

Define the parameter for the number of added parties as $k$.

Let the target fraction be $\varphi = \frac{2 \cdot \tau}{\tau + \sum_{a \in A} a}$ and the target ratio be $\rho = \frac{1}{2}$.

\paragraph{Summary.}
All voters of a spoiler party that does not run will move to the corresponding opposition party. Thus, before adding any spoiler party, the opposition receives $\sum_{a \in A} a$ votes, and the favored party receives $\tau$ votes.

Assume there is a subset $A' \subseteq A$ such that $|A'| \leq k$ and $\sum_{a \in A'} a = \tau$. Adding the corresponding spoiler parties results in exactly $\tau$ new votes for the coalition, satisfying both requirements.

On the other hand, assume there is a successful control action. This implies there is a set of at most $k$ spoiler parties whose addition meets both requirements.

Since the requirement of $\varphi$ holds, the coalition must receive at least $2 \cdot \tau$ votes from the total $\tau + \sum_{a \in A} a $ votes.

In addition, since the favored party receives only $\tau$ votes and cannot gain additional votes through control, the other coalition members can collectively achieve at most $\tau$ votes.

Therefore, the coalition receives exactly $\tau$ votes.

Adding any spoiler party $s_i$ contributes exactly $a_i$ votes to the coalition. Hence, the subset containing elements corresponding to the spoiler parties added in control satisfies the requirements and forms a solution to the $k$-subset sum problem.
\end{proof}

\subsection{Proof of Theorem~\ref{thm:aop-general-CFP}}
    
The proof is by reduction from the dominating set problem.
\begin{reminder}{\ref{thm:aop-general-CFP}}
    For general preferences, control for the coalition and favored party by adding opposition parties (\withoutTwithLaop) is W[2]-hard.
\end{reminder}
\begin{proof}
    Given a graph $G=(W, D)$, where $|W| = n$, $|E| = m$ and parameter $k$ of the dominating set instance. Construct an instance of the \withoutTwithLaop\ problem with parameter $k$ such that there is a successful control action if and only if exists a dominating set.
    Let $P= \{p_1,p_2\}$ be the set of parties.  
    Let the set of spoiler parties $S=\{s_1,\ldots, s_n\}$, contain a spoiler party for each vertex in $G$.
    Let the coalition be $C=\{p_1,p_2\}$, and the favored party be $p_1$.
    Let the parameter of the number of added parties be $k$.
    Let $\varphi = 50\%$, and let $\rho = 50\%$.

    \textbf{Voter Preference Order:}
    Let $V = \{v_1, \ldots, v_n, v_{n+1}, \ldots, v_{2n}\}$ be the set of voters.
    Let $N[s_i]$ be the set of parties that contain the party $s_i$ and all $s_j\in S$ such that $(w_i,w_j)\in D$.
    For $1\leq i \leq 0.5n$, the preference order of $v_i$ is defined to be $p_1 \succ p_2 \succ S$
    For $0.5n \leq i \leq n$, the preference order of $v_i$ is defined to be $p_2 \succ p_1 \succ S$
    For $n+1\leq i \leq 2n$, each voter corresponds to one vertex, the preference order of $v_i$ is defined to be $N[s_i] \succ p_2 \succ p_1 \succ S\setminus N[s_i]$. 
 
    Before any control action, the coalition achieves $100\%$ off the votes, but, $p_1$ achieves only $0.25\%$ of the coalition. 
    \textbf{To Sum Up.}
    Assume there exists a dominating set of size at most $k$.
    In this case, the control action that adds the corresponding spoiler parties ensures that the coalition will achieve only $50\%$ of the votes.
    This is because, for $1\leq i \leq n$ the first party in each preference order of voter $v_i$ is not one of the spoiler parties and it cannot be changed by control by adding parties.
    After the control action, for $n+1\leq i \leq 2n$, the first party in each preference order of voter $v_i$ is a spoiler party that is not part of the coalition. 
    As a result, the coalition achieves $50\%$ of the votes, and the percentage of $p_1$ among the coalition becomes $0.5\%$ as demanded.

    On the other hand, assume that there is a successful control action, $S'\subseteq S$, that added at most $k$ spoiler parties and achieves a successful control action.
    It is not possible to achieve more votes to $p_1$, so the only way to meet the requirement of $\rho$ is by reducing the size of the coalition, without reducing the number of votes to $p_1$.
    For $1\leq i \leq n$, the first party in each preference order of voter $v_i$ cannot be changed by control by adding parties.

    Hence, it must be that for $n+1\leq i \leq 2n$ the first party in each preference order of voter $v_i$ is not part of the coalition after the control action. 
    That is, in each voter one of $N[s_i]$ must be part of $S'$, and $|S'|\leq k$, hence the set that contains the corresponding vertices is a dominating set of the given graph.
\end{proof}

\subsection{Proof of Theorem~\ref{thm:aop-ssp-CFP-non-con}}
The proof is by a reduction from the k-subset sum problem.

\begin{reminder}{\ref{thm:aop-ssp-CFP-non-con}}
   For SSP preferences, control for the coalition and favored party by adding opposition parties (\withoutTwithLaop) is NP-hard under the compact representation.
\end{reminder}
\begin{proof}
    Given an instance of the $k$-subset sum problem, let $A = \{a_1, a_2, \ldots, a_m\}$ be the set of integers, $\tau$ be the target integer, and $k$ be the parameter. We construct an instance of the \withoutTwithLaop\ problem such that there is a solution to the $k$-subset sum problem if and only if there exists a successful control action.

    Let the set of parties be $P = \{p_1, b_1, b_2, \ldots, b_m\}$ and the set of spoiler parties be $S = \{s_1, s_2, \ldots, s_m\}$.
    The coalition is $C = \{p_1, b_1, b_2, \ldots, b_m\}$, and the favored party $p_1$.

    That is, each integer $a_i$, has a corresponding opposition-spoiler-party $s_i$, and a corresponding coalition party $b_i$.

    Place the parties on the interval such that the closest party to each $s_i$ is $b_i$.
    Let $\bar{\tau} = \sum_{a \in A'} a - \tau$
    Let the set of voters $V$ include:
    \begin{itemize}
        \item For each $a_i \in A$, voters $v_{i,1}, v_{i,2}, \ldots, v_{i,a_i}$ whose peak is at the spoiler party $s_i$, that is $h(v_{i,j}) = s_i$.
        \item $\bar{\tau}$ voters $v_1, v_2, \ldots, v_{\bar{\tau}}$ whose peak is at the favored party $p_1$, that is $h(v_i) = p_1$.
    \end{itemize}

    Define the parameter for the number of added parties as $k$. 

    Let the target fraction be $\varphi = \frac{2 \cdot \bar{\tau}}{\bar{\tau} + \sum_{a \in A} a}$ and the target ratio be $\rho = \frac{1}{2}$.

    \paragraph{Summary.}
    All voters of a spoiler party that does not run will move to the corresponding coalition party. Thus, before adding any spoiler party, the coalition without the favored party receives $\sum_{a \in A} a$ votes, and the favored party receives $\bar{\tau}$ votes.

    Assume there is a subset $A' \subseteq A$ such that $|A'| \leq k$ and $\sum_{a \in A'} a = \tau$. Adding the corresponding spoiler parties results in exactly $\tau$ votes moving from the coalition to the opposition, satisfying both requirements, since $p_1$ receives $\bar{\tau}$ votes and the coalition without the favored party receives also $\bar{\tau}$ votes.

    On the other hand, assume that there is a successful control action. This implies that there is a set of at most $k$ spoiler parties whose addition meets both requirements.

    (1) Since the requirement of $\varphi$ holds, the coalition must receive at least $2 \cdot \bar{\tau}$ votes from the total $\bar{\tau} + \sum_{a \in A} a$ votes.

    (2) Since the favored party receives only $\bar{\tau}$ votes and cannot gain additional votes through control, the other coalition members can collectively achieve at most $\bar{\tau}$ votes.

    Therefore, the coalition receives exactly $\bar{\tau}$ votes.

    Adding any spoiler party $s_i$ removes exactly $a_i$ votes from the coalition. Hence, the subset containing elements corresponding to the spoiler parties added in control satisfies the requirements and forms a solution to the $k$-subset sum problem.
\end{proof}

\subsection{Proof of Theorem \ref{thm:dcp-general-CFP}}
The proof is by a reduction from the k-clique problem.
\begin{reminder}{\ref{thm:dcp-general-CFP}}
    For general preferences, control for the coalition and favored party by deleting coalition parties (\withoutTwithLdcp) is W[1]-hard.
\end{reminder}
\begin{proof}
    Given an instance of the k-clique problem. Let the graph be $G=(W, D)$, where $|W| = n$, $|E| = m$, and parameter $k$, where the parameter is the size of the clique. 
    Construct an instance of the \withoutTwithLdcp\ problem with parameter $k$ such that there is a successful control action if and only if a k-clique exists.

    Let $P=\{p_1, p_2, s_1,\ldots, s_n\}$, each $s_i$ is corresponding to a vertex in $G$.
    Let the coalition to be $C=\{p_1, s_1,\ldots, s_n\}$, and the preferred party $p_1$. 
    Let the parameter of the number of deleting parties be $k$.
    Let the target fraction $\varphi = \frac{1}{2}$, and the target ratio $\rho = \frac{m+k^2}{2m}$.

    \textbf{Voter Preference Order:}
    Let $V = \{v_{1,1},\ldots v_{1,m}, v_{2,1},\ldots v_{2,m}, v_{3,1},\ldots v_{3,m}, v_{4,1},\ldots v_{4,m}\}$  be the set of voters, $|V| = 4m$.
    Each voter $v_{1, i}$ is corresponding to one edge in $G$. 
    The preference order of $v_{1,i}$, that corresponding to the edge $(w_x,w_y)$ is defined to be $s_x \succ s_y \succ p_1 \succ p_2 \succ P\setminus \{s_x,s_y,p_1, p_2\}$.
    The preference order of $v_{2,i}$ is defined to be $p_1 \succ p_2 \succ P\setminus \{p_1,p_2\}$.
    The preference order of $v_{3,i}$ and $v_{4,i}$ is defined to be $p_2 \succ P\setminus \{p_2\}$.
    
    \textbf{To Sum Up.}
    
    Assume there exists a k-clique in the graph, then delete the corresponding $k$ parties.
    Each edge between these vertices is in the graph, hence there are exactly $k^2$ such edges. Thus, in the preference order of all voters that correspond to these edges, after the deletion, the first party will be $p_1$, so $p_1$ will get $m+k^2$ votes.
    In the other $m-k^2$ voters $v_{1,i}$ the first party is not $p_1$ but it is  part of the coalition
    In addition, $p_2$, which is not part of the coalition, gets  $2m$ votes, and it cannot be changed by deletion of parties from the coalition.
    Hence, the coalition achieve $\varphi = \frac{1}{2}$, and $p_1$ achieve $\frac{m+k^2}{2m}$ of the coalition's votes as desired.

    On the other hand, suppose a successful control action exists.  
    In the original election, $p_2$, which is not part of the coalition, gets  $2m$ votes, and it cannot be changed by deletion of parties from the coalition.
    The parties $s_1,\ldots, s_n$, are part of the coalition.
    In order that $p_1$ will achieve the target ratio, $\rho = \frac{1}{2}$, $p_1$, must get at least $m+k^2$ votes, that is, achieve at least $k^2$ new votes by the control action. 

    If a deletion results in a vote for $p_1$, it implies that both vertices corresponding to the endpoints of the associated edge are among the deleted parties.  
    This means the deletion of $k$ parties successfully includes the vertices representing both endpoints of the edges for $k^2$ edges.  

    Consequently, all edges between these vertices must exist in the graph $G$, indicating that these vertices form a clique. 
    
\end{proof}

\subsection{Proof of Theorem \ref{thm:dcp-ssp-CFP-non-con}}
The proof is by a reduction from the $k$-subset sum problem.
\begin{reminder}{\ref{thm:dcp-ssp-CFP-non-con}}
   For SSP preferences, control for the coalition and favored party by deleting coalition parties (\withoutTwithLdcp) is NP-hard  under the compact representation.
\end{reminder}
\begin{proof}
    Given an instance of the $k$-subset sum problem, let $A = \{a_1, a_2, \ldots, a_m\}$ be the set of integers, $\tau$ be the target integer, and $k$ be the parameter. We construct an instance of the \withoutTwithLdcp\ problem such that there is a solution to the $k$-subset sum problem if and only if there exists a successful control action.

    Let the set of parties be $P = \{p_1, b_1, b_2, \ldots, b_m, s_1, s_2, \ldots, s_m\}$.
    The coalition is $C = \{p_1, b_1, b_2, \ldots, b_m\}$, and the the favored party is $p_1$.
    That is, each integer $a_i$, has a corresponding opposition-spoiler-party $s_i$, and a corresponding coalition party $b_i$.

    Place the parties on the interval such that the closest party to each $s_i$ is $b_i$.
     Let $\bar{\tau} = \sum_{a \in A'} a - \tau$.
      Let the set of voters $V$ include:
    \begin{itemize}
        \item For each $a_i \in A$, voters $v_{i,1}, v_{i,2}, \ldots, v_{i,a_i}$ whose peak is at the coalition party $b_i$, that is $h(v_{i,j}) = b_i$.
        \item $\bar{\tau}$ voters $v_1, v_2, \ldots, v_{\bar{\tau}}$ whose peak is at the favored party $p_1$, that is $h(v_i) = p_1$.
    \end{itemize}
    Define the parameter for the number of deleted parties as $k$. 
    Let the target fraction be $\varphi = \frac{2 \cdot \bar{\tau}}{\bar{\tau} + \sum_{a \in A} a}$ and the target ratio be $\rho = \frac{1}{2}$.

    \paragraph{Summary.}
    All voters of a deleted coalition party will move to the corresponding opposition party. Thus, before deleting any coalition party, the coalition without the favored party receives $\sum_{a \in A} a$ votes, and the favored party receives $\bar{\tau}$ votes.

    Assume there is a subset $A' \subseteq A$ such that $|A'| \leq k$ and $\sum_{a \in A'} a = \tau$. Deleting the corresponding coalition parties results in exactly $\tau$ votes moving from the coalition to the opposition, satisfying both requirements, since $p_1$ receives $\bar{\tau}$ votes and the coalition without the favored party receives also $\bar{\tau}$ votes.

    On the other hand, assume there is a successful control action. This implies there is a set of at most $k$ coalition parties whose deletion meets both requirements.

    (1) Since the requirement of $\varphi$ holds, the coalition must receive at least $2 \cdot \bar{\tau}$ votes from the total $\bar{\tau} + \sum_{a \in A} a$ votes.

    (2) Since the favored party receives only $\bar{\tau}$ votes and cannot gain additional votes through control, the other coalition members can collectively achieve at most $\bar{\tau}$ votes.

    Therefore, the coalition without the favored party receives exactly $\bar{\tau}$ votes.

    Deleting any coalition party $b_i$ removes exactly $a_i$ votes from the coalition. Hence, the subset containing elements corresponding to the deleted coalition parties in the successful control action satisfies the requirements and forms a solution to the $k$-subset sum problem.

\end{proof}

\subsection{Proof of Theorem \ref{thm:dop-general-C}}
The proof is by a reduction from the k-clique problem.
\begin{reminder}{\ref{thm:dop-general-C}}
    For general preferences, control for the coalition by deleting opposition parties (\withoutTwithoutLdop) is W[1]-hard.
\end{reminder}
\begin{proof}
    Given an instance of the k-clique problem. Let the graph be $G=(W, D)$, where $|W| = n$, $|E| = m$, and parameter $k$, where the parameter is the size of the clique. 
    Construct an instance of the \withoutTwithoutLdop\ problem with parameter $k$ such that there is a successful control action if and only if a k-clique exists. 

    Let $P=\{p_1,s_1,\ldots, s_n\}$, each $s_i$ is corresponding to a vertex in $G$.
    Let the coalition to be $C=\{p_1\}$. 
    Let the parameter of the number of deleting parties be $k$.
    Let $\varphi = \frac{k^2}{m}$ (note that, if $\varphi>1$, then there is no clique of size $k$ in the graph since there are no $k^2$ edges at all).

    \textbf{Voter Preference Order:}
    Let $V = \{v_1,\ldots v_m\}$  be the set of voters, each voter is corresponding to one edge in $G$. 
    The  preference order of $v_i$, that corresponding to the edge $(w_x,w_y)$ is defined to be $s_x \succ s_y \succ p_1 \succ P\setminus \{s_x,s_y,p_1\}$.

    \textbf{To Sum Up.}
    Assume there exists a k-clique in the graph, then delete the corresponding $k$ parties.
    Each edge between these vertices is in the graph, hence there are exactly $k^2$ such edges. Thus, in the preference order of all voters that correspond to these edges, after the deletion, the first party will be $p_1$, and the coalition achieve $\frac{k^2}{m}$ as desired.

    On the other hand, suppose a successful control action exists.  
    In the original election, $p_1$ does not receive any votes, and there are $m$ voters participating.  
    Thus, the control action must secure at least $k^2$ votes by deleting $k$ parties.  

    If a deletion results in a vote for $p_1$, it implies that both vertices corresponding to the endpoints of the associated edge are among the deleted parties.  
    This means the deletion of $k$ parties successfully includes the vertices representing both endpoints of the edges for $k^2$ edges.  

    Consequently, all edges between these vertices must exist in the graph $G$, indicating that these vertices form a clique. 
\end{proof}

\subsection{Proof of Theorem \ref{thm:dop-ssp-CFP-non-con}}
The proof is by a reduction from the k-subset sum problem.
\begin{reminder}{\ref{thm:dop-ssp-CFP-non-con}}
   For SSP preferences, control for the coalition and favored party by deleting opposition parties (\withoutTwithLdop) is NP-hard under the compact representation.
\end{reminder}
\begin{proof}
    Given an instance of the $k$-subset sum problem, let $A = \{a_1, a_2, \ldots, a_m\}$ be the set of integers, $\tau$ be the target integer, and $k$ be the parameter. We construct an instance of the \withoutTwithLdop\ problem such that there is a solution to the $k$-subset sum problem if and only if there exists a successful control action.

    Let the set of parties be $P = \{p_1, b_1, b_2, \ldots, b_m, s_1, s_2, \ldots, s_m\}$.
    The coalition is $C = \{p_1, s_1, s_2, \ldots, s_m\}$ and the favored party is $p_1$.
    
    That is, each integer $a_i$, has a corresponding opposition party $b_i$, and a corresponding coalition coalition party $s_i$.

    Place the parties on the interval such that the closest party to each $s_i$ is $b_i$.

    Let the set of voters $V$ include:
    \begin{itemize}
        \item For each $a_i \in A$, voters $v_{i,1}, v_{i,2}, \ldots, v_{i,a_i}$ whose peak is at the opposition party $b_i$, that is $h(v_{i,j}) = b_i$.
        \item $\tau$ voters $v_1, v_2, \ldots, v_{\tau}$ whose peak is at the favored party $p_1$, that is $h(v_i) = p_1$.
    \end{itemize}

    Define the parameter for the number of added parties as $k$.
    Let the target fraction be $\varphi = \frac{2 \cdot \tau}{\tau + \sum_{a \in A} a}$ and the target ratio be $\rho = \frac{1}{2}$.

    \paragraph{Summary.}
    All voters of a deleted opposition party move to the corresponding coalition party. Thus, before deleting any opposition party, the opposition receives $\sum_{a \in A} a$ votes, and the favored party receives $\tau$ votes.

    Assume there is a subset $A' \subseteq A$ such that $|A'| \leq k$ and $\sum_{a \in A'} a = \tau$. deleting the corresponding opposition parties results in exactly $\tau$ new votes for the coalition, satisfying both requirements.

    On the other hand, assume there is a successful control action. This implies there is a set of at most $k$ opposition parties whose deletion meets both requirements.

    (1) Since the requirement of $\varphi$ holds, the coalition must receive at least $2 \cdot \tau$ votes from the total $\tau + \sum_{a \in A} a $ votes.

    (2)Since the favored party receives only $\tau$ votes and cannot gain additional votes through control, the other coalition members can collectively achieve at most $\tau$ votes.

    Therefore, the coalition receives exactly $\tau$ votes.

    Deleting any opposition party $b_i$ contributes exactly $a_i$ votes to the coalition. Hence, the subset containing elements corresponding to the deleted opposition parties in control satisfies the requirements and forms a solution to the $k$-subset sum problem.
\end{proof}

\end{document}